\numberwithin{equation}{section}
\numberwithin{figure}{section}
\tikzset{big arrow/.style={
    decoration={markings,mark=at position 1 with {\arrow[scale=2,#1]{>}}},
    postaction={decorate},
    shorten >=0.4pt},
  big arrow/.default=black,
  ->-/.style={decoration={markings, mark=at position #1 with {\arrow[scale=2.5]{>}}},postaction={decorate}}}
\def\oversortoftilde#1{\mathop{\vbox{\m@th\ialign{##\crcr\noalign{\kern3\p@}%
				\sortoftildefill\crcr\noalign{\kern3\p@\nointerlineskip}%
				$\hfil\displaystyle{#1}\hfil$\crcr}}}\limits}
\def\sortoftildefill{$\m@th \setbox\z@\hbox{$-$}%
	\braceld\leaders\vrule \@height\ht\z@ \@depth\z@\hfill\braceru$}
\theoremstyle{plain}
\newtheorem*{thm*}{Theorem}
\newtheorem{thm}{Theorem}[section]
\theoremstyle{definition}
\newtheorem{defn}[thm]{Definition}
\newtheorem*{defn*}{Definition}
\newtheorem{rem}[thm]{Remark}
\DeclareMathOperator{\slim}{s-lim}
\DeclareMathOperator{\wlim}{w-lim}
\newcommand{\calo}{\mathcal{O}}
\newcommand{\cala}{\mathcal{A}}
\newcommand{\calp}{\mathcal{P}}
\newcommand{\cals}{\mathcal{S}}
\newcommand{\calh}{\mathcal{H}}
\newcommand{\calk}{\mathcal{K}}
\newcommand{\calb}{\mathcal{B}}
\newcommand{\trace}{\text{Tr }}
\begin{document}

\begin{titlepage}
\vspace*{-3cm} 
\begin{flushright}
{\tt CALT-TH-2019-042}\\
\end{flushright}
\begin{center}
\vspace{2.5cm}
{\LARGE\bfseries Entanglement Wedge Reconstruction of Infinite-dimensional von Neumann Algebras using Tensor Networks \\  }
\vspace{2cm}
{\large
Monica Jinwoo Kang$^{1}$ and David K. Kolchmeyer$^2$\\}
\vspace{.6cm}
{ $^1$ Walter Burke Institute for Theoretical Physics, California Institute of Technology}\par\vspace{-.3cm}
{  Pasadena, CA 91125, U.S.A.}\par
\vspace{.2cm}
{ $^2$ Department of Physics, Harvard University, Cambridge, MA, USA}
\vspace{.6cm}

\scalebox{.95}{\tt  monica@caltech.edu, dkolchmeyer@g.harvard.edu}\par
\vspace{2cm}
{\bf{Abstract}}\\
\end{center}
{ Quantum error correcting codes with finite-dimensional Hilbert spaces have yielded new insights on bulk reconstruction in AdS/CFT. In this paper, we give an explicit construction of a quantum error correcting code where the code and physical Hilbert spaces are infinite-dimensional. We define a von Neumann algebra of type II$_1$ acting on the code Hilbert space and show how it is mapped to a von Neumann algebra of type II$_1$ acting on the physical Hilbert space. This toy model demonstrates the equivalence of entanglement wedge reconstruction and the exact equality of bulk and boundary relative entropies in infinite-dimensional Hilbert spaces. }
\\
\vfill 
\end{titlepage}

\tableofcontents
\newpage

\section{Introduction}

The study of entanglement entropy has utilized results in the mathematical field of operator algebras \cite{Faulkner,LongoXu:1,LongoXu:2}. In quantum field theory, von Neumann algebras are associated with causally complete subregions of spacetime \cite{Haag}. Since AdS/CFT implies that information in the bulk is encoded redundantly in the boundary, quantum error correction is a natural framework in which to elucidate the connection between holographic quantum field theories and their gravity duals \cite{Harlow:2018fse,Almheiri:2014lwa,DongHarlowWall,HaydenQi}. Quantum error correction with finite-dimensional Hilbert spaces has been used to argue that entanglement wedge reconstruction is identical to the Ryu--Takayanagi formula and the equivalence of bulk and boundary relative entropies \cite{Harlow:2016fse,DongHarlowWall}. In order to study a more realistic toy model where boundary subregions are characterized by infinite-dimensional von Neumann algebras, we should consider quantum error correcting codes defined on infinite-dimensional Hilbert spaces.

The purpose of this paper is to construct a Quantum Error Correcting Code (QECC) where the physical Hilbert space and the code subspace are infinite-dimensional and admit the action of infinite-dimensional von Neumann algebras. We describe a toy model that allows us to see how a von Neumann algebra on the code subspace is reconstructed on the physical Hilbert space. Von Neumann algebras acting on finite-dimensional Hilbert spaces must be of type I. Our toy model contains an example of an infinite-dimensional von Neumann algebra, namely a type II$_1$ factor, which is defined and explained in Section \ref{sec:vNalgtypes}.\footnote{There are other types of von Neumann algebras that may act on infinite-dimensional Hilbert spaces. The local operator algebras that arise in quantum field theory are generically of type III$_1$ \cite{gabbani, Witten:2018zxz}.}

Furthermore, we show that in the context of operator-algebra quantum error correction, this QECC satisfies the following two statements:\vspace{-3pt}
\begin{itemize}
\item Entanglement wedge reconstruction \cite{HeadrickHubeny2014,Wall2012,Czech2012}, \vspace{-5pt}
\item Relative entropy equals bulk relative entropy (JLMS formula \cite{Jafferis:2015del}).
\end{itemize}\vspace{-3pt}
In particular, we first show that our QECC satisfies entanglement wedge reconstruction for a particular choice of von Neumann algebras acting on the code and physical Hilbert spaces, and then we invoke Theorem 1.1 in \cite{HolographicEntropy} to argue that our QECC also satisfies the JLMS formula. We finally show that the relative entropies defined with respect to the infinite-dimensional von Neumann algebras we consider can be expressed as limits of the relative entropies defined with respect to finite-dimensional subalgebras. Thus, another way to see that our QECC satisfies the JLMS formula is to note that our QECC satisfies the JLMS formula with respect to finite-dimensional von Neumann algebras. The JLMS formula for finite-dimensional algebras is studied in \cite{Harlow:2016fse}.

The technical assumptions that connect entanglement wedge reconstruction and the JLMS formula are presented in Theorem 1.1 of \cite{HolographicEntropy}, which we repeat below.

\begin{thm}[Kang-Kolchmeyer \cite{HolographicEntropy}]
\label{thm:maintheorem}
Let $u : \calh_{code}\rightarrow \calh_{phys}$ be an isometry\footnote{This means that $u$ is a norm-preserving map. $u$ need not be a bijection. In general, $u^\dagger u$ is the identity on $\calh_{code}$ and $uu^\dagger$ is a projection on $\calh_{phys}$. } between two Hilbert spaces. Let $M_{code}$ and $M_{phys}$ be von Neumann algebras on $\calh_{code}$ and $\calh_{phys}$ respectively. Let $M^\prime_{code}$ and $M^\prime_{phys}$ respectively be the commutants of $M_{code}$ and $M_{phys}$. Suppose that the set of cyclic and separating vectors with respect to $M_{code}$ is dense in $\calh_{code}$. Also suppose that if $\ket{\Psi} \in \calh_{code}$ is cyclic and separating with respect to $M_{code}$, then $u \ket{\Psi}$ is cyclic and separating with respect to $M_{phys}$. Then the following two statements are equivalent:
	\begin{description}
		\item[ 1. Bulk reconstruction]
		\begin{align} \nonumber
		\begin{split}
		\forall \calo \in M_{code}\ \forall \calo^\prime \in M_{code}^\prime, \quad 
		\exists\tilde{\calo} \in M_{phys}\ \exists \tilde{\calo}^\prime \in M_{phys}^\prime\quad \text{such that}\quad\\
		\forall \ket{\Theta} \in \calh_{code} \quad 
		\begin{cases}
		u \calo \ket{\Theta} =  \tilde{\calo} u \ket{\Theta}, \quad
		&u \calo^\prime \ket{\Theta} =  \tilde{\calo}^\prime u \ket{\Theta}, \\
		u \calo^\dagger \ket{\Theta} =  \tilde{\calo}^\dagger u \ket{\Theta}, \quad
		&u \calo^{\prime \dagger} \ket{\Theta} = \tilde{\calo}^{\prime\dagger} u\ket{\Theta}.
		\end{cases}\quad
		\end{split}
		\end{align}
		
		\item[ 2. Boundary relative entropy equals bulk relative entropy]
		\begin{align}\nonumber
		\begin{split}
		\text{For any $\ket{\Psi}$, $\ket{\Phi} \in \calh_{code}$ with $\ket{\Psi}$ cyclic }&\text{ and separating with respect to $M_{code}$,}\quad\quad\quad\\
		\cals_{\Psi|\Phi}(M_{code})=\cals_{u\Psi|u\Phi}(M_{phys}),\ & \text{and} \  \cals_{\Psi|\Phi}(M_{code}^\prime)= \cals_{u\Psi|u\Phi}(M_{phys}^\prime),\\
		\text{where $\cals_{\Psi|\Phi}(M)$ is the relative entropy.}\quad&
		\end{split}
		\end{align}
	\end{description}
\end{thm}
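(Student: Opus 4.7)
The plan is to prove this equivalence within the Tomita--Takesaki framework, using Araki's expression
\[
\cals_{\Psi|\Phi}(M)=-\langle\Psi|\log\Delta^{M}_{\Psi|\Phi}|\Psi\rangle
\]
for the relative entropy, where $\Delta^{M}_{\Psi|\Phi}$ comes from the polar decomposition $S^{M}_{\Psi|\Phi}=J^{M}_{\Psi|\Phi}(\Delta^{M}_{\Psi|\Phi})^{1/2}$ of the closed antilinear Tomita operator defined by $S^{M}_{\Psi|\Phi}\,a\ket{\Psi}=a^\dagger\ket{\Phi}$ for $a\in M$. The heart of the proof is then to establish an intertwining relation between the bulk and boundary relative Tomita operators under $u$, from which both the relative-entropy equality and the bulk reconstruction statement follow.

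\textbf{Forward direction $(1)\Rightarrow(2)$.} Fix $\ket{\Psi}$ cyclic and separating for $M_{code}$; by hypothesis $u\ket{\Psi}$ is then cyclic and separating for $M_{phys}$. For any $a\in M_{code}$, let $\tilde a\in M_{phys}$ be the reconstruction of $a$ and $\widetilde{a^\dagger}\in M_{phys}$ the reconstruction of $a^\dagger$. On the core $M_{code}\ket{\Psi}$ one computes
\[
S^{phys}_{u\Psi|u\Phi}\bigl(ua\ket{\Psi}\bigr)=S^{phys}_{u\Psi|u\Phi}\bigl(\tilde a\,u\ket{\Psi}\bigr)=\tilde a^\dagger u\ket{\Phi}=ua^\dagger\ket{\Phi}=u\,S^{code}_{\Psi|\Phi}\bigl(a\ket{\Psi}\bigr),
\]
so $S^{phys}_{u\Psi|u\Phi}\,u=u\,S^{code}_{\Psi|\Phi}$ on this core, and the identity extends to the full domain by closedness. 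Squaring via $(Su)^{*}(Su)=u^{*}(S^{*}S)u$ and using $u^\dagger u=1_{code}$ yields $u^\dagger\Delta^{phys}_{u\Psi|u\Phi}u=\Delta^{code}_{\Psi|\Phi}$, whence spectral calculus gives $\langle u\Psi|\log\Delta^{phys}|u\Psi\rangle=\langle\Psi|\log\Delta^{code}|\Psi\rangle$, proving equality of relative entropies. The commutant assertion is the identical argument with $\calo'\in M^\prime_{code}$ reconstructed by $\tilde\calo'\in M^\prime_{phys}$, valid because $u\Psi$ is simultaneously cyclic and separating for $M^\prime_{phys}$.

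\textbf{Reverse direction $(2)\Rightarrow(1)$.} The natural tool is the Petz/Accardi--Cecchini sufficiency theorem: saturation of monotonicity of relative entropy along a completely positive unital map forces the map to be sufficient, equivalently to intertwine Connes cocycles. Quantifying the assumed equality of $\cals_{\Psi|\Phi}$ as $\ket{\Phi}$ ranges over $\calh_{code}$ (and $\Psi$ over the dense set of cyclic separating vectors) one upgrades the numerical equality to the operator identity
\[
[D\omega_{u\Psi}^{phys}:D\omega_{u\Phi}^{phys}]_t\,u=u\,[D\omega_{\Psi}^{code}:D\omega_{\Phi}^{code}]_t .
\]
This intertwining of the boundary and bulk modular flows, combined with the density hypothesis, lets me produce for each $a\in M_{code}$ a candidate $\tilde a\in B(\calh_{phys})$ satisfying $\tilde a u=ua$ (and likewise for $a^\dagger$, $\calo'$, $\calo'^{\dagger}$); membership $\tilde a\in M_{phys}$ is then verified by checking that $\tilde a$ commutes with the reconstructions of all elements of $M^\prime_{code}$.

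\textbf{Main obstacle.} The reverse direction is where the real work lies. Converting an equality of real numbers (relative entropies) into the operator-valued reconstruction statement requires the full strength of the sufficiency theorem, which in the infinite-dimensional setting involves unbounded relative modular operators, so one must control the intertwining on precise domains rather than on all of Hilbert space. A second, pervasive technicality is that $u$ is merely an isometry and not a unitary, so $uu^\dagger$ is only a projection on $\calh_{phys}$; this forces one to distinguish operators defined on the image $u(\calh_{code})$ from operators on all of $\calh_{phys}$, and the cleanest verification that the candidate $\tilde a$ lies in $M_{phys}$ rather than merely in $B(\calh_{phys})$ appears to be the most delicate step.
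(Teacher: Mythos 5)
First, a point of orientation: this paper does not prove the theorem at all. It is quoted from the companion paper \cite{HolographicEntropy} and used as a black box, so there is no in-paper proof to compare against. Judged on its own terms, your proposal has the right architecture for the forward direction (intertwining the relative Tomita operators through $u$), but it contains one fixable gap there and a much larger one in the reverse direction. In the forward direction, the step ``squaring via $(Su)^{*}(Su)=u^{*}(S^{*}S)u$ \ldots\ whence spectral calculus gives'' the entropy equality does not work as written: from $S^{phys}_{u\Psi|u\Phi}\,u=u\,S^{code}_{\Psi|\Phi}$ you obtain only the \emph{compression} $u^{\dagger}\Delta^{phys}_{u\Psi|u\Phi}u=\Delta^{code}_{\Psi|\Phi}$, and a compression does not commute with functional calculus; $u^{\dagger}\log(\Delta^{phys})u\neq\log(u^{\dagger}\Delta^{phys}u)$ unless $u\calh_{code}$ is an invariant subspace for $\Delta^{phys}$. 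The repair uses an ingredient you already have but do not deploy here, and it is exactly why statement 1 demands reconstruction of $M_{code}^{\prime}$ as well: since $S_{\Psi|\Phi}^{\dagger}$ is the relative Tomita operator of the commutant, reconstructing $M_{code}^{\prime}$ into $M_{phys}^{\prime}$ gives the second intertwining $S^{phys\,\dagger}u=u\,S^{code\,\dagger}$, and the two together yield the genuine operator identity $\Delta^{phys}_{u\Psi|u\Phi}\,u=u\,\Delta^{code}_{\Psi|\Phi}$ on a core, after which bounded Borel functions of the two operators intertwine and $\braket{u\Psi|\log\Delta^{phys}_{u\Psi|u\Phi}|u\Psi}=\braket{\Psi|\log\Delta^{code}_{\Psi|\Phi}|\Psi}$ follows legitimately.

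The reverse direction, by your own admission the heart of the theorem, is a plan rather than a proof. The sentence ``quantifying the assumed equality \ldots\ one upgrades the numerical equality to the operator identity'' for the Connes cocycles is precisely the hard content of $(2)\Rightarrow(1)$ and is not carried out. Worse, invoking the Petz/Accardi--Cecchini sufficiency theorem presupposes a completely positive unital map between the two algebras along which monotonicity is saturated, and identifying such a channel (is it $a\mapsto u^{\dagger}au$? does it even map $M_{phys}$ into $M_{code}$?) is essentially equivalent to the reconstruction statement you are trying to prove. Finally, even granting the cocycle intertwining, you give no construction of $\tilde{a}$ off the subspace $uu^{\dagger}\calh_{phys}$ and no argument that the candidate lies in $M_{phys}$ rather than merely in $\calb(\calh_{phys})$; you correctly flag this as the delicate step but leave it unresolved. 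As it stands, $(2)\Rightarrow(1)$ is unproven.
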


Tensor networks with a finite number of nodes have been used to construct QECC for finite-dimensional Hilbert spaces, which have yielded physical insights into holography \cite{Harlow:2018fse,Harlow:2016fse}. One such example is the HaPPY code which demonstrates the kinematics of entanglement wedge reconstruction \cite{Pastawski:2015qua}. In particular, the code subspace of the HaPPY code consists of states where the areas of the extremal surfaces are not quantum fluctuating \cite{fixedarea}. Furthermore, some aspects of entanglement wedge reconstruction have also been studied using random tensor networks, where the dimension of each tensor index is finitely large \cite{HaydenQi}. Given the utility of tensor networks for preparing holographic states \cite{beyondtoymodels} and the fact that actual holographic CFTs have infinite-dimensional Hilbert spaces, we expect that infinite-dimensional tensor networks provide additional insights. In particular, an infinite-dimensional tensor network can illustrate the connection between the Reeh-Schlieder theorem and quantum error correction. Furthermore, the modular operator of Tomita-Takesaki theory plays a central role in bulk reconstruction in the continuum limit \cite{faulknerLewkowycz,modulartoolkit}. By linking holographic QECC with infinite-dimensional operator algebras, an infinite-dimensional tensor network might allow one to perform explicit computations relevant to holography that involve the modular operator. An infinite holographic tensor network also has the potential to make boundary locality manifest. In this paper, we demonstrate that the use of tensor networks in quantum error correction can be generalized to the case of infinite-dimensional Hilbert spaces.

In our toy model, the infinite-dimensional code and physical Hilbert spaces are constructed by tensoring together the Hilbert spaces of a countably infinite number of qutrits and then restricting to a countably infinite-dimensional subspace. Finite collections of qutrits are related by a tensor network as represented in Figure \ref{fig:qutritcodes}. Each connected graph defines an isometry from the state of two code qutrits (denoted as black nodes) to the state of four physical qutrits (denoted as white nodes). Our toy model explores how tensor networks with a repeated pattern can be generalized to define a QECC with infinite-dimensional Hilbert spaces. This model does not capture the negatively curved geometry of AdS; however, we believe that our construction can be generalized to encapsulate the holographic setup.
\begin{figure}[H]
	\centering
	\includegraphics[width=0.7\linewidth]{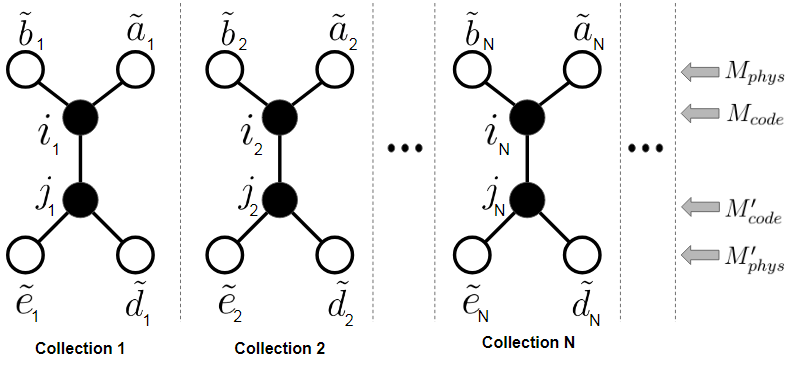}
	\caption{Our setup consists of infinitely many collections of code (black) qutrit pairs which are related to physical (white) qutrits via a tensor network that consists of infinitely many disconnected graphs. The von Neumann algebra $M_{phys}$ acts on the first row of qutrits. The algebras $M_{code}$, $M_{code}^\prime$, and $M_{phys}^\prime$ act on the second, third, and fourth rows of qutrits respectively.}
	\label{fig:qutritcodes}
\end{figure}

A more detailed summary of our construction is given as follows.
\begin{itemize}
\item The code pre-Hilbert space $p\calh_{code}$ is defined to be the Hilbert space of a countably infinite collection of qutrit pairs, where all but finitely many qutrit pairs are in the maximally entangled state $\ket{\lambda} = \frac{1}{\sqrt{3}} (\ket{00} + \ket{11} + \ket{22})$. Each code qutrit pair is represented by two vertically-aligned black nodes in Figure \ref{fig:qutritcodes}. The code Hilbert space $\calh_{code}$ is the completion of $p\calh_{code}$. The physical pre-Hilbert space $p\calh_{phys}$ and physical Hilbert space $\calh_{phys}$ are constructed the same way. Each qutrit pair in the physical Hilbert space is represented by two vertically-aligned white nodes in Figure \ref{fig:qutritcodes}.

\item We construct a bulk-to-boundary isometry from $p\calh_{code}$ to $p\calh_{phys}$ using the tensor network in Figure \ref{fig:qutritcodes}. 
The tensor network is comprised of infinite copies of connected diagrams, where a single connected diagram is represented in Figure \ref{fig:twotofourexample}.
Each trivalent vertex is associated with the rank-four perfect tensor\footnote{A \emph{perfect tensor} is an even-rank tensor that naturally defines an isometric map from up to half of its indices to the remaining indices. For a more detailed discussion of perfect tensors, see \cite{Pastawski:2015qua}.} of the three qutrit code $T_{i\tilde{a}\tilde{b}\tilde{c}}$. Our tensor network maps the states of the black qutrits to the states of the white qutrits. Using these notations, the isometry associated with a connected diagram is explicitly given by
\begin{align*}
\ket{p}_{i} \ket{q}_{j}  \rightarrow \sum_{\tilde{x},\tilde{y},\tilde{z},\tilde{c},\tilde{w}} \sqrt{3} T_{p\tilde{x}\tilde{y}\tilde{c}}  T_{q\tilde{z}\tilde{w}\tilde{c}} \ket{\tilde{x}}_{\tilde{a}}\ket{\tilde{y}}_{\tilde{b}} \ket{\tilde{z}}_{\tilde{d}} \ket{\tilde{w}}_{{\tilde{e}}},
\end{align*}
where the qutrits are labeled as in Figure \ref{fig:qutritcodes}. The indices $p,q,\tilde{x},\tilde{y},\tilde{z},\tilde{w},\tilde{c}$ are all valued in $\{0,1,2\}$. The isometry from $p\calh_{code}$ to $p\calh_{phys}$ may be naturally extended to an isometry that maps $\calh_{code}$ to $\calh_{phys}$.

\item Using the code and physical Hilbert spaces and an isometry relating them, we define von Neumann algebras $M_{code}$ and $M_{phys}$. The $\star$-algebra $A_{code}$ is defined to be the algebra of operators that only act nontrivially on a finite number of qutrits in the top row of black qutrits in Figure \ref{fig:qutritcodes}. The double commutant of $A_{code}$ defines $M_{code}$, the von Neumann algebra acting on the top row of black qutrits. We explicitly show that the commutant $M_{code}^\prime$ is the analogously defined algebra that acts on the bottom row of black qutrits. We also define $M_{phys}$ and $M_{phys}^\prime$ which respectively act on the top and bottom row of white qutrits in Figure \ref{fig:qutritcodes}. To show that $M_{code}$ is a type II$_1$ factor, we define a linear function $T: M_{code} \rightarrow \mathbb{C}$, which is given by
\begin{align*}
T(\calo) := \braket{\lambda \cdots|\calo|\lambda \cdots}, \quad \calo \in M_{code},
\end{align*}
where $\ket{\lambda \cdots} \in p\calh_{code}$ is the state where all black qutrit pairs are in the state $\ket{\lambda}$. We demonstrate that $T(\calo)$ is a trace and invoke Theorem \ref{thm:21theorem} to prove that $M_{code}$ is a type II$_1$ factor. Likewise, $M_{code}^\prime$, $M_{phys}$, and $M_{phys}^\prime$ are also type II$_1$ factors.

\item We determine a map from $M_{code}$ to $M_{phys}$ that explicitly shows how operators that act on black qutrits may be reconstructed as operators that act on white qutrits. First, note that an operator $\calo$ that acts on a black qutrit $i$ in Figure \ref{fig:twotofourexample} may be expressed as an operator $\tilde{\calo}$ that acts on the white qutrits $\tilde{a},\tilde{b}$. The relation between $\calo$ and $\tilde{\calo}$ is given by
\begin{align*}
\tilde{\calo} = \sum_{p,q}\braket{p|\calo|q}_{i} \left[U_{\tilde{a} \tilde{b}}\ket{p}_{\tilde{a}}\bra{q}_{\tilde{a}}  U_{\tilde{a}\tilde{b}}^\dagger \otimes I_{\tilde{d}\tilde{e}}\right],
\end{align*}
where $U_{\tilde{a}\tilde{b}}$ is a unitary matrix that acts only on white qutrits $\tilde{a},\tilde{b}$. By applying the above formula finitely many times, we may construct a map from $A_{code}$ into $M_{phys}$ which we call the \emph{tensor network map}. We then show that there is a natural way to extend the tensor network map to a map from $M_{code}$ into $M_{phys}$. We demonstrate that the image under the tensor network map of an operator $\calo \in M_{code}$ acts on the code subspace in the same way as $\calo$. The same statement holds for the commutant $M_{code}^\prime$. This demonstrates that our QECC satisfies statement 1 of Theorem \ref{thm:maintheorem}.

\item To show that our QECC satisfies the assumptions of Theorem \ref{thm:maintheorem}, we find a dense subset of $p\calh_{code}$ that consists of cyclic and separating vectors with respect to $M_{code}$. For example, a state in $p\calh_{code}$ where each black qutrit pair is in a pure state with maximal Schmidt number (such as $\ket{\lambda}$) is cyclic and separating with respect to $M_{code}$. We also prove that any cyclic and separating state with respect to $M_{code}$ is mapped via the bulk-to-boundary isometry to a cyclic and separating state with respect to $M_{phys}$. Thus, our QECC satisfies all assumptions and statements of Theorem \ref{thm:maintheorem}.
\end{itemize}

An outline of this paper is given as follows. First, we review aspects of infinite-dimensional von Neumann algebras in Section \ref{sec:vNalg} and Tomita-Takesaki theory in Section \ref{sec:tomita}. In particular, we explain why type III$_1$ factors are relevant for quantum field theory. Then, we describe in detail our construction of an infinite-dimensional QECC in Section \ref{sec:isometry}. Tensor networks play an important role in our toy model. In Section \ref{sec:dvna} we define von Neumann algebras $M_{code} \subset \calb(\calh_{code})$ and $M_{phys} \subset \calb(\calh_{phys})$.\footnote{The set of bounded operators on a Hilbert space $\calh$ is denoted by $\calb(\calh)$. See Definition \ref{def:boundedop}.} In Sections \ref{sec:tensornetworkmap} and \ref{sec:tensornetworkproperties}, we show that our example satisfies the properties of bulk reconstruction in Theorem \ref{thm:maintheorem}. In Section \ref{sec:cyclicseparating} we show that cyclic and separating vectors with respect to $M_{code}$ ($M_{phys}$) are dense in $\calh_{code}$ ($\calh_{phys}$). We also show that cyclic and separating vectors with respect to $M_{code}$ are mapped via the bulk-to-boundary isometry $u$ to cyclic and separating vectors with respect to $M_{phys}$. It follows that our tensor network model satisfies both statements in Theorem \ref{thm:maintheorem}. In Section \ref{sec:type21}, we prove that $M_{code}$ and $M_{phys}$ are type II$_1$ factors. In Section \ref{sec:reltomita}, we demonstrate that the relative Tomita operator defined with respect to $M_{code}$ or $M_{phys}$ may be bounded or unbounded, depending on the choice of states. In quantum field theory, the Tomita operators defined with respect to local operator algebras are generically unbounded \cite{Witten:2018zxz}. In section \ref{sec:computerelentropy}, we show that the relative entropy of two cyclic and separating states may be computed by tracing over the entire Hilbert space except the Hilbert space of the first $N$ qutrit pairs, computing the relative entropy of the reduced density matrices with the finite-dimensional relative entropy formula, and taking the limit as $N \rightarrow \infty$.

\section{Infinite-dimensional von Neumann algebras} \label{sec:vNalg}
In this section, we provide background information on operator algebras, including the definitions of type I, II$_1$, II$_\infty$, and III factors, and elucidate their relevance to physics. We also prove theorems that are useful for constructing our infinite-dimensional QECC. First, we review the notion of Hilbert space and bounded operators in Section \ref{sec:hilbert}.  With these notions, we recall basic theorems about Hilbert spaces, operators, and boundedness. Based on those, we explain the operator topologies in Section \ref{sec:optop}. Then we introduce relevant theorems and present definitions of von Neumann algebras in a physics-friendly manner in Section \ref{sec:vNalgdef}. We review the different types of von Neumann algebra factors in Section \ref{sec:vNalgtypes}. This section mainly draws upon \cite{Jones-vNalg}, \cite{Landsman-vNalg}, and \cite{ReedSimon}. 

\subsection{Hilbert Space and Bounded Operators} \label{sec:hilbert}
\begin{defn}
A \emph{Hilbert space} is a complex vector space $\mathcal{H}$ with the inner product 
$$\langle \cdot | \cdot \rangle : \mathcal{H} \times \mathcal{H} \rightarrow \mathbb{C}$$ 
that satisfies the following properties:
\begin{enumerate}
\item The inner product is linear in the second variable, 
\item The inner product satisfies $\overline{\braket{\xi|\eta}} = \braket{\eta|\xi}$, 
\item The inner product is positive definite ($\braket{\xi|\xi} > 0$ for $\ket{\xi} \neq 0$), 
\item The vector space $\calh$ is complete for the norm defined by $||\ket{\xi}|| = \sqrt{\braket{\xi|\xi}}$. 
\end{enumerate}

A Hilbert space is complete when all Cauchy sequences converge. A \emph{pre-Hilbert space} has the same properties as a Hilbert space except that it is not complete.
\end{defn}

\begin{defn}
	A Hilbert space is \emph{separable} when it has an orthonormal basis, or a sequence $\{\ket{e_i}\}$ of unit vectors such that $\braket{e_i|e_j}=0 \quad \forall i \neq j$ and $0$ is the only element of $\mathcal{H}$ orthogonal to all of the $\ket{e_i}$.
	\label{def:orthonormalbasis}
\end{defn}

\begin{defn}
\label{def:boundedop}
	Given two Hilbert spaces $\calh$ and $\calk$, a linear operator $\calo : \mathcal{H} \rightarrow \mathcal{K}$ is \emph{bounded} when $|| \calo \ket{\xi} || \leq K ||\ket{\xi}|| \quad \forall \ket{\xi} \in \mathcal{H}$ for some number $K$. The infimum of all such $K$ is called the \emph{norm} of $\calo$, i.e. $||\calo||$. The set of bounded operators from $\mathcal{H} \rightarrow \mathcal{H}$ is denoted as $\mathcal{B}(\mathcal{H})$.
\end{defn}

\begin{thm}[Uniform Boundedness Principle \cite{Sokal}]
	\label{thm:uniformboundedness}
	Let $\{\calo_n\} \in \mathcal{B}(\mathcal{H})$ be a sequence of operators such that $\lim_{n \rightarrow \infty} \calo_n \ket{\chi}$ converges for every $\ket{\chi} \in \mathcal{H}$. Then, the sequence of norms $\{||\calo_n||\}$ is bounded from above.\footnote{The Uniform Boundedness Principle is true in a more general setting, but we are only interested in the special case given here.}
\end{thm}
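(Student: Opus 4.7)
The plan is to reduce the statement to the Banach-Steinhaus principle and then prove the latter by Sokal's elementary nested-ball argument. The first observation is that pointwise convergence in $\calh$ implies pointwise boundedness: for each $\ket{\chi} \in \calh$ the sequence $\{\calo_n\ket{\chi}\}$ converges in norm and hence is bounded, so there exists $M(\ket{\chi}) < \infty$ with $\|\calo_n \ket{\chi}\| \leq M(\ket{\chi})$ for all $n$. It therefore suffices to show that any pointwise bounded family in $\calb(\calh)$ is uniformly bounded in operator norm.

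The key geometric lemma, which replaces the usual appeal to the Baire Category Theorem, is the following: for any $\calo \in \calb(\calh)$, any $\ket{\chi_0} \in \calh$, and any $r > 0$,
\begin{equation*}
\sup_{\|\ket{\eta}-\ket{\chi_0}\|\leq r} \|\calo\ket{\eta}\| \;\geq\; r\,\|\calo\|.
\end{equation*}
This follows immediately from the identity $2r\,\calo\ket{u} = \calo(\ket{\chi_0}+r\ket{u}) - \calo(\ket{\chi_0}-r\ket{u})$ for unit vectors $\ket{u}$, combined with the triangle inequality and the definition of $\|\calo\|$.

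Using this lemma I would argue by contradiction. Assume $\sup_n\|\calo_n\| = \infty$. Starting from any closed ball in $\calh$, I would inductively construct a nested sequence of closed balls $\bar B_n = \bar B(\ket{\chi_n}, r_n)$ with $r_n \to 0$ and a strictly increasing sequence of indices $k_n$ such that $\|\calo_{k_n}\ket{\eta}\| > n$ for every $\ket{\eta} \in \bar B_n$. At each stage one first selects an operator $\calo_{k_n}$ whose norm is large enough on the current ball (possible because, if $\sup_k\|\calo_k\|$ were bounded on some ball by the lemma it would be bounded uniformly, contradicting the assumption), and then invokes the lemma to shrink to a sub-ball on which $\|\calo_{k_n}\ket{\eta}\| > n$. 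Because $\calh$ is complete and the radii shrink to zero, $\bigcap_n \bar B_n$ contains a single vector $\ket{\chi_\infty}$, and then $\|\calo_{k_n}\ket{\chi_\infty}\| > n$ for every $n$, contradicting the pointwise boundedness established above.

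The main obstacle is the passage from pointwise to uniform boundedness: pointwise boundedness is genuinely weaker, and the argument uses completeness of $\calh$ in an essential way through the nested-ball construction. The geometric lemma is what makes the leap possible; once it is in hand, the remainder is a routine induction together with a standard Cauchy-sequence limit.
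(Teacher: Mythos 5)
The paper does not prove this statement; it quotes it as a known result and cites Sokal's note for the proof, so there is no in-paper argument to compare against. Your proposal is correct and is essentially Sokal's elementary proof: the reduction from pointwise convergence to pointwise boundedness, the key inequality $\sup_{\|\ket{\eta}-\ket{\chi_0}\|\leq r}\|\calo\ket{\eta}\|\geq r\|\calo\|$ obtained from $2r\,\calo\ket{u}=\calo(\ket{\chi_0}+r\ket{u})-\calo(\ket{\chi_0}-r\ket{u})$, and the avoidance of the Baire Category Theorem are all exactly his. The only difference is packaging: Sokal builds a single Cauchy sequence of points $\ket{\chi_k}$ with $\|\ket{\chi_k}-\ket{\chi_{k-1}}\|\leq 3^{-k}$ and $\|\calo_{n_k}\ket{\chi_k}\|\geq\tfrac{2}{3}3^{-k}\|\calo_{n_k}\|$, then bounds $\|\calo_{n_k}\ket{\chi_\infty}\|$ from below directly, whereas you run a nested-closed-ball induction; the two are interchangeable, though your version needs one small correction of attribution --- the shrinking to a sub-ball on which $\|\calo_{k_n}\ket{\eta}\|>n$ follows from continuity of the single bounded operator $\calo_{k_n}$, not from the geometric lemma, which is only needed to guarantee that some point of the current ball has a large image. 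With that reading the argument is complete and correct.
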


\begin{thm}
	\label{thm:convergence}
	If $\{\calo_n\} \in \calb(\calh)$ is a sequence of operators whose norms are bounded from above and $\lim_{n \rightarrow \infty} \calo_n\ket{\psi}$ converges for all $\ket{\psi}$ in a dense subspace of $\calh$, then $\lim_{n \rightarrow \infty} \calo_n\ket{\Psi}$ converges for all $\ket{\Psi} \in \calh$.
\end{thm}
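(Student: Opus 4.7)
The plan is to use a standard three-epsilon argument combined with the completeness of $\calh$. Since every Cauchy sequence in a Hilbert space converges, it suffices to show that for an arbitrary $\ket{\Psi} \in \calh$, the sequence $\{\calo_n \ket{\Psi}\}$ is Cauchy.

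Let $K := \sup_n \|\calo_n\|$, which is finite by hypothesis, and let $D \subset \calh$ denote the dense subspace on which $\calo_n\ket{\psi}$ converges. Given $\epsilon > 0$ and any $\ket{\Psi} \in \calh$, I would first invoke density of $D$ to choose $\ket{\psi} \in D$ with $\|\ket{\Psi} - \ket{\psi}\| < \epsilon/(3K+1)$ (the $+1$ safeguards against the trivial case $K=0$). Then I would split
\begin{equation*}
\|\calo_n\ket{\Psi} - \calo_m\ket{\Psi}\| \leq \|\calo_n(\ket{\Psi}-\ket{\psi})\| + \|\calo_n\ket{\psi} - \calo_m\ket{\psi}\| + \|\calo_m(\ket{\psi}-\ket{\Psi})\|.
\end{equation*}
The first and third terms are each bounded by $K\|\ket{\Psi}-\ket{\psi}\| < \epsilon/3$ using the uniform bound on the operator norms. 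Since $\{\calo_n\ket{\psi}\}$ converges by hypothesis (as $\ket{\psi} \in D$), it is Cauchy, so the middle term is smaller than $\epsilon/3$ for all sufficiently large $n,m$. This shows $\{\calo_n\ket{\Psi}\}$ is Cauchy, and completeness of $\calh$ yields a limit.

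There is no significant obstacle; the result is a textbook consequence of uniform boundedness plus density. The only mild subtlety is to make sure the uniform bound $K$ is used to control the error introduced by approximating $\ket{\Psi}$ by an element of $D$ uniformly in $n$, which is precisely what prevents the approximation error from blowing up as one takes $n \to \infty$. It is worth noting that the uniform bound on norms is essential: without it, pointwise convergence on a dense subset would not extend to the whole space in general.
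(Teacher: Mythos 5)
Your proof is correct and follows essentially the same argument as the paper: split $\|(\calo_n-\calo_m)\ket{\Psi}\|$ using a nearby $\ket{\psi}$ in the dense subspace, control the approximation error uniformly in $n,m$ via the bound on the operator norms, and use the Cauchy property of $\{\calo_n\ket{\psi}\}$ for the remaining term. The paper combines your first and third terms into a single $(\|\calo_n\|+\|\calo_m\|)\|\ket{\Psi}-\ket{\psi}\|$ bound, but this is the same argument.
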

\begin{proof}
	Note that $||(\calo_n-\calo_m)\ket{\Psi}|| \leq (||\calo_n|| + ||\calo_m||)||\ket{\Psi} - \ket{\psi}|| + ||(\calo_n - \calo_m)\ket{\psi}||$. We are given that $ \{\calo_n \ket{\psi}\}$ is a Cauchy sequence. Given $\epsilon > 0$, choose $\ket{\psi}$ such that $(||\calo_n|| + ||\calo_m||)||\ket{\Psi} - \ket{\psi}|| < \frac{\epsilon}{2}$. Then, choose $N$ such that for all $n,m > N$, $ ||(\calo_n - \calo_m)\ket{\psi}||<\frac{\epsilon}{2}$. Hence, $\{\calo_n \ket{\Psi}\}$ is a Cauchy sequence.
\end{proof}

\begin{thm}[Bounded Linear Transformation (BLT) Theorem \cite{ReedSimon}]
	\label{thm:BLT}
	Suppose $\calo$ is a bounded linear transformation from a pre-Hilbert space $p\calh$ to a Hilbert space $\calh$. Then $\calo$ can be uniquely extended to a bounded linear operator (with the same norm) from the completion of $p\calh$ to $\calh$.
\end{thm}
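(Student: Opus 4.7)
The plan is to define the extension by continuity: every vector $\ket{\Psi}$ in the completion $\overline{p\calh}$ is by definition the limit of some Cauchy sequence $\{\ket{\psi_n}\} \subset p\calh$, and I would set
\[
\tilde{\calo}\ket{\Psi} := \lim_{n\to\infty} \calo\ket{\psi_n}.
\]
First I would verify that this limit actually exists in $\calh$. Boundedness of $\calo$ gives
\[
\|\calo\ket{\psi_n} - \calo\ket{\psi_m}\| \leq \|\calo\|\,\|\ket{\psi_n} - \ket{\psi_m}\|,
\]
so $\{\calo\ket{\psi_n}\}$ is Cauchy in $\calh$, which is complete, hence convergent.

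Next I would check that the definition is independent of the chosen Cauchy sequence. If $\ket{\psi_n}, \ket{\phi_n} \to \ket{\Psi}$, then
\[
\|\calo\ket{\psi_n} - \calo\ket{\phi_n}\| \leq \|\calo\|\,\|\ket{\psi_n} - \ket{\phi_n}\| \longrightarrow 0,
\]
so the two candidate limits coincide. Linearity of $\tilde{\calo}$ follows by choosing Cauchy sequences representing $\ket{\Psi}$ and $\ket{\Phi}$ and passing to the limit in the linearity of $\calo$ on $p\calh$. Clearly $\tilde{\calo}$ agrees with $\calo$ on $p\calh$, since one may take the constant sequence.

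For the norm equality, I would observe that the norm on $\calh$ is continuous, so
\[
\|\tilde{\calo}\ket{\Psi}\| = \lim_{n\to\infty} \|\calo\ket{\psi_n}\| \leq \|\calo\|\,\lim_{n\to\infty}\|\ket{\psi_n}\| = \|\calo\|\,\|\ket{\Psi}\|,
\]
giving $\|\tilde{\calo}\| \leq \|\calo\|$. The reverse inequality is immediate because $\tilde{\calo}$ restricts to $\calo$ on the dense subspace $p\calh$. For uniqueness, suppose $\tilde{\calo}_1, \tilde{\calo}_2$ are two bounded extensions. Their difference is bounded, hence continuous, and vanishes on the dense subspace $p\calh$, so it vanishes on all of $\overline{p\calh}$.

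There is no real obstacle here; the argument is essentially a packaging of the universal property of completion for uniformly continuous maps, specialized to bounded linear maps between normed spaces. The only subtlety worth flagging is making sure that the three things one needs from $\calo$ (Cauchy-preservation, independence of representing sequence, and preservation of algebraic operations in the limit) all follow from the single hypothesis of boundedness together with linearity, which they do via the single Lipschitz estimate $\|\calo(\ket{\xi}-\ket{\eta})\| \leq \|\calo\|\,\|\ket{\xi}-\ket{\eta}\|$.
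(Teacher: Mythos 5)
Your proof is correct and is the standard extension-by-continuity argument; the paper does not prove this theorem itself but cites it directly from Reed--Simon, whose proof is essentially identical to yours (map Cauchy sequences forward using the Lipschitz bound, check well-definedness, linearity, norm equality, and uniqueness via density). Nothing is missing.
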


\begin{defn}
	An operator $\calo \in \mathcal{B}(\mathcal{H})$ is
	\begin{itemize}
		\item \emph{self-adjoint} if $\calo^\dagger = \calo$,
		\item a \emph{projection} if $\calo = \calo^\dagger = \calo^2$,
		\item \emph{positive} if $\braket{ \xi|\calo|\xi} \ge 0 \quad \forall \ket{\xi} \in \mathcal{H}$ (thus $\calo_1 \ge \calo_2$ if $\calo_1 - \calo_2$ is positive),
		\item an \emph{isometry} if $\calo^\dagger \calo = 1$,
		\item \emph{unitary} if $\calo^\dagger \calo = \calo \calo^\dagger = 1$,
		\item a \emph{partial isometry} if $\calo^\dagger \calo$ is a projection.
	\end{itemize}
\end{defn}

One can also define an isometry more generally as a norm-preserving map from one Hilbert space to a different Hilbert space. An example is the isometry from $\calh_{code}$ to $\calh_{phys}$ considered in Theorem \ref{thm:maintheorem}.


\begin{defn}
	If $S \subset \mathcal{B}(\mathcal{H})$, then the \emph{commutant} $S^\prime$ is $\{\calo \in \mathcal{B}(\mathcal{H}) : \calo \calp = \calp \calo \quad \forall \calp \in S  \}$.
\end{defn}

\begin{thm}
	\label{thm:strongoperatorconvergence}
	Let $\ket{e_i}, \, i \in \mathbb{N}$ be an orthonormal basis of a Hilbert space $\calh$. Let $\calo \in \calb(\calh)$. Then
	\begin{equation} 
\label{eq:diagionallimit}
	\lim_{n \rightarrow \infty} \, \sum_{i = 1}^n \sum_{j = 1}^n \ket{e_i}\braket{e_i|\calo|e_j}\braket{e_j|\chi} = \calo \ket{\chi} \quad \forall \ket{\chi} \in \calh.
	\end{equation}
\end{thm}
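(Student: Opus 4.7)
The plan is to recognize the partial sum as $A_n := \sum_{i,j=1}^n \ket{e_i}\bra{e_i}\calo\ket{e_j}\bra{e_j} = P_n \calo P_n$, where $P_n := \sum_{i=1}^n \ket{e_i}\bra{e_i}$ is the orthogonal projection onto the span of the first $n$ basis vectors. The claim then reduces to showing $P_n \calo P_n \ket{\chi} \to \calo \ket{\chi}$ for every $\ket{\chi} \in \calh$.

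First I would establish that $P_n$ converges strongly to the identity. The subspace of finite linear combinations of the $\ket{e_i}$ is dense in $\calh$, since the orthogonal complement of its closure consists of vectors orthogonal to every $\ket{e_i}$, which by Definition \ref{def:orthonormalbasis} contains only $0$. On this dense subspace, $P_n \ket{\chi} = \ket{\chi}$ for all sufficiently large $n$, so the sequence trivially converges. Since $P_n$ is a projection, $||P_n|| \le 1$, so the operator norms are uniformly bounded. Theorem \ref{thm:convergence} then upgrades this convergence to all $\ket{\chi} \in \calh$.

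With $P_n \to I$ strongly in hand, I would conclude via the decomposition
\begin{equation*}
P_n \calo P_n \ket{\chi} - \calo \ket{\chi} = P_n \calo (P_n - I)\ket{\chi} + (P_n - I)\calo \ket{\chi}.
\end{equation*}
The first term has norm at most $||\calo||\, ||(P_n - I)\ket{\chi}||$, which vanishes in the limit since $\calo$ is bounded and $P_n \to I$ strongly. The second term is $(P_n - I)(\calo\ket{\chi})$, which also vanishes because $\calo\ket{\chi}$ is a fixed vector in $\calh$. The main obstacle is the initial step of promoting $P_n \to I$ from the dense span of the basis to all of $\calh$, which is precisely the use case of Theorem \ref{thm:convergence}; the remainder is a routine triangle-inequality estimate exploiting the boundedness of $\calo$.
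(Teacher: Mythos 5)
Your proof is correct, and while it shares the same skeleton as the paper's argument (a triangle-inequality split into a ``right-projection error'' and a ``left-projection error,'' controlled by a uniform norm bound), it differs in one substantive way: you never need the Uniform Boundedness Principle. The paper bounds $\|\sum_{i=1}^n \ket{e_i}\bra{e_i}\calo - \calo\|$ by appealing to Theorem \ref{thm:uniformboundedness}, and the remark following the theorem emphasizes this as the key ingredient; your formulation $S_{n,n}=P_n\calo P_n$ replaces that step with the trivial estimate $\|P_n\calo\|\le\|P_n\|\,\|\calo\|\le\|\calo\|$, since $P_n$ is an orthogonal projection. This makes your argument more elementary and shorter (two error terms instead of the paper's three, and no excursion through the double-indexed $S_{n,m}$). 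One small point to make explicit: Theorem \ref{thm:convergence} only guarantees that $\lim_n P_n\ket{\chi}$ \emph{exists} for all $\ket{\chi}$; to conclude that the limit is $\ket{\chi}$ itself you should add that the limiting operator is bounded (norm at most $1$) and agrees with the identity on the dense span of the $\ket{e_i}$, hence equals the identity --- or, more directly, estimate $\|(I-P_n)\ket{\chi}\| \le \|\ket{\chi}-\ket{\psi}\| + \|(I-P_n)\ket{\psi}\|$ with $\ket{\psi}$ a finite linear combination close to $\ket{\chi}$, using $\|I-P_n\|\le 1$. (The paper's proof implicitly relies on the same standard fact when it asserts $\|\sum_{j=m+1}^\infty\ket{e_j}\braket{e_j|\chi}\|\to 0$.) With that one line added, your proof is complete.
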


\begin{proof}
	For $n,m \in \mathbb{N}$, define
		\begin{equation} 
	S_{n,m} := \sum_{i = 1}^n \sum_{j = 1}^m \ket{e_i}\braket{e_i|\calo|e_j}\bra{e_j}.
	\end{equation}
	Let $\ket{\chi} \in \calh$. Note that
	\begin{align} 
	\begin{split}
	S_{n,m}\ket{\chi} - \calo\ket{\chi} =   &\left[\sum_{i = 1}^n \ket{e_i}\bra{e_i}\calo - \calo\right]\ket{\chi}   \\
	-&\left[\sum_{i = 1}^n \ket{e_i}\bra{e_i}\calo - \calo\right]\sum_{j = m+1}^\infty \ket{e_j}\braket{e_j|\chi}
	+
	\left[\sum_{j = 1}^m \calo \ket{e_j}\bra{e_j} - \calo\right]\ket{\chi}.
	\end{split}
	\end{align}
We will evaluate the norm of the above equation and use the triangle inequality on the right hand side. We need the inequality
\begin{align} 
\begin{split}
	|| &\left[\sum_{i = 1}^n \ket{e_i}\bra{e_i}\calo - \calo\right]\sum_{j = m+1}^\infty\ket{e_j}\braket{e_j|\chi}|| \leq || \sum_{i = 1}^n \ket{e_i}\bra{e_i}\calo - \calo || \cdot || \sum_{j = m+1}^\infty\ket{e_j}\braket{e_j|\chi}||
	\\
	 &\leq K ||\sum_{j=m+1}^\infty\ket{e_j}\braket{e_j|\chi}||,
	\end{split}
	\end{align}
	where $K > 0$ is some constant. This inequality follows from the fact that the limit
	\begin{equation}
	\lim_{n \rightarrow \infty}\left[\sum_{i = 1}^n \ket{e_i}\bra{e_i}\calo - \calo\right] \ket{\psi}\end{equation}
	converges for all $\ket{\psi} \in \calh$, which implies that the set $$\{||\sum_{i = 1}^n \ket{e_i}\bra{e_i}\calo - \calo|| : n \in \mathbb{N} \}$$ is bounded (see Theorem \ref{thm:uniformboundedness}). Thus,
\begin{align}
\begin{split}
||S_{n,m}\ket{\chi} - \calo\ket{\chi}|| \leq   &||\left[\sum_{i = 1}^n \ket{e_i}\bra{e_i}\calo - \calo\right]\ket{\chi}|| 
	\\
	&+ K ||\sum_{j = m+1}^\infty\ket{e_j}\braket{e_j|\chi}||
	+ ||\left[\sum_{j = 1}^m \calo \ket{e_j}\bra{e_j} - \calo\right]\ket{\chi}||.
\end{split}
\end{align}
Given any $\epsilon > 0$, there exists an $M \in \mathbb{N}$ such that for $m > M$, 
\begin{equation}
K ||\sum_{j = m+1}^\infty\ket{e_j}\braket{e_j|\chi}||<\frac{\epsilon}{3}, \quad ||\left[\sum_{j = 1}^m \calo \ket{e_j}\bra{e_j} - \calo\right]\ket{\chi}||<\frac{\epsilon}{3}.
\end{equation}
There also exists an $N \in \mathbb{N}$ such that for $n > N$, 
\begin{equation}
||\left[\sum_{i = 1}^n \ket{e_i}\bra{e_i}\calo - \calo\right]\ket{\chi}||<\frac{\epsilon}{3}.
\end{equation}
Thus, there exist $N,M \in \mathbb{N}$ such that $||S_{n,m}\ket{\chi} - \calo\ket{\chi}||<\epsilon$ for $n > N$ and $m > M$. Hence,
\begin{equation}
\lim_{n \rightarrow \infty} \sum_{i = 1}^n \sum_{j = 1}^n \ket{e_i}\braket{e_i|\calo|e_j}\braket{e_j|\chi} = \lim_{n \rightarrow \infty} S_{n,n}\ket{\chi} = \calo\ket{\chi} \quad \forall \ket{\chi} \in \calh.
\end{equation}
\end{proof}

\begin{rem}
Naively, the equation \eqref{eq:diagionallimit} in Theorem \ref{thm:strongoperatorconvergence} can be thought of as a trivial consequence of the statement that for all $\ket{\chi}$ in $\calh$,
\begin{align*}
\begin{split}
\lim_{n \rightarrow \infty} \lim_{m \rightarrow \infty} \, \sum_{i = 1}^n \sum_{j = 1}^m \ket{e_i}\braket{e_i|\calo|e_j}\braket{e_j|\chi} &= \lim_{m \rightarrow \infty} \lim_{n \rightarrow \infty} \, \sum_{i = 1}^n \sum_{j = 1}^m \ket{e_i}\braket{e_i|\calo|e_j}\braket{e_j|\chi} = \calo \ket{\chi} .
\end{split}
\end{align*}
However we note that Theorem \ref{thm:strongoperatorconvergence} is nontrivial; we demonstrate this by the following counter example where the statement holds where the equation \eqref{eq:diagionallimit} does not hold. Consider the double-sequence $a_{n,m} \in \mathbb{R}$, indexed by $n,m \in \mathbb{N}$, which is defined as
\begin{equation}
a_{n,m} := \frac{1}{\frac{m}{n} + \frac{n}{m}}.
\end{equation}
One can check that
\begin{equation}
\lim_{n \rightarrow \infty} \lim_{m \rightarrow \infty} a_{n,m} = \lim_{m \rightarrow \infty} \lim_{n \rightarrow \infty} a_{n,m} = 0.
\end{equation}
However, we get a nonzero limit of $a_{n,n}$ such that
\begin{equation}
\lim_{n \rightarrow \infty} a_{n,n} = \frac{1}{2}.
\end{equation}
This demonstrates that the Theorem \ref{thm:strongoperatorconvergence} is not a simple consequence of the definition of a limit. Our proof of Theorem \ref{thm:strongoperatorconvergence} makes use of Theorem \ref{thm:uniformboundedness}, which is demonstrated above in its proof.
\end{rem}

\subsection{Topologies on $\calb(\calh)$}\label{sec:optop}

A topology on $\mathcal{B}(\mathcal{H})$ is a family of subsets of $\calb(\calh)$ that are defined to be open. This family must contain both the empty set $\emptyset$ and $\mathcal{B}(\mathcal{H})$ itself. Furthermore, this family must be closed under finite intersections and arbitrary unions. There are various notions of open sets in $\mathcal{B}(\mathcal{H})$; we list their definitions below, closely following \cite{Jones-vNalg}. In this section $\calo$ denotes an operator in $\calb(\calh)$ and $\ket{\xi_i}, \ket{\eta_i}$ denote states in $\calh$.


\begin{defn}
	The \emph{norm (or uniform) topology} is induced by the operator norm $||\calo||$. It is the smallest topology that contains the following basic neighborhoods:
	\begin{equation} \nonumber
	\mathcal{N}(\calo,\epsilon) = \{\calp \in \calb(\calh) : ||\calp - \calo|| < \epsilon  \} .
	\end{equation}
\end{defn}
\begin{defn}
The \emph{strong operator topology} is the smallest topology that contains the following basic neighborhoods:
\begin{equation} \nonumber
\mathcal{N}(\calo,\ket{\xi_1},\ket{\xi_2},\ldots,\ket{\xi_n},\epsilon) = \{\calp \in \calb(\calh) : ||(\calp - \calo)\ket{\xi_i}|| < \epsilon \quad \forall i \in \{ 1,2,\cdots,n\} \} .
\end{equation}
\end{defn}
A sequence of bounded operators $\{\calo_n\}$ converges strongly if and only if $\lim_{n \rightarrow \infty} \calo_n \ket{\psi}$ converges for all $\ket{\psi} \in \calh$. Note that the hermitian conjugates $\calo_n^\dagger$ need not converge strongly. We will sometimes use $\slim$ to denote a strong limit.
\begin{defn}
The \emph{weak operator topology} is the smallest topology that contains the following basic neighborhoods:
\begin{equation} \nonumber
\mathcal{N}(\calo,\ket{\xi_1},\ldots,\ket{\xi_n},\ket{\eta_1},\ldots,\ket{\eta_n},\epsilon) = \{\calp \in \calb(\calh): |\braket{\eta_i|(\calp - \calo)|\xi_i}| < \epsilon \quad \forall i \in \{1,2,\cdots,n\} \} .
	\label{def:weakoperatortopology}
\end{equation}
\end{defn}
A sequence of bounded operators $\{\calo_n\}$ converges weakly if and only if $\lim_{n \rightarrow \infty} \braket{\chi|\calo_n|\psi}$ converges for all $\ket{\chi},\ket{\psi} \in \calh$. We will sometimes use $\wlim$ to denote a weak limit.

\begin{defn}
	\label{def:ultraweakoperatortopology}
The \emph{ultraweak operator topology} is the smallest topology that contains the following basic neighborhoods:
\begin{equation} \nonumber 
\mathcal{N}(\calo,\{\ket{\xi_i}\},\{\ket{\eta_i}\},\epsilon) = \{\calp \in \calb(\calh) : \sum_{i = 1}^\infty|\braket{\eta_i|(\calp-\calo)|\xi_i}| < \epsilon \} ,
\end{equation} 
where the sequences $\{\ket{\xi_i}\}$ and $\{\ket{\eta_i}\}$ satisfy 
\begin{equation} \nonumber
\sum_{i = 1}^\infty (||\ket{\xi_i}||^2+||\ket{\eta_i}||^2) < \infty. 
\end{equation}
\end{defn}

Given topologies A and B, we say that topology A is stronger than topology B when every open set in topology B is also open in topology A. The relations between the various operator topologies are given in Figure \ref{fig:topologies}.

\begin{figure}[H]
\centering
\begin{tikzcd}[column sep=tiny, row sep=normal]
 & \text{Norm operator topology}\arrow[rightarrow]{ld}\arrow[rightarrow]{rd} &\\
\text{Strong operator topology}\arrow[rightarrow]{rd} & & \text{Ultraweak operator topology}\arrow[rightarrow]{ld}\\
 & \text{Weak operator topology}
\end{tikzcd}
\caption{As shown in \cite{Takesaki}, the norm operator topology is stronger than the strong operator topology and the ultraweak operator topology, which are both stronger than the weak operator topology.}
\label{fig:topologies}
\end{figure}
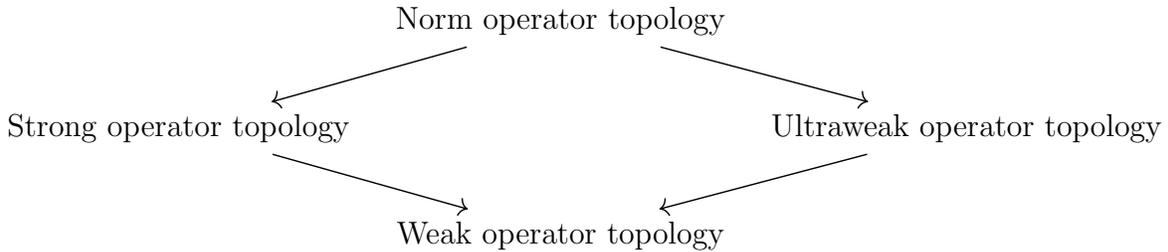

\subsection{Definition of von Neumann algebras} 
\label{sec:vNalgdef}
In this section, we define von Neumann algebras, factors, and hyperfinite von Neumann algebras.

\begin{defn}
	A \emph{$\star$-algebra} is an algebra of operators that is closed under hermitian conjugation.
\end{defn}

\begin{thm}[\cite{Jones-vNalg}, page 12]
	Let M be a $\star$-subalgebra of $\mathcal{B}(\mathcal{H})$ that contains the identity operator. Then $M^{\prime \prime} = \overline{M}$, where closure\footnote{A set is closed if its complement is open. The closure of a set $S$, denoted $\bar{S}$, is the smallest closed set that contains $S$.} is taken in the strong operator topology.
\end{thm}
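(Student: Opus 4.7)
The plan is to prove the two inclusions $\overline{M} \subseteq M''$ and $M'' \subseteq \overline{M}$ separately. The first is essentially formal; the second is von Neumann's classical bicommutant argument and contains the main content.

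For the easy inclusion, $M \subseteq M''$ is automatic from the definition of the double commutant. I would then check that for any $S \subseteq \calb(\calh)$, the commutant $S'$ is SOT-closed: if $\calo_\alpha \to \calo$ strongly and $\calo_\alpha \calp = \calp \calo_\alpha$ for some fixed bounded $\calp$, then $(\calo_\alpha \calp - \calp \calo_\alpha)\ket{\xi} = \calo_\alpha(\calp\ket{\xi}) - \calp(\calo_\alpha\ket{\xi}) \to \calo\calp\ket{\xi} - \calp\calo\ket{\xi}$ using strong convergence at $\calp\ket{\xi}$ and boundedness of $\calp$. Hence $M''$ is SOT-closed, and so $\overline{M} \subseteq M''$.

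For the hard inclusion, I fix $\calo \in M''$ and a basic SOT neighborhood specified by vectors $\ket{\xi_1}, \ldots, \ket{\xi_n}$ and $\epsilon > 0$, and I look for $\calp \in M$ with $\|(\calo - \calp)\ket{\xi_i}\| < \epsilon$ for every $i$. The idea is the $n$-fold amplification: I work on $\calh^{\oplus n}$ with the diagonal representation $\pi(A) := A \oplus \cdots \oplus A$, set $\ket{\xi} := \ket{\xi_1} \oplus \cdots \oplus \ket{\xi_n}$, and let $P$ be the orthogonal projection onto the closed subspace $K := \overline{\pi(M)\ket{\xi}}$. A direct matrix computation shows $\pi(M)' = M_n(M')$, and a second one using the matrix units $E_{kl}$ (which lie in $M_n(M')$ because $I \in M'$) gives $\pi(M)'' = \pi(M'')$. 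Because $M$ is a $\star$-algebra, $K^\perp$ is also $\pi(M)$-invariant, so $P \in \pi(M)'$. Since $\calo \in M''$, $\pi(\calo) \in \pi(M'') = \pi(M)''$, so $\pi(\calo)$ commutes with $P$. The hypothesis $\mathbf{1} \in M$ gives $\ket{\xi} = \pi(\mathbf{1})\ket{\xi} \in K$, hence $P\ket{\xi} = \ket{\xi}$ and therefore $\pi(\calo)\ket{\xi} = P\pi(\calo)\ket{\xi} \in K$. By the definition of $K$, there exists $\calp \in M$ with $\|\pi(\calo - \calp)\ket{\xi}\| < \epsilon$, which unpacks to $\sum_i \|(\calo - \calp)\ket{\xi_i}\|^2 < \epsilon^2$ and produces $\calp$ in the prescribed neighborhood.

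The main obstacle is the commutant identity $\pi(M)'' = \pi(M'')$, which needs two careful matrix computations and crucially uses that both $M$ and $M'$ contain the identity. The $\star$-algebra hypothesis is also essential for the invariance of $K^\perp$ under $\pi(M)$, which follows from $A^\dagger \in M$ whenever $A \in M$ together with the adjoint property of $\pi(A)$. All other steps are bookkeeping.
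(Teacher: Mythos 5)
Your argument is correct: it is the standard von Neumann bicommutant proof via the $n$-fold amplification, with the SOT-closedness of commutants handling the easy inclusion and the projection onto $\overline{\pi(M)\ket{\xi}}$ handling the hard one, and you correctly flag where $I\in M$ and the $\star$-closure are used. The paper does not prove this theorem itself but cites it from \cite{Jones-vNalg}, where the proof given is essentially the same argument you present.
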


\begin{thm}[\cite{Jones-vNalg}, page 12]
\label{thm:doublecommutant}
	If $M$ is a $\star$-subalgebra of $\mathcal{B}(\mathcal{H})$ that contains the identity operator, then the following statements are equivalent:
	\begin{itemize}
		\item $M = M^{\prime \prime}$,
		\item $M$ is closed in the strong operator topology,
		\item $M$ is closed in the weak operator topology.
	\end{itemize}
\end{thm}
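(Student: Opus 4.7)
My plan is to prove the cycle of implications $M = M'' \Rightarrow M$ is WOT-closed $\Rightarrow M$ is SOT-closed $\Rightarrow M = M''$. Two of the three links are essentially automatic once the preceding theorem ($M'' = \overline{M}^{SOT}$) is in hand. If $M$ is SOT-closed, then $\overline{M}^{SOT} = M$ and the preceding theorem gives $M = M''$. Conversely, the step $M$ WOT-closed $\Rightarrow M$ SOT-closed is a direct topological observation: the hierarchy in Figure \ref{fig:topologies} asserts that the strong operator topology refines the weak operator topology, so every WOT-open set is SOT-open, and consequently every WOT-closed set is SOT-closed.

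The substantive step is $M = M'' \Rightarrow M$ is WOT-closed, which I would deduce from the more general fact that $S'$ is WOT-closed for any subset $S \subset \calb(\calh)$. The cleanest route is to note that for fixed $\calp \in S$ and fixed $\ket{\xi}, \ket{\eta} \in \calh$, the map
\begin{equation}
\calo \mapsto \braket{\eta | \calo \calp - \calp \calo | \xi} = \braket{\eta | \calo | \calp \xi} - \braket{\calp^\dagger \eta | \calo | \xi} \nonumber
\end{equation}
is WOT-continuous by the very definition of the weak operator topology (Definition \ref{def:weakoperatortopology}), so the set $\{\calo \in \calb(\calh): \calo \calp = \calp \calo\}$ is WOT-closed as the intersection over $\ket{\xi}, \ket{\eta}$ of preimages of the closed singleton $\{0\}$. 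Taking the further intersection over $\calp \in S$ shows that $S'$ is WOT-closed. Applied to $S = M'$, this yields that $M''$ is WOT-closed, and combined with the hypothesis $M = M''$, it closes the cycle.

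I do not anticipate any genuine obstacle. Every step is either a topological comparison from Figure \ref{fig:topologies}, a direct appeal to the preceding theorem, or the short WOT-continuity argument above. The nontrivial content of the double commutant theorem is packaged in the preceding statement $M'' = \overline{M}^{SOT}$, which we are free to invoke here; the present theorem is essentially its elementary topological corollary.
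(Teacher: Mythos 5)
Your argument is correct and complete: the observation that $S'$ is an intersection of preimages of $\{0\}$ under the WOT-continuous functionals $\calo \mapsto \braket{\eta|\calo|\calp\xi} - \braket{\calp^\dagger\eta|\calo|\xi}$ makes $M''$ weakly closed, the topology comparison in Figure \ref{fig:topologies} handles WOT-closed $\Rightarrow$ SOT-closed, and the preceding theorem $M'' = \overline{M}$ closes the cycle. The paper itself supplies no proof of this statement (it is quoted from \cite{Jones-vNalg}), so there is nothing to compare against; your chain of implications is exactly the standard argument that the citation points to, and I see no gap in it.
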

\begin{defn}
	A \emph{von Neumann algebra} is an algebra that satisfies the statements in Theorem \ref{thm:doublecommutant}.
\end{defn}
Given a $\star$-subalgebra of $\mathcal{B}(\mathcal{H})$ containing the identity, we can generate a von Neumann algebra by taking either the double commutant or the closure in the strong or weak topology. 

\begin{defn}
A \emph{factor} is a von Neumann algebra $M$ with trivial center. That is, $$M \cap M^\prime = \{ \lambda  I : \lambda \in \mathbb{C} \},$$
where $I$ denotes the identity operator.
\end{defn}

\begin{defn}
	A von Neumann algebra $M$ is $\emph{hyperfinite}$ if $M = (\cup_n M_n)^{\prime \prime}$ for a sequence $\{M_n\}$ of finite-dimensional von Neumann subalgebras of $M$ that satisfies $M_n \subset M_{n+1} \quad \forall n \in \mathbb{N}$.
\end{defn}

Note that the union of finitely many closed sets is also closed. However, the union of infinitely many closed sets need not be closed. In section \ref{sec:dvna}, we define a hyperfinite von Neumann algebra by taking the closure of an infinite union of finite-dimensional von Neumann algebras. The closure introduces additional operators into the algebra.


\begin{defn}
	If $M$ is a von Neumann algebra, a non-zero projection $p \in M$ is called \emph{minimal} if, for any other projection $q$, $q \leq p \implies (q = 0 \text{ or } q = p)$.
\end{defn}

\begin{defn}
	Let $A$ be a $\star$-algebra that contains the identity operator $I$. Let $T : A \rightarrow \mathbb{C}$ be a linear function on $A$. The map $T$ is
	\begin{itemize}
		\item	\emph{positive} if $T(\calo^\dagger \calo) \ge 0$ and $T(\calo^\dagger) = T(\calo)^* \quad \forall \calo \in A$, 
		\item \emph{normalized} if $T(I)$ = 1,
		\item 	a \emph{state} if $T$ is positive and normalized, 
		\item	\emph{faithful} if $T(\calo^\dagger \calo) = 0 \implies \calo = 0 \quad \forall \calo \in A$, 
		\item 	\emph{tracial} (or a \emph{trace}) if $T(\calo_1 \calo_2) = T(\calo_2 \calo_1) \quad \forall \calo_1,\calo_2 \in A$.
	\end{itemize}
\end{defn}

Given any normalized Hilbert space vector $\ket{\Psi}$ and a von Neumann algebra $M \subset \calb(\calh)$, one can naturally define an associated state $T_\Psi : M \rightarrow \mathbb{C}$ as
\[ T_\Psi(\calo) := \braket{\Psi|\calo|\Psi} \quad \forall \calo \in M. \]
For this reason, the term ``state'' is often used to refer to both Hilbert space vectors and positive, normalized linear functions of a von Neumann algebra.

\subsection{Classification of von Neumann algebras}
\label{sec:vNalgtypes}
In this section, we review the classification of von Neumann algebra factors in a manner to have a direct consequence in physics. We first review type I factors, which are the only factors relevant for finite-dimensional Hilbert spaces. We then review type II factors. We explicitly construct type II$_1$ factors from our tensor network model in Section \ref{sec:dvna}. We finally review type III factors. We explain why among type III factors we only expect type III$_1$ factors to arise as algebras in local quantum field theories. 

\subsubsection{Type I factors}

\begin{defn}
	A factor with a minimal projection is called a \emph{type I factor}.
\end{defn}

\begin{defn}
	A type I factor that is isomorphic\footnote{We say that the von Neumann algebras $M_1$ and $M_2$, which may act on different Hilbert spaces, are \emph{isomorphic} when there exists a bijection between $M_1$ and $M_2$ that preserves linear combinations, products, and adjoints. We refer the reader to section III.2.1 of \cite{Haag} for more details.} to the algebra of bounded operators on a Hilbert space of dimension $n$ is a \emph{type I$_n$ factor}.
\end{defn}

\begin{defn}
	A type I factor that is isomorphic to the algebra of bounded operators on an infinite-dimensional Hilbert space is a \emph{type I$_\infty$ factor}.
\end{defn}

\subsubsection{Type II factors}

\begin{defn}
	\label{def:type21}
	\emph{A type II$_1$ factor} is an infinite-dimensional factor $M$ on $\mathcal{H}$ that admits a non-zero linear function $\text{tr}: M \rightarrow \mathbb{C}$ satisfying the following properties:
	\begin{itemize}
		\item tr$(\calo_1 \calo_2)=$ tr$(\calo_2 \calo_1) \quad \forall \calo_1,\calo_2 \in M$,
		\item tr$(\calo^\dagger \calo)\ge0 \quad\forall \calo \in M$,
		\item tr is ultraweakly continuous.
	\end{itemize}
\end{defn}

\begin{thm}[\cite{Jones-vNalg}, page 39]
	Let $M$ be a von Neumann algebra with a positive ultra-weakly continuous faithful normalized trace $\text{tr}$. Then $M$ is a type II$_1$ factor if and only if $\trace = \text{tr}$ for all ultraweakly continuous normalized traces $\trace$.
	\label{thm:21theorem}
\end{thm}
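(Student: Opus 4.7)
\medskip

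\noindent\textbf{Proof plan.} I would prove the two implications separately, leveraging the structure theory of finite factors. For the forward direction ($\Rightarrow$), the plan is to show that any ultraweakly continuous normalized trace $\text{Tr}$ on a type II$_1$ factor is forced to agree with the given trace $\text{tr}$. First I would reduce the problem to comparing $\text{Tr}$ and $\text{tr}$ on projections, since in a finite von Neumann algebra the linear span of projections is ultraweakly dense and both functionals are ultraweakly continuous. The key structural input is the classification of projections in a type II$_1$ factor: two projections $p,q \in M$ satisfy $\text{tr}(p) = \text{tr}(q)$ if and only if they are Murray--von Neumann equivalent, i.e.\ $p = v^*v$ and $q = vv^*$ for some partial isometry $v \in M$, and moreover $\text{tr}$ sends the projection lattice onto $[0,1]$. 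Since $\text{Tr}(v^*v) = \text{Tr}(vv^*)$ by the tracial property, $\text{Tr}$ is constant on equivalence classes, so $\text{Tr}(p) = f(\text{tr}(p))$ for some function $f:[0,1]\to\mathbb{C}$. Additivity of both traces on orthogonal projections forces $f$ to be additive on $[0,1]$; ultraweak continuity plus the normalization $f(1) = 1$ then pin down $f(x) = x$, and extending linearly and then by continuity yields $\text{Tr} = \text{tr}$ on all of $M$.

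\medskip

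\noindent For the reverse direction ($\Leftarrow$), I would argue by contraposition: if $M$ fails to be a factor then trace uniqueness fails. Suppose $Z(M) := M \cap M^\prime \neq \mathbb{C} I$; since $Z(M)$ is an abelian von Neumann algebra it contains a projection $z$ with $0 < z < I$, and faithfulness of $\text{tr}$ forces $0 < \text{tr}(z) < 1$. Then the functionals
\begin{align*}
\text{Tr}_1(\calo) := \frac{\text{tr}(z \calo)}{\text{tr}(z)}, \qquad \text{Tr}_2(\calo) := \frac{\text{tr}((I-z)\calo)}{\text{tr}(I-z)}
\end{align*}
are both normalized by construction, inherit ultraweak continuity from $\text{tr}$, and remain tracial precisely because $z$ is central (so $\text{tr}(z\calo_1\calo_2) = \text{tr}(\calo_2 z\calo_1) = \text{tr}(z\calo_2\calo_1)$ by cyclicity and centrality). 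But $\text{Tr}_1(z) = 1$ while $\text{tr}(z) < 1$, contradicting the assumed uniqueness. Hence $M$ is a factor. Since $M$ carries a positive faithful normalized finite trace, it is a finite factor and therefore either type I$_n$ or type II$_1$; the infinite-dimensionality built into Definition \ref{def:type21} (and ensured in the intended applications, including $M_{code}$) rules out type I$_n$.

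\medskip

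\noindent The hard step, which I am treating as a black box, is the classification of projections in a type II$_1$ factor by their trace value together with surjectivity of $\text{tr}$ onto $[0,1]$. This is the fundamental structure theorem for II$_1$ factors, and its proof uses the comparison theory of projections (a Zorn's lemma argument produces a maximal family of mutually orthogonal equivalent subprojections, and iterated ``halving'' then realizes every dyadic value in $[0,1]$). Once that input is available, both directions reduce to straightforward bookkeeping with the tracial identity, centrality, and ultraweak continuity; I do not anticipate any subtle topological issues beyond the standard density of the span of projections in the ultraweak topology on a finite von Neumann algebra.
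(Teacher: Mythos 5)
The paper does not actually prove this statement: it is imported verbatim from Jones's lecture notes (the bracketed citation is the ``proof''), so there is no in-paper argument to compare against. Your proposal is the standard textbook proof of trace uniqueness for II$_1$ factors, and it is essentially correct. The forward direction via comparison theory of projections (equal trace $\Leftrightarrow$ Murray--von Neumann equivalence, $\text{tr}$ surjecting the projection lattice onto $[0,1]$, additivity pinning down $f(x)=x$, then norm-density of the span of projections together with the automatic norm-continuity of ultraweakly continuous functionals) is exactly how this is done in \cite{Jones-vNalg}. The reverse direction by exhibiting two distinct traces $\text{tr}(z\,\cdot)/\text{tr}(z)$ and $\text{tr}((I-z)\,\cdot)/\text{tr}(I-z)$ from a nontrivial central projection is also the standard move.

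Two small points worth making explicit if you were to write this out in full. First, as literally stated the ``if'' direction is false for $M = M_n(\mathbb{C})$, which has a unique normalized trace but is type I$_n$; you correctly note that the infinite-dimensionality in Definition \ref{def:type21} must be invoked as a separate hypothesis, and it is worth observing that the existence of a faithful normal normalized trace already forces $M$ to be a finite von Neumann algebra, so once factoriality and infinite-dimensionality are in hand the only surviving type is II$_1$. Second, in the step $f(x_n)\to f(x)$ you need that projections $p_n \nearrow p$ with $\text{tr}(p-p_n)\to 0$ converge to $p$ ultraweakly; since $\trace$ is not assumed positive you cannot use monotonicity of $f$, so this convergence (which follows from normality of $\text{tr}$ and faithfulness, e.g.\ by passing to the standard form) is the point where ultraweak continuity of $\trace$ is genuinely used rather than just decorative. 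Neither issue is a gap in the plan, only in the level of detail.
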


\begin{thm}[\cite{Jones-vNalg}, page 109]
	Up to isomorphisms, there is a unique hyperfinite type II$_1$ factor.
\end{thm}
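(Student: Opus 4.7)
The plan is to prove uniqueness by a back-and-forth argument running along the finite-dimensional filtrations that define the hyperfinite structure. Suppose $M$ and $N$ are two hyperfinite type II$_1$ factors with SOT-dense increasing unions $\bigcup_n M_n$ and $\bigcup_n N_n$ of finite-dimensional $\star$-subalgebras. By Theorem \ref{thm:21theorem}, each factor carries a unique ultraweakly continuous normalized trace, call them $\text{tr}_M$ and $\text{tr}_N$. These define 2-norms $\|x\|_2 := \sqrt{\text{tr}(x^\dagger x)}$ in which the respective unions are dense, and in which multiplication and adjoint are continuous on bounded sets. The goal is to build a trace-preserving $\star$-isomorphism $\phi : \bigcup_n M_n \to N$ that extends by 2-norm continuity to an isomorphism $M \cong N$.

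The technical heart of the argument is the following approximation lemma: given any finite-dimensional $\star$-subalgebra $A \subset M$ and any $\epsilon > 0$, there exists an $n$ and a unitary $u \in M$ such that $u A u^\dagger$ is $\epsilon$-close in 2-norm to a $\star$-subalgebra of $M_n$ isomorphic to $A$. To prove it, decompose $A$ as $\bigoplus_i M_{k_i}(\mathbb{C})$ and note that for large $n$ the matrix units of $A$ can be approximated in 2-norm by elements of $M_n$ (by density). Replacing these approximations by exact matrix units in $M_n$ is possible because a II$_1$ factor contains projections of every trace in $[0,1]$ and any two projections with equal trace are unitarily equivalent; one uses these facts together with a standard polar-decomposition argument to turn approximate matrix units into exact matrix units inside $M_n$ after a small unitary adjustment in $M$.

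With the approximation lemma in hand, the inductive back-and-forth proceeds as follows. At stage $k$ one has finite-dimensional $\star$-subalgebras $A_k \subset M$ and $B_k \subset N$, a trace-preserving $\star$-isomorphism $\phi_k : A_k \to B_k$, and unitaries ensuring $M_k \subset_{\epsilon_k} A_k$ and $N_k \subset_{\epsilon_k} B_k$ in 2-norm. Apply the lemma in $N$ to enlarge $B_k$ to a $B_{k+1}$ that absorbs an isomorphic copy of $M_{k+1}$ while extending $\phi_k$ trace-preservingly; then swap roles to absorb $N_{k+1}$, obtaining $\phi_{k+1}$. Choosing $\epsilon_k = 2^{-k}$, the sequence $\{\phi_k\}$ becomes Cauchy in the point-2-norm topology on each fixed $M_n$, so it converges to a trace-preserving $\star$-homomorphism $\phi : \bigcup_n M_n \to N$. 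By 2-norm continuity $\phi$ extends to all of $M$; it is injective because the trace is faithful, and its image contains each $N_k$ by construction, hence is all of $N$.

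The main obstacle is making the back-and-forth quantitative. Point-2-norm convergence of a sequence of $\star$-homomorphisms does not automatically yield a homomorphism in the limit unless the norms of the intermediate maps are controlled; in the II$_1$ setting this is handled because trace-preserving $\star$-homomorphisms are automatically contractive in operator norm (and isometric in 2-norm), so multiplication and adjoint do pass to the limit on the unit ball. Beyond this, the delicate bookkeeping is the simultaneous control of the perturbation unitaries on both sides so that successive conjugations compose to a convergent sequence rather than drifting; this is what forces the summable choice $\epsilon_k = 2^{-k}$ and requires that at each step the approximation lemma be invoked with a budget depending on the operator-norm sizes of the previously chosen matrix units.
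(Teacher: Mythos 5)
The paper does not prove this statement; it is quoted from \cite{Jones-vNalg} (the classical Murray--von Neumann uniqueness theorem), so there is no in-paper argument to compare against. Your outline does follow the standard architecture of that classical proof: a local approximation lemma, perturbation of approximate matrix units to exact ones, a back-and-forth intertwining along the filtrations, and extension of the resulting trace-preserving $\star$-isomorphism by $\|\cdot\|_2$-continuity on bounded sets. The closing remarks about automatic contractivity of trace-preserving $\star$-homomorphisms and joint $2$-norm continuity of multiplication on the unit ball are the right ingredients for passing to the limit and for surjectivity.

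There is, however, a genuine gap in the technical heart of your argument. Your approximation lemma asks for a unitary $u$ with $uAu^\dagger$ close to a $\star$-subalgebra \emph{of $M_n$ isomorphic to $A$}, and you justify the passage from approximate to exact matrix units by invoking the facts that ``a II$_1$ factor contains projections of every trace in $[0,1]$'' and that equal-trace projections are unitarily equivalent. But the ambient algebra in which you need the exact matrix units to live is $M_n$, which is finite-dimensional, not a II$_1$ factor: its projections have only finitely many possible traces, so a copy of $A=\bigoplus_i M_{k_i}(\mathbb{C})$ with the correct traces on its minimal projections need not exist inside $M_n$ at all (e.g.\ a unital $M_3(\mathbb{C})$ with minimal trace $1/3$ inside a filtration by $M_{2^n}(\mathbb{C})$). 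If you instead perform the perturbation with unitaries of $M$, you leave $M_n$ and the lemma no longer says what the induction needs. The standard proof avoids this by never forcing the new algebras into the given filtration: one proves the \emph{local} approximation property (every finite set of elements is $\|\cdot\|_2$-approximately contained in some finite-dimensional, in fact $2^k\times 2^k$ matrix, subalgebra of $M$), reduces to dyadic matrix algebras using density of dyadic traces and the $2$-norm estimate $\|p-q\|_2^2$ controlled by $|\mathrm{tr}(p)-\mathrm{tr}(q)|$, and then runs the back-and-forth with freshly built matrix subalgebras $A_k\subset M$, $B_k\subset N$ containing their predecessors. That reduction, together with the relative version of the lemma (enlarging a given matrix subalgebra to a bigger one absorbing prescribed elements, which uses hyperfiniteness of the relative commutant $A_k'\cap M$), is the actual content of the theorem and is missing from your sketch. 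The remaining limit and extension steps are essentially fine modulo routine care (Kaplansky density for the $2$-norm approximation of matrix units, and normality of the limiting map so that its image is weakly closed and hence contains each $N_k$).
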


\begin{defn}[\cite{Jones-vNalg}, page 57]
	A type II$_\infty$ factor is a factor of the form $M \otimes \calb(\calh)$ with $M$ a type II$_1$ factor and $\dim \calh = \infty$.
\end{defn}


\subsubsection{Type III factors}

In order to define the von Neumann algebra of type III factor, we first recall from \cite{Jones-vNalg} the definition of the invariant $S(M)$ using the modular operator, which is presented in section \ref{sec:tomita}.

\begin{defn}
	\label{def:Sintersection}
	If $M$ is a von Neumann algebra, the invariant $S(M)$ is the intersection over all faithful normal states $\phi$ of the spectra of their corresponding modular operators $\Delta_{\phi}$.
\end{defn}

Note that each cyclic and separating vector in the Hilbert space defines a faithful normal state. Thus, for every cyclic and separating vector $\ket{\Psi}$, $S(M)$ is a subset of the spectrum of the modular operator $\Delta_\Psi$. With the intersection $S(M)$, we can define the type III factor.

\begin{defn}
	\label{def:typeIII}
	A factor $M$ is of type III if and only if $0\in S(M)$.
\end{defn}

When $0\in S(M)$, every modular operator $\Delta_{\Psi}$ is not a bijection of $D(\Delta_{\Psi})$ onto $\calh$.\footnote{This is a direct consequence of the definitions of the spectrum and the resolvent set. We let $D(\calo)$ denote the domain of operator $\calo$. See section 2.2 of \cite{HolographicEntropy} for more information.
		
	\begin{defn}
		The \emph{spectrum} of $\calo \in \calb(\calh)$ is defined as
		\begin{equation*}
		\sigma(\calo) := \{ \lambda \in \mathbb{C} : \calo - \lambda I \text{ is not invertible}\},
		\end{equation*}
		where $I$ denotes the identity operator.
	\end{defn}
	\begin{defn}
		Let $\calo$ be a closed operator on a Hilbert space $\calh$. $\lambda \in \mathbb{C}$ is in the \emph{resolvent set} of $\calo$ if $\lambda I - \calo$ is a bijection of $D(\calo)$ onto $\calh$. The \emph{spectrum} of $\calo$, denoted $\sigma(\calo)$, is defined to be the set of all complex numbers that are not in the resolvent set of $\calo$.
	\end{defn}
} It follows that the inverse of every modular operator is not defined on the entire Hilbert space. This is exactly desired for a local quantum field theory because the inverse of a modular operator is the modular operator defined with respect to the commutant:
\begin{align}
\Delta_{\Psi}^{-1}=\Delta_{\Psi}^\prime.
\end{align}
As shown in \cite{Witten:2018zxz}, $\Delta_{\Psi}^\prime$ should not be bounded and thus should not be defined on the entire Hilbert space. If $0 \notin S(M)$, then there exists a state whose modular operator defined with respect to $M^\prime$ is bounded. Hence we expect the condition $0 \in S(M)$ to be satisfied by the algebras arising from a physical local quantum field theory.

\begin{defn}
	\label{def:typeIIIlambda}
	A factor $M$ is called type III$_\lambda$ for $0\leq\lambda\leq 1$ if
	\begin{align}
	\lambda =0  \implies&	\quad S(M)=\{0\}\cup\{1\} ,\\
	0<\lambda <1 \implies&	\quad S(M)=\{0\}\cup\{\lambda^n: \, n\in\mathbb{Z}\} ,\\
	\lambda =1 \implies&	\quad S(M)=\{0\}\cup\mathbb{R}^+ .
	\end{align}
\end{defn}

As explained in \cite{Witten:2018zxz}, we expect a local quantum field theory to have a continuous spectrum of the modular operator $\Delta_{\Psi}$. Thus we see that the von Neumann algebra of type III$_1$ factor is the only factor that is relevant to physics among all possible type III factors.

We can also use $S(M)$ to characterize factors of types I or II. For such factors, $S(M)$ is given by the following theorem.
\begin{thm}[\cite{Summers}]
	\label{thm:typeIandIIinvariant}
	Let $M$ be a type I or type II factor on a separable Hilbert space. Let $S(M)$ be the invariant given in Definition \ref{def:Sintersection}. Then $S(M) = \{0,1\}$ if $M$ is of type I$_\infty$ or II$_\infty$ and $S(M) = \{1\}$ otherwise.
\end{thm}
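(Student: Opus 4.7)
The plan is to combine two universal observations with two case-specific computations. First, $1 \in S(M)$ always, because for any cyclic and separating vector $\ket{\Psi}$ the Tomita operator satisfies $S_\Psi \ket{\Psi} = \ket{\Psi}$ (apply $S_\Psi$ to $I\ket{\Psi}$ and use $I^\dagger = I$); combined with $J_\Psi \ket{\Psi} = \ket{\Psi}$ and the polar decomposition $S_\Psi = J_\Psi \Delta_\Psi^{1/2}$, this forces $\Delta_\Psi \ket{\Psi} = \ket{\Psi}$, so $1$ is an eigenvalue of every modular operator and lies in the intersection of their spectra. Second, when $M$ is of type $\text{I}_n$ with $n < \infty$ or type $\text{II}_1$, there exists a faithful normal tracial state $\tau$; the trace property shows that $S_\tau$ acts as an anti-linear isometry on the dense subspace $M\ket{\Omega_\tau}$, so $\Delta_\tau = I$ and $\sigma(\Delta_\tau) = \{1\}$. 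Intersecting with the universal lower bound yields $S(M) = \{1\}$ in these cases.

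For the non-finite-trace cases, $M$ of type $\text{I}_\infty$ or $\text{II}_\infty$, I would prove $S(M) = \{0,1\}$ in two parts. For $0 \in S(M)$: in the $\text{I}_\infty$ case every faithful normal state on $\calb(\calh)$ has the form $\phi(x) = \operatorname{Tr}(\rho x)$ with $\rho$ a positive invertible trace-class operator and $\operatorname{Tr}(\rho) = 1$; diagonalizing $\rho = \sum \lambda_i \ket{e_i}\bra{e_i}$ with $\lambda_i \to 0$, the modular operator acts on $\xi$ by $\xi \mapsto \rho \xi \rho^{-1}$ with spectrum equal to the closure of $\{\lambda_i/\lambda_j : i,j\}$, which always contains $0$. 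In the $\text{II}_\infty$ case, using the semifinite trace $\tau_M$ and writing $\phi = \tau_M(h \cdot)$ with $h \geq 0$ and $\tau_M(h) = 1$, the relation $\tau_M(I) = \infty$ forces $0 \in \sigma(h)$ (otherwise $h \geq \epsilon I$ would give $\tau_M(h) = \infty$), which one pushes through to $\Delta_\phi = L_h R_{h^{-1}}$ on $L^2(M, \tau_M)$. For the upper bound $S(M) \subseteq \{0,1\}$: exhibit two faithful normal states whose modular spectra are $\{0\} \cup \{2^k : k \in \mathbb{Z}\}$ and $\{0\} \cup \{3^k : k \in \mathbb{Z}\}$ respectively --- for $\text{I}_\infty$ take density matrices with eigenvalues proportional to $2^{-i}$ and $3^{-i}$, and for $\text{II}_\infty$ tensor these with the trace state on the $\text{II}_1$ factor --- and observe that their intersection is precisely $\{0,1\}$.

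The main obstacle I expect is the $\text{II}_\infty$ lower bound, where unlike $\text{I}_\infty$ there is no direct diagonalization of a density matrix and one must translate $0 \in \sigma(h)$ into $0 \in \sigma(\Delta_\phi)$. I plan to handle this by constructing approximate eigenvectors: pick spectral projections $p_n$ of $h$ supported on shrinking intervals approaching $0$ and a fixed spectral projection $q$ of $h$ supported near a positive value $\mu$, use semifiniteness of $\tau_M$ to ensure $\xi_n = p_n q \in L^2(M, \tau_M)$ has finite non-zero norm, and verify $\Delta_\phi \xi_n \approx (\lambda_n/\mu)\xi_n$ with $\lambda_n \to 0$. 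This places $0$ in the approximate point spectrum and hence in $\sigma(\Delta_\phi)$, closing the argument.
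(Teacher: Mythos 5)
The paper does not prove this statement; it is imported from \cite{Summers} as a quoted fact, so there is no in-paper argument to compare against. Judged on its own terms, your outline follows the standard route and most of it is sound: $1\in S(M)$ always because $\Delta_\phi\Omega_\phi=\Omega_\phi$ for the GNS vector; the tracial state forces $S(M)=\{1\}$ in the I$_n$ ($n<\infty$) and II$_1$ cases; the I$_\infty$ lower bound via $\lambda_i\to 0$ is fine (though ``invertible'' should read ``injective'' --- the density matrix of a faithful normal state on $\calb(\calh)$ with $\dim\calh=\infty$ is never boundedly invertible); and the upper bound $S(M)\subseteq\{0,1\}$ via two states with modular spectra $\{0\}\cup 2^{\mathbb{Z}}$ and $\{0\}\cup 3^{\mathbb{Z}}$ is a clean way to finish.

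There is, however, a genuine error in the step you yourself flag as the main obstacle, the II$_\infty$ lower bound. You take $p_n$ and $q$ to be spectral projections of the \emph{same} positive operator $h$ associated with disjoint spectral intervals (one shrinking toward $0$, one near $\mu>0$). Such projections commute and are mutually orthogonal, so $\xi_n=p_nq=0$ identically for large $n$, and your approximate eigenvector vanishes; the argument as written proves nothing. The repair is standard but must be stated: because $M$ is a factor, $p_nMq\neq\{0\}$ for any two nonzero projections, so one can choose $x_n\in M$ with $p_nx_nq\neq 0$ and set $\xi_n=p_nx_nq$. Then $\tau_M(qx_n^{*}p_nx_nq)\leq\|x_nqx_n^{*}\|\,\tau_M(p_n)<\infty$ (using $\tau_M(p_n)\leq a_n^{-1}\tau_M(hp_n)\leq a_n^{-1}$ for an interval $[a_n,b_n]$ with $a_n>0$), so $\xi_n\in L^{2}(M,\tau_M)$, and $\Delta_\phi\xi_n=h\,p_nx_nq\,h^{-1}$ is within a controlled factor of $(\lambda_n/\mu)\xi_n$, which puts $0$ in the approximate point spectrum. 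You should also justify that nonzero spectral projections $E_h([a_n,b_n])$ with $b_n\to 0$ exist, which follows from $0\in\sigma(h)$ together with injectivity of $h$. With that substitution the proof closes; without it, the II$_\infty$ case is not established.
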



\section{Relative Entropy from Tomita-Takesaki theory}
\label{sec:tomita}

In this section, we review aspects of Tomita-Takesaki theory that are relevant to Theorem \ref{thm:maintheorem}. In particular, we need these definitions to show that our QECC satisfies the assumptions of Theorem \ref{thm:maintheorem}.
For a more thorough review, see Section 3 of \cite{HolographicEntropy} as well as \cite{Witten:2018zxz}.

\begin{defn}
\label{def:cyc}
A vector $\ket{\Psi} \in \calh$ is {\em cyclic} with respect to a von Neumann algebra $M$ when the set of vectors $\calo\ket{\Psi}$ for $\calo \in M$ is dense in $\calh$. 
\end{defn}
\begin{defn}
\label{def:sep}
A vector $\ket{\Psi} \in \calh$ is {\em separating} with respect to a von Neumann algebra $M$ when zero is the only operator in $M$ that annihilates $\ket{\Psi}$. That is, $\calo\ket{\Psi} = 0 \implies \calo = 0$ for $\calo \in M$. 
\end{defn}

\begin{defn}
	Let $\ket{\Psi},\ket{\Phi} \in \calh$ and let $M$ be a von Neumann algebra. 
	The {\em relative Tomita operator} is the operator $S_{\Psi | \Phi}$ that acts as
	\begin{equation*}
	S_{\Psi|\Phi} \ket{x} := \ket{y}
	\end{equation*}
	for any sequence $\{\calo_n\} \in M$ such that the limits $\ket{x} = \lim_{n \rightarrow \infty} \calo_n \ket{\Psi}$ and $\ket{y} = \lim_{n \rightarrow \infty} \calo_n^\dagger \ket{\Phi}$ both exist.
\end{defn}

For this definition to make sense, $\ket{\Psi}$ must be cyclic and separating with respect to $M$.

\begin{defn}
	Let $S_{\Psi|\Phi}$ be a relative Tomita operator.
	 The {\em relative modular operator} is $$\Delta_{\Psi|\Phi} := S_{\Psi|\Phi}^\dagger S_{\Psi|\Phi}.$$
\end{defn}

\begin{defn}[\cite{Araki}] \label{def:relent}
	Let $\ket{\Psi},\ket{\Phi} \in \calh$ and let $\ket{\Psi}$ be cyclic and separating with respect to a von Neumann algebra $M$. Let $\Delta_{\Psi|\Phi}$ be the relative modular operator associated with $\ket{\Psi},\ket{\Phi},$ and $M$. The {\em relative entropy} with respect to $M$ of $\ket{\Psi}$ and $\ket{\Phi}$ is
	\begin{equation*} 
	\cals_{\Psi|\Phi}(M) = - \braket{\Psi|\log \Delta_{\Psi | \Phi}|\Psi}. 
	\end{equation*} 
\end{defn}

\begin{defn}
	Let $M$ be a von Neumann algebra on $\calh$ and $\ket{\Psi}$ be a cyclic and separating vector for $M$. The \emph{Tomita operator} $S_\Psi$ is
	$$ S_\Psi := S_{\Psi | \Psi},$$ where $S_{\Psi | \Psi}$ is the relative modular operator defined with respect to $M$. The \emph{modular operator} $\Delta_\Psi = S_\Psi^\dagger S_\Psi$ and the \emph{antiunitary operator} $J_\Psi$ are the operators that appear in the polar decomposition of $S_\Psi$ such that
	$$S_\Psi = J_\Psi \Delta_{\Psi}^{1/2}.$$
\end{defn}

\section{The isometry between two infinite-dimensional Hilbert spaces}
\label{sec:isometry}
In this section, we show how a tensor network with infinitely many nodes can be used to define an isometry (i.e. a norm preserving map) from one infinite-dimensional Hilbert space to another. The isometry will be denoted by $u : \calh_{code} \rightarrow \calh_{phys}$.  We first review some preliminary facts about the three qutrit code.

\subsection{The three-qutrit code and a finite tensor network}
The three-qutrit code is an example of a QECC. A code qutrit is isometrically mapped to a Hilbert space of three physical qutrits. The map is given by
\begin{align}
\begin{cases}
\ket{0} &\rightarrow \frac{1}{\sqrt{3}} (\ket{\tilde{0}\tilde{0}\tilde{0}} + \ket{\tilde{1}\tilde{1}\tilde{1}} + \ket{\tilde{2}\tilde{2}\tilde{2}}), \\
\ket{1} &\rightarrow \frac{1}{\sqrt{3}} (\ket{\tilde{0}\tilde{1}\tilde{2}} + \ket{\tilde{1}\tilde{2}\tilde{0}} + \ket{\tilde{2}\tilde{0}\tilde{1}}), \\ \ket{2} &\rightarrow \frac{1}{\sqrt{3}} (\ket{\tilde{0}\tilde{2}\tilde{1}} + \ket{\tilde{1}\tilde{0}\tilde{2}} + \ket{\tilde{2}\tilde{1}\tilde{0}}).
\end{cases}
\end{align}
We can write this more succinctly as
\begin{equation} 
\ket{i} \rightarrow  \sum_{\tilde{a},\tilde{b},\tilde{c}}T_{i\tilde{a}\tilde{b}\tilde{c}} \ket{\tilde{a}\tilde{b}\tilde{c}},
\end{equation}
where $i$ denotes an input leg and $\tilde{a}, \tilde{b}, \tilde{c}$ denote output legs. We can apply successive isometries to create an isometry from two code qutrits to four physical qutrits. We illustrate this with a tensor network, represented in Figure \ref{fig:twotofourexample}.
\begin{figure}[H]
\begin{center}
\includegraphics[width=0.3\linewidth]{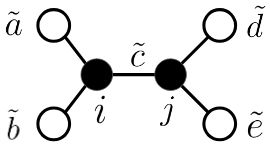}
\caption{The black code subspace qutrits are mapped to the white physical qutrits via an isometry defined by this tensor network.}
\label{fig:twotofourexample}
\end{center}
\end{figure}
The isometry corresponding to Figure \ref{fig:twotofourexample} is given by
\begin{equation} 
\ket{p}_{i} \ket{q}_{j}  \rightarrow \sum_{\tilde{x},\tilde{y},\tilde{z},\tilde{c},\tilde{w}} \sqrt{3} T_{p\tilde{x}\tilde{y}\tilde{c}}  T_{q\tilde{z}\tilde{w}\tilde{c}} \ket{\tilde{x}}_{\tilde{a}}\ket{\tilde{y}}_{\tilde{b}} \ket{\tilde{z}}_{\tilde{d}} \ket{\tilde{w}}_{{\tilde{e}}}. \label{eq:onefactorisometry}
\end{equation}
Throughout this paper, we use subscripts to associate qutrits with specific nodes in Figures \ref{fig:twotofourexample} or \ref{fig:qutritcodes}, and tildes are used to denote qutrits in the physical Hilbert space.

Let $U$ be a unitary operator that acts on a two-qutrit state as
\begin{equation}
\label{eq:unitary}
\begin{array}{ccc}
U\ket{00} = \ket{00} & U\ket{11} = \ket{20} & U\ket{22} = \ket{10} \\ 
U\ket{01} = \ket{11} & U\ket{12} = \ket{01} & U\ket{20} = \ket{21} \\ 
U\ket{02} = \ket{22} & U\ket{10} = \ket{12} & U\ket{21} = \ket{02},
\end{array} 
\end{equation}
and define
\begin{equation} 
\label{eq:lambda}
\ket{\lambda} := \frac{1}{\sqrt{3}}\left[\ket{00}+\ket{11}+\ket{22}\right]. \end{equation}
Let $\ket{\psi}_{ij}$ be a vector in the Hilbert space of the black qutrits $i,j$ in Figure \ref{fig:twotofourexample}, and let $\ket{\tilde{\psi}}_{\tilde{a}\tilde{b}\tilde{d}\tilde{e}}$ be its image under the isometry in equation \eqref{eq:onefactorisometry}. Let $U_{\tilde{a}\tilde{b}}$ ($U_{\tilde{d}\tilde{e}}$) be the unitary operator in equation \eqref{eq:unitary} that acts on qutrits $\tilde{a},\tilde{b}$ ($\tilde{d},\tilde{e}$). One may compute that
\begin{equation} \label{eq:codetophysunitary} U^\dagger_{\tilde{a}\tilde{b}} U^\dagger_{\tilde{d}\tilde{e}}\ket{\tilde{\psi}}_{\tilde{a}\tilde{b}\tilde{d}\tilde{e}} = \ket{\psi}_{\tilde{a} \tilde{d}} \ket{\lambda}_{\tilde{b} \tilde{e}}, \end{equation}
where $\ket{\psi}_{\tilde{a}\tilde{d}}$ is the same state as $\ket{\psi}_{ij}$, except on the white qutrits $\tilde{a}, \tilde{d}$.  That is, starting with the state $\ket{\tilde{\psi}}_{\tilde{a}\tilde{b}\tilde{d}\tilde{e}}$ on the white qutrits, one can apply separate unitary transformations on white qutrits $\tilde{a},\tilde{b}$ and $\tilde{d},\tilde{e}$ to recover $\ket{\psi}_{ij}$ on white qutrits $\tilde{a},\tilde{d}$ and the maximally entangled state $\ket{\lambda}$ on qutrits $\tilde{b},\tilde{e}$.


Given an operator $\calo$ that acts on qutrit $i$ in Figure \ref{fig:twotofourexample}, we may define an operator $\tilde{\calo}$ that acts on the adjacent white qutrits $\tilde{a},\tilde{b}$ as follows:
\begin{equation} \tilde{\calo} := \sum_{p,q}\braket{p|\calo|q}_{i} \left[U_{\tilde{a} \tilde{b}}\ket{p}_{\tilde{a}}\bra{q}_{\tilde{a}}  U_{\tilde{a}\tilde{b}}^\dagger \otimes I_{\tilde{d}\tilde{e}}\right]. 
\label{eq:onefactormap}
\end{equation}

We say that $\calo$, which acts on the code Hilbert space, is reconstructed as $\tilde{\calo}$, which acts on the physical Hilbert space. 

\subsection{The code and physical Hilbert spaces}

Our general setup is depicted in Figure \ref{fig:qutritcodes}. In our construction of an infinite-dimensional QECC, the code and physical Hilbert spaces, $\calh_{code}$ and $\calh_{phys}$, are each defined as the completions of pre-Hilbert spaces, $p\calh_{code}$ and $p\calh_{phys}$. As Figure \ref{fig:qutritcodes} shows, we may intuitively think of either the code or physical pre-Hilbert space as an infinite tensor product of two black qutrits or four white qutrits. From now on, whenever we say \emph{collection} we are referring to the qutrits in a single connected diagram in Figure \ref{fig:qutritcodes}. Within each collection we will label the individual qutrits as shown in Figure \ref{fig:twotofourexample}.

The pre-Hilbert space $p\calh_{code}$ is defined to include states of black qutrits where all but finitely many pairs of black qutrits are in the state $\ket{\lambda}$, defined in equation \eqref{eq:lambda}, which we sometimes also refer to as the \emph{code reference state}. Any vector in $p\calh_{code}$ is a finite linear combination of vectors in an overcomplete basis, where each basis vector may be written as
\begin{equation} \ket{M,p_1,p_2,\cdots,p_M,q_1,q_2,\cdots,q_M} := \left[\ket{p_1}_{i_1} \ket{q_1}_{j_1}\right] \otimes \left[\ket{p_2}_{i_2} \ket{q_2}_{j_2}\right] \otimes \cdots \otimes \left[\ket{p_M}_{i_M} \ket{q_M}_{j_M}\right] \otimes \ket{\lambda} \cdots \label{eq:codeprehbasis}\end{equation}
where each $p_k$ or $q_k$ index (for $k \in \{1,2,\ldots,M\}$) is valued in $\{0,1,2\}$ and specifies an orthonormal basis vector of a black qutrit. The index $M$ can be any natural number. The qutrits in each collection are contained in square brackets. To shorten notation, we will refer to the above basis vector as $\ket{M,\{p,q\}}$. The $\otimes \ket{\lambda} \cdots$ means that all the black qutrit pairs in the $(M+1)$th collection and beyond are in the reference state $\ket{\lambda}$. 
Note that these basis vectors are not all linearly independent.

Given two basis vectors $\ket{M_1,\{p,q\}_1}$ and $\ket{M_2,\{p,q\}_2}$, their inner product is calculated by ignoring all collections beyond the collection $\max{(M_1,M_2)}$ and then taking the usual inner product on the remaining $9^{\max{(M_1,M_2)}}$-dimensional Hilbert space. Note that the basis vectors $\ket{M,\{p,q\}}$ are not all mutually orthogonal, but they are all normalized. With an inner product, we can define Cauchy sequences. The Hilbert space $\calh_{code}$ is defined as the closure of $p\calh_{code}$ so that all Cauchy sequences in $\calh_{code}$ converge. We start from $p\calh_{code}$ and include all Cauchy sequences to define $\calh_{code}$. If the difference of two Cauchy sequences converges to zero, then we identify the two Cauchy sequences for the purposes of defining $\calh_{code}$.

The physical pre-Hilbert and Hilbert spaces are defined in a completely analogous way. Each collection consists of four white qutrits. The \emph{physical reference state} for four white qutrits is given by $\ket{\lambda \lambda} := \ket{\lambda}_{\tilde{a}\tilde{d}} \ket{\lambda}_{\tilde{b}\tilde{e}}$ where we are referring to Figure \ref{fig:twotofourexample} to label the qutrits. We choose this reference state for the white qutrits because it is the image of $\ket{\lambda}_{ij}$ under the isometry given by equation \eqref{eq:onefactorisometry}.

\subsection{The tensor network of isometries}

The bulk-to-boundary isometry $u$ is given by a linear norm preserving map $u : \calh_{code} \rightarrow \calh_{phys}$. First, we define its action on $p\calh_{code}$ and then use Theorem \ref{thm:BLT} to extend its domain to $\calh_{code}$. Each vector in $p\calh_{code}$ is mapped to a vector in $p\calh_{phys}$. The isometry $u$ acts on the basis vector $\ket{M,\{p,q\}}$ by applying the isometry given in equation \eqref{eq:onefactorisometry} to each collection separately. The state of each black qutrit pair is mapped to a state of four white qutrits. The code reference state is mapped to the physical reference state.
Because the map $u$ is linear and norm-preserving, a Cauchy sequence in $p\calh_{code}$ is mapped to a Cauchy sequence in $p\calh_{phys}$. Thus, we can define $u$ on all of $\calh_{code}$.

\section{Defining von Neumann algebras}
\label{sec:dvna}
Now that we have defined $\calh_{code}$ and $\calh_{phys}$ and the isometry $u : \calh_{code} \rightarrow \calh_{phys}$, we want to define von Neumann algebras on these Hilbert spaces.

\subsection{Definition of $M_{code}$}
\label{sec:mcode}
We now define $M_{code} \subset \calb(\calh_{code})$. First, we define a $\star$-algebra called $A_{code}$ which acts on $p\calh_{code}$. Referring to Figure \ref{fig:twotofourexample} for qutrit labels, every operator $a \in A_{code}$ may be written as
\begin{align}
\begin{split}
\label{eq:opinacode} 
a^{(N)} = \sum_{p_1 p_2 \cdots p_N q_1 q_2 \cdots q_N} & a_{p_1 p_2 \cdots p_N q_1 q_2 \cdots q_N} \times 
\left[\ket{p_1}_{i_1 }\bra{q_1}_{i_1 } \otimes I_{j_1}\right]\otimes \left[\ket{p_2}_{i_2 }\bra{q_2}_{i_2 } \otimes I_{j_2}\right] \\
& \quad \otimes \cdots \otimes \left[\ket{p_N}_{i_N}\bra{q_N}_{i_N } \otimes I_{j_N}\right] \otimes I  \cdots ,
\end{split}
\end{align}
where $a_{p_1 p_2 \cdots p_N q_1 q_2 \cdots q_N}$ are the matrix elements of the operator. Each $p_k,q_k$ index ($k \in \{ 1,2,\cdots,N\}$) is valued in $\{0,1,2\}$ and specifies an orthonormal basis vector of one black qutrit. The $\otimes I \cdots$ means that $a^{(N)}$ acts as the identity on all collections beyond the $N$th collection. Each collection is represented by square brackets. The label $N$ may be any natural number. The $(N)$ superscript reminds us of the value of $N$ for this operator. The operator $a^{(N)}$ maps $p\calh_{code}\rightarrow p\calh_{code}$. Because there exists a $K > 0$ such that $||a^{(N)} \ket{\psi}|| \leq K ||\ket{\psi}||$ for all $\ket{\psi} \in p\calh_{code}$, $a^{(N)}$ is bounded. Thus, $a^{(N)}$ maps Cauchy sequences in $p\calh_{code}$ into Cauchy sequences in $p\calh_{code}$, and Theorem \ref{thm:BLT} implies that $a^{(N)}$ is uniquely defined as a bounded operator acting on $\calh_{code}$. The $\star$-algebra $A_{code}$ is closed under hermitian conjugation and contains the identity.

A sequence of operators $\{a_n\} \in A_{code}$ converges strongly to an operator in $\calb(\calh_{code})$ if and only if $\lim_{n \rightarrow \infty} a_n \ket{\Psi}$ converges for all $\ket{\Psi} \in \calh_{code}$. The $\star$-algebra $A_{code}$ is not closed under strong limits. The von Neumann algebra $M_{code}$ is defined to be the closure of $A_{code}$ in the strong operator topology. We construct $M_{code}$ from all strongly converging limits of sequences in $A_{code}$. 
In topology, to construct the closure of a set, it is necessary, but generally \emph{not} sufficient, to include limits of converging sequences \cite{ReedSimon}. We must also include limits of nets, which are more general than sequences. However, it is possible to show that every operator in $M_{code}$ can be written as a strong limit of a sequence in $A_{code}$. In the next section, we show that the set $S \subset \calb(\calh_{code})$ of bounded operators that are strong limits of sequences in $A_{code}$ is the smallest strongly closed subset of $\calb(\calh_{code})$ that contains $A_{code}$, which implies that $M_{code} = S$. This is because
\begin{itemize}
	\item $S$ is equal to the commutant of a $\star$-algebra that contains the identity, which is a von Neumann algebra \cite{Witten:2018zxz}. Because $S$ is a von Neumann algebra, $S$ is strongly closed. 
	\item Any strongly closed subset of $\calb(\calh_{code})$ that contains $A_{code}$ must contain $S$ because $S$ is defined to only contain all strongly convergent sequences in $A_{code}$.
\end{itemize}
We provide explicit details in the next subsection.

\subsection{The commutant of $A_{code}$ and $M_{code}$}

\label{sec:mcodeprime}
In this section, we explicitly describe the commutant of $A_{code}$, which is denoted by $A_{code}^\prime$. 
Then, we demonstrate that every operator in $M_{code}$ may be written as a strongly convergent sequence of operators in $A_{code}$.

An orthonormal basis of $p\calh_{code}$ is an orthonormal basis of $\calh_{code}$. To see this, let $\ket{\Phi} \in \calh_{code}$. Let $\{\ket{\phi_n}\} \in p\calh_{code}$ be a sequence that converges to $\ket{\Phi}$. Suppose that $\ket{\Phi}$ is orthogonal to every orthonormal basis vector of $p\calh_{code}$. Using Definition \ref{def:orthonormalbasis}, we need to show that $\ket{\Phi} = 0$. Indeed, $\braket{\phi_n|\Phi} = 0 \,\, \forall n \in \mathbb{N}$, so $\braket{\Phi|\Phi} = 0$. Hence, $\ket{\Phi} = 0$.

Thus, we may define an orthonormal basis of $\calh_{code}$ where each basis vector is a finite linear combination of the vectors given in equation \eqref{eq:codeprehbasis}. We will choose an orthonormal basis $\ket{e_i}, \, i \in \mathbb{N}$ such that the first $9^\ell$ orthonormal basis vectors in the sequence $\{\ket{e_i}\}$ span the subspace of $p\calh_{code}$ where the qutrit pairs in the $(\ell+1)$th collection and beyond are in the reference state $\ket{\lambda}$.

A consequence of Theorem \ref{thm:strongoperatorconvergence} is that any operator $\calo \in \calb(\calh_{code})$ may be written as the following strong limit:
\begin{equation} \calo = \slim_{n \rightarrow \infty} \calo_n, \quad
\calo_n := \sum_{i =1}^{9^n} \sum_{j =1}^{9^n} \braket{e_{i}|\calo|e_{j}}\ket{e_{i}}\bra{e_{j}}.
\label{eq:on} 
\end{equation}
Each operator $\calo_n$ acts as the projector onto $\ket{\lambda}$ on the qutrits in the $(n+1)$th collection and beyond. Each $\calo_n$ may be written as 
\begin{align}
\label{eq:onoperator}
\begin{split}
\mathcal{O}_n = &\sum_{p_1 \cdots p_{n} q_1 \cdots q_{n} r_1 \cdots r_{n} s_1 \cdots s_{n}} \left( \mathcal{O}^n_{p_1 \cdots p_{n} q_1  \cdots q_{n} r_1  \cdots r_{n} s_1  \cdots s_{n}} 
\right.\\ \times 
&\left.\left[\ket{p_1}_{i_1 }\bra{q_1}_{i_1 } \otimes \ket{r_1}_{j_1 }\bra{s_1}_{j_1}\right]  
\otimes \cdots \otimes \left[\ket{p_{n}}_{i_{n} }\bra{q_{n}}_{i_{n} } \otimes \ket{r_{n}}_{j_{n} }\bra{s_{n}}_{j_{n}}\right]   \otimes \ket{\lambda}\bra{\lambda}  \cdots \right),
\end{split}
\end{align}
where the coefficient of each term of the sum is defined as
\begin{align}
\label{eq:matrixelements}
\begin{split} &\mathcal{O}^n_{p_1 \cdots p_{n} q_1  \cdots q_{n} r_1  \cdots r_{n} s_1  \cdots s_{n}} := \\ &\left(\left[\bra{p_1}_{i_1}\bra{r_1}_{j_1}\right]\otimes\cdots\otimes\left[\bra{p_{n}}_{i_{n}}\bra{r_{n}}_{j_{n}}\right] \otimes \bra{\lambda}\cdots\right)\calo\left(\left[\ket{q_1}_{i_1}\ket{s_1}_{j_1}\right]\otimes\cdots\otimes\left[\ket{q_{n}}_{i_{n}}\ket{s_{n}}_{j_1}\right]\otimes \ket{\lambda} \cdots \right) .
\end{split}
\end{align}
The $\otimes\ket{\lambda}\bra{\lambda}\cdots$ means that in all collections past the $n$th collection, $\calo_n$ acts as the projector $\ket{\lambda}\bra{\lambda}$. Likewise, $\otimes \ket{\lambda}\cdots$  means that in every collection past the $n$th collection, the qutrits are in the state $\ket{\lambda}$. Each of the indices $p_k$,$q_k$,$r_k$,$s_k$ ($k \in \{ 1,2,\ldots,n\}$) are valued in $\{0,1,2\}$ and denote an orthonormal basis vector of a single qutrit.

For each $\calo_n$, define the following:
\begin{align}
\begin{split}
\hat{\calo}_n := &\sum_{p_1 \cdots p_{n} q_1 \cdots q_{n} r_1 \cdots r_{n} s_1 \cdots s_{n}} \left( \mathcal{O}^n_{p_1 \cdots p_{n} q_1  \cdots q_{n} r_1  \cdots r_{n} s_1  \cdots s_{n}} 
\right.\\ \times 
&\left.\left[\ket{p_1}_{i_1 }\bra{q_1}_{i_1 } \otimes \ket{r_1}_{j_1 }\bra{s_1}_{j_1}\right]  
\otimes \cdots \otimes \left[\ket{p_{n}}_{i_{n} }\bra{q_{n}}_{i_{n} } \otimes \ket{r_{n}}_{j_{n} }\bra{s_{n}}_{j_{n}}\right]   \otimes I  \cdots \right).
\end{split}
\end{align}
The projector $\ket{\lambda}\bra{\lambda}$ in equation \eqref{eq:onoperator} has been replaced by the identity operator. For any vector $\ket{\psi} \in p\calh_{code}$, we have $\lim_{n \rightarrow \infty} \calo_n \ket{\psi} = \lim_{n \rightarrow \infty} \hat{\calo}_n \ket{\psi}$.  Also, $||\calo_n|| = ||\hat{\calo}_n|| \, \forall n \in \mathbb{N}$, so the sequence of norms $\{||\hat{\calo}_n||\}$ is bounded because the sequence of norms $\{||\calo_n||\}$ is bounded. Because $p\calh_{code}$ is dense in $\calh_{code}$, $\lim_{n \rightarrow \infty} \hat{\calo}_n$ converges strongly to $\calo$ by Theorem \ref{thm:convergence}.

Now, we assume that $\calo \in A_{code}^\prime$. The commutant $A_{code}^\prime$ is a von Neumann algebra because it is the commutant of a $\star$-algebra containing the identity. This assumption restricts what the matrix elements of equation \eqref{eq:matrixelements} can be. By considering the commutator of $\calo$ with operators in $A_{code}$, one finds that $\hat{\calo}_n$ can be written as
\begin{equation}
\label{eq:acodeprime}
\hat{\calo}_n = \sum_{r_1 \cdots r_{n} s_1 \cdots s_{n}} \left( \hat{\mathcal{O}}^n_{r_1  \cdots r_{n} s_1  \cdots s_{n}} 
\right. \times 
\left.\left[I_{i_1 } \otimes \ket{r_1}_{j_1 }\bra{s_1}_{j_1}\right]  
\otimes \cdots \otimes \left[I_{i_{n} } \otimes \ket{r_{n}}_{j_{n} }\bra{s_{n}}_{j_{n}}\right]   \otimes I  \cdots \right),
\end{equation}
for some coefficients $\hat{\mathcal{O}}^n_{r_1  \cdots r_{n} s_1  \cdots s_{n}}$. Thus, we have demonstrated that every operator $\calo \in A_{code}^\prime$ can be expressed as $\calo = \slim_{n \rightarrow \infty} \hat{\calo}_n$ where each $\hat{\calo}_n$ may be written as above. Furthermore, every such strong limit is clearly in $A_{code}^\prime$.

By comparing equation \eqref{eq:acodeprime} with equation \eqref{eq:opinacode}, it is clear the set of operators in $A_{code}$ together with strong limits of sequences in $A_{code}$ (which we called $S$ in the previous subsection) is a von Neumann algebra. In fact, it is the smallest strongly closed subset of $\calb(\calh_{code})$ containing $A_{code}$, which is $M_{code}$ by definition. This is because the strong closure of $A_{code}$ must at least contain all strongly convergent sequences of operators in $A_{code}$. Hence, every operator in $M_{code}$ may be written as a strong limit of a sequence in $A_{code}$.

Because $M_{code} = \overline{A_{code}} = A_{code}^{\prime \prime}$, we have that $M_{code}^\prime = A_{code}^{\prime \prime \prime} = \overline{A_{code}^{\prime}} = A_{code}^\prime$.
Thus, we see that $M_{code}^\prime$ may be constructed in the same way as $M_{code}$, except operators in $M_{code}^\prime$ only act nontrivially on the $j$ qutrit in Figure \ref{fig:twotofourexample}. 

From our explicit construction of $M_{code}^\prime$, we see that $M_{code}$ and $M_{code}^\prime$ are both factors as $M_{code} \cap M_{code}^\prime$ only consists of scalar multiples of the identity.


\subsection{Definition of $M_{phys}$ and $M_{phys}^\prime$}

Recall that under the isometry in equation \eqref{eq:onefactorisometry}, the code reference state $\ket{\lambda}$ on the black qutrits $i,j$ in Figure \ref{fig:twotofourexample} is mapped to the state of four white qutrits where the qutrit pairs $\tilde{a},\tilde{d}$ and $\tilde{b},\tilde{e}$ are each in the state $\ket{\lambda}$. Thus, both the physical and code pre-Hilbert spaces consist of states of infinitely many qutrit pairs, all but finitely many of which are in the reference state $\ket{\lambda}$. It follows that $\calh_{code}$ and $\calh_{phys}$ are constructed in the exact same way. We can define a von Neumann algebra $M_{phys} \subset \calb(\calh_{phys})$ acting on the white qutrits $\tilde{a},\tilde{b}$ in each collection in the same way we defined $M_{code}$ to act on the black qutrit $i$. Likewise, the commutant of $M_{phys}$, denoted by $M_{phys}^\prime$, acts on white qutrits $\tilde{d},\tilde{e}$. Our setup is summarized in Figure \ref{fig:qutritcodes}.

\section{Definition of the tensor network map}
\label{sec:tensornetworkmap}
Having defined $M_{code}$ and $M_{phys}$, we define a linear map from $M_{code}$ into $M_{phys}$. An operator $\calo \in M_{code}$ is mapped to $\tilde{\calo} \in M_{phys}$. We want the following to hold for all $\ket{\Psi} \in \calh_{code}$:
\begin{equation} 
u \calo \ket{\Psi} = \tilde{\calo} u \ket{\Psi}, \quad u \calo^\dagger \ket{\Psi} = \tilde{\calo}^\dagger u \ket{\Psi} .
\label{eq:niceproperty}
\end{equation}
We now describe how to construct this map (which we call the ``tensor network map,'' not to be confused with the map $u$).

\subsection{How the tensor network map acts on $A_{code}$}

We first define how the tensor network map acts on operators in $A_{code}$ before generalizing its definition to $M_{code}$. The operator $a^{(N)}$ in equation \eqref{eq:opinacode} is mapped to $\tilde{a}^{(N)}$, an operator that acts on $\calh_{phys}$. The result is
\begin{align}
\begin{split}
\label{eq:opinaphys}
\tilde{a}^{(N)} = \sum_{p_1 \cdots p_N q_1  \cdots q_N} & a_{p_1  \cdots p_N q_1  \cdots q_N} \times \left[U_{\tilde{a}_1 \tilde{b}_1}\ket{p_1}_{\tilde{a}_1}\bra{q_1}_{\tilde{a}_1}  U_{\tilde{a}_1\tilde{b}_1}^\dagger \otimes I_{\tilde{d}_1\tilde{e}_1}\right] \\
& \otimes \cdots \otimes \left[U_{\tilde{a}_N \tilde{b}_N}\ket{p_N}_{\tilde{a}_N}\bra{q_N}_{\tilde{a}_N}  U_{\tilde{a}_N \tilde{b}_N}^\dagger \otimes I_{\tilde{d}_N \tilde{e}_N}\right] \otimes I \cdots ,
\end{split}
\end{align}
where $U$ is defined in equation \eqref{eq:unitary}, and the subscripts refer to the specific white qutrits that $U$ is acting on (see Figure \ref{fig:twotofourexample}). Given equation \eqref{eq:opinaphys}, which shows how $\tilde{a}^{(N)}$ acts on vectors in $p\calh_{phys}$, the domain of $\tilde{a}^{(N)}$ may be extended to all of $\calh_{phys}$ by demanding that $\tilde{a}^{(N)}$ is a bounded operator and invoking Theorem \ref{thm:BLT}. Because $\tilde{a}^{(N)}$ acts trivially on the qutrits $\tilde{d},\tilde{e}$ in each collection, $\tilde{a}^{(N)} \in M_{phys}$. 

Equation \eqref{eq:opinaphys} simply amounts to applying the map in equation \eqref{eq:onefactormap} for a finite number of collections. It follows that for $a,b \in A_{code}$, $\alpha,\beta \in \mathbb{C}$, and $\ket{\Psi} \in \calh_{code}$, the tensor network map has the following properties:
\begin{align}
& 1.\ \text{Bulk Reconstruction}:\quad 
u a \ket{\Psi} = \tilde{a} u \ket{\Psi}, \label{eq:acodebulkrecon}
\\
& 2.\ \text{Commutativity with hermitian conjugation}:\quad 
\widetilde{a^\dagger} = \tilde{a}^\dagger, \label{eq:acodecommutedagger}
\\
& 3.\ \text{Commutativity with multiplication}:\quad 
\widetilde{ab} = \tilde{a}\tilde{b}, \label{eq:acodecommutemultiplication}
\\
& 4.\ \text{Linearity}:\quad 
\widetilde{\alpha a + \beta b} = \alpha \tilde{a} + \beta \tilde{b}, \label{eq:acodelinearity}
\\
& 5.\ \text{Norm preservation}:\quad 
||a|| = ||\tilde{a}|| \label{eq:acodenormpreserve}.
\end{align}

We will prove these properties for all operators in $M_{code}$ in Section \ref{sec:tensornetworkproperties}.

\subsection{How the tensor network map acts on $M_{code}$}

Now that we specified how the tensor network map acts on $A_{code}$, we need to specify how it acts on $M_{code}$. Let $\{a_n\} \in A_{code}$ be a strongly convergent sequence of operators. The image of each $a_n$ under the tensor network map is $\tilde{a}_n \in \calb(\calh_{phys})$. We will show that $\{\tilde{a}_n\}$ is a strongly convergent sequence. Then, we will extend the definition of the tensor network map by saying that the strong limit $(\slim_{n \rightarrow \infty} a_n) \in M_{code}$ is mapped to the strong limit $(\slim_{n \rightarrow \infty} \tilde{a}_n) \in M_{phys}$. We will then prove that this map satisfies equation \eqref{eq:niceproperty}.

The fact that $\slim_{n \rightarrow \infty} a_n$ converges means that the sequence of norms $\{||\tilde{a}_n||\}$ is bounded from above because $||a_n|| = ||\tilde{a}_n|| \, \forall n \in \mathbb{N}$. From Theorem \ref{thm:convergence}, if $\lim_{n \rightarrow \infty} \tilde{a}_n \ket{\psi}$ converges for all $\ket{\psi} \in p\calh_{phys}$, then $\lim_{n \rightarrow \infty} \tilde{a}_n \ket{\Psi}$ converges for all $\ket{\Psi} \in \calh_{phys}$ since $p\calh_{phys}$ is dense in $\calh_{phys}$.  The next theorem is necessary to show that $\lim_{n \rightarrow \infty} \tilde{a}_n \ket{\psi}$ converges for all $\ket{\psi} \in p\calh_{phys}$.

\begin{thm}
	\label{thm:codephysequiv}
	For any two vectors $\ket{\tilde{\psi}_1}, \ket{\tilde{\psi}_2} \in p\calh_{phys}$, we may define a finite number of vectors $\ket{\eta_i},\ket{\chi_i} \in p \calh_{code}$, ($i \in \{ 1,2,\ldots,Q \}$ for some $Q \in \mathbb{N}$) such that for any operator $\tilde{a}^{(N)} \in M_{phys}$ that may be written as the tensor network map image of some $a^{(N)} \in A_{code}$, we have that
	\begin{equation} \label{eq:phystocode}
	\braket{\tilde{\psi}_1|\tilde{a}^{(N)}|\tilde{\psi}_2} = \sum_{i = 1}^Q\braket{\eta_i|a^{(N)}|\chi_i}.
	\end{equation}
	Furthermore, if $\ket{\tilde{\psi}_1} = \ket{\tilde{\psi}_2}$, then we may take $\ket{\eta_i} = \ket{\chi_i} \, \, \forall i$. 
\end{thm}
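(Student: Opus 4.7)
The plan is to apply the collection-wise unitary
\[ \mathcal{W} := \bigotimes_{k=1}^{\infty} U_{\tilde{a}_k\tilde{b}_k}U_{\tilde{d}_k\tilde{e}_k} \]
to reshuffle $\tilde{a}^{(N)}$ into a purely ``$\tilde{a}$-sided'' operator, expand the rotated physical states in an auxiliary-qutrit basis, and read off the code-side vectors from the surviving coefficients. Note that $\mathcal{W}$ is well-defined on $p\calh_{phys}$: applying equation \eqref{eq:codetophysunitary} with $\ket{\psi} = \ket{\lambda}$ yields $U_{\tilde{a}\tilde{b}}U_{\tilde{d}\tilde{e}}\ket{\lambda\lambda} = \ket{\lambda\lambda}$, so $\mathcal{W}$ fixes the physical reference state on every collection and preserves the ``finite-rank'' structure of $p\calh_{phys}$.

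A direct computation using equation \eqref{eq:opinaphys} and the fact that $\ket{p_k}\bra{q_k}_{\tilde{a}_k}$ commutes with $U_{\tilde{d}_k\tilde{e}_k}$ gives $\mathcal{W}^\dagger\tilde{a}^{(N)}\mathcal{W} = B_N$, where
\[ B_N := \sum_{p_1 \cdots p_N q_1 \cdots q_N} a_{p_1 \cdots p_N q_1 \cdots q_N} \bigotimes_{k=1}^N \left[\ket{p_k}\bra{q_k}_{\tilde{a}_k}\otimes I_{\tilde{b}_k\tilde{d}_k\tilde{e}_k}\right] \otimes I\cdots. \]
Choose $M \in \mathbb{N}$ large enough that both $\ket{\tilde{\psi}_1},\ket{\tilde{\psi}_2}$ equal $\ket{\lambda\lambda}$ on every collection $k > M$, and fix an orthonormal basis $\{\ket{\mu}\}_{\mu=0}^{8}$ of the two-qutrit Hilbert space with $\ket{\mu=0} = \ket{\lambda}$. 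Expanding the $\tilde{b}_k\tilde{e}_k$ factor in this basis for $k = 1,\ldots,M$ gives
\[ \mathcal{W}^\dagger\ket{\tilde{\psi}_j} = \sum_{\mu_1,\ldots,\mu_M = 0}^{8}\ket{\psi_j^{\{\mu_k\}}}\otimes\bigotimes_{k=1}^M\ket{\mu_k}_{\tilde{b}_k\tilde{e}_k}, \quad j=1,2, \]
where each $\ket{\psi_j^{\{\mu_k\}}}$ lives on the $\tilde{a}_k\tilde{d}_k$ qutrits for $k \leq M$ and carries $\ket{\lambda}_{\tilde{a}_k\tilde{d}_k}$ on $k > M$ (forced by $\mu_k = 0$ there, since $\mathcal{W}$ fixes $\ket{\lambda\lambda}$). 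Because $B_N$ acts as the identity on every $\tilde{b}_k\tilde{e}_k$, orthonormality of $\{\ket{\mu}\}$ collapses the would-be double sum to $\sum_{\mu_1,\ldots,\mu_M} \braket{\psi_1^{\{\mu_k\}}|B_N|\psi_2^{\{\mu_k\}}}$.

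Identifying $\tilde{a}_k \leftrightarrow i_k$ and $\tilde{d}_k \leftrightarrow j_k$ turns $B_N$ into precisely $a^{(N)}$ and reads each $\ket{\psi_j^{\{\mu_k\}}}$ as a vector in $p\calh_{code}$, with the code reference state $\ket{\lambda}_{i_k j_k}$ on code collections $k > M$. Setting $\ket{\eta_i} := \ket{\psi_1^{\{\mu_k\}}}$ and $\ket{\chi_i} := \ket{\psi_2^{\{\mu_k\}}}$ with $i$ indexing the multi-index $(\mu_1,\ldots,\mu_M)$ yields equation \eqref{eq:phystocode} with $Q = 9^M$; in the case $\ket{\tilde{\psi}_1} = \ket{\tilde{\psi}_2}$ the two expansions coincide, so $\ket{\eta_i} = \ket{\chi_i}$ automatically. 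The main obstacle will be recognizing that using the \emph{global} $\mathcal{W}$ rather than a finite truncation is what makes the code-side vectors universal in $N$: on each collection $M < k \leq N$ the forced $\mu_k = 0$ together with the identity $\braket{\lambda|(\ket{p_k}\bra{q_k}\otimes I)|\lambda} = \tfrac{1}{3}\delta_{p_k q_k}$ (valid on either the physical $\tilde{a}_k\tilde{d}_k$ or the code $i_k j_k$ qutrit pair) yields matching contributions on both sides, so the same list $\ket{\eta_i},\ket{\chi_i}$ serves for every $N$.
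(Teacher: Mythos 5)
Your proposal is correct and is essentially the paper's own argument in lightly different packaging: the paper's orthonormal vectors $\ket{\{r,\ell,s\}}$ are exactly the images of product basis vectors under your collection-wise unitary $\mathcal{W}$, its three displayed single-collection identities are precisely the statement $\mathcal{W}^\dagger\tilde{a}^{(N)}\mathcal{W}=B_N$ together with the $\bra{\lambda}\cdots\ket{\lambda}$ matching on collections beyond $M$, and both arguments collapse the double sum via orthogonality in the $\tilde{b}\tilde{e}$ index to land on $Q=9^M$. Your closing observation that the vectors must be chosen independently of $N$ is the same point the paper handles implicitly through its reference-state computation, so no gap remains.
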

\begin{proof}
Choose $M \in \mathbb{N}$ such that for both $\ket{\tilde{\psi}_1}$ and $\ket{\tilde{\psi}_2}$, the qutrits in the $(M+1)$th collection and beyond are in the reference state $\ket{\lambda}$. Consider the following set of orthonormal vectors: 
\begin{align}
\begin{split} 
\ket{\{r,\ell,s\}} =& \left[U_{\tilde{a}_1 \tilde{b}_1}U_{\tilde{d}_1 \tilde{e}_1}\ket{r_1}_{\tilde{a}_1}\ket{\ell_1}_{\tilde{d}_1}  \ket{s_1}_{\tilde{b}_1  \tilde{e}_1}\right] \otimes \\
& \quad \cdots \otimes \left[ U_{\tilde{a}_M \tilde{b}_M}U_{\tilde{d}_M \tilde{e}_M} \ket{r_M}_{\tilde{a}_M} \ket{\ell_M}_{\tilde{d}_M}  \ket{s_M}_{\tilde{b}_M  \tilde{e}_M}\right] \otimes \ket{\lambda \lambda} \cdots , 
\end{split} 
\end{align}
where the labels $\tilde{a}_k$,$\tilde{b}_k$,$\tilde{d}_k$,$\tilde{e}_k$ ($k \in \{1,2,\ldots,M\}$) refer to the qutrits in the $k$th collection (see Figure \ref{fig:twotofourexample}). Each $r_k$ and $\ell_k$ index is valued in $\{0,1,2\}$ and specifies an orthonormal basis vector of one qutrit. Each $s_k$ index is valued in $\{0,1,2,\ldots,8\}$ and specifies an orthonormal basis vector of two qutrits. The $\ket{\lambda \lambda}\cdots$ means that in all collections past the $M$th collection, the qutrits are in the physical reference state $\ket{\lambda}_{\tilde{a} \tilde{d}} \ket{\lambda}_{\tilde{b} \tilde{e}}$.

We may then write $\ket{\tilde{\psi}_1}, \ket{\tilde{\psi}_2}$ as finite linear combinations of the above vectors:
\begin{equation}
\ket{\tilde{\psi}_1} = \sum_{\{s\}} \sum_{ \{r,\ell\} } c^1_{\{r,\ell,s\}} \ket{\{r,\ell,s\}},
\quad
\ket{\tilde{\psi}_2} = \sum_{\{s\}} \sum_{ \{r,\ell\} } c^2_{\{r,\ell,s\}} \ket{\{r,\ell,s\}},
\end{equation}
where $c^1_{\{r,\ell,s\}}$ and $c^2_{\{r,\ell,s\}}$ are $\mathbb{C}$-valued coefficients.
Note that  $\braket{\{r,\ell,s\}|\tilde{a}^{(N)}|\{r^\prime,\ell^\prime,s^\prime\}} = 0$ if $s_k \neq s^\prime_k$ for any $k \in \{ 1,2,\ldots,M\}$. Thus, we write
\begin{equation}
\braket{\tilde{\psi}_1|\tilde{a}^{(N)}|\tilde{\psi}_2} = 
\sum_{\{s\}} \sum_{ \{r^\prime,\ell^\prime\} }
\sum_{ \{r,\ell\} } (c^1_{\{r^\prime,\ell^\prime,s\}})^* \braket{\{r^\prime,\ell^\prime,s\}|\tilde{a}^{(N)}|\{r,\ell,s\}}
 c^2_{\{r,\ell,s\}}.
\end{equation}
To calculate $\braket{\{r^\prime,\ell^\prime,s\}|\tilde{a}^{(N)}|\{r,\ell,s\}}$, we must calculate how each term in the sum in equation \eqref{eq:opinaphys} acts on each collection separately. The next three equations apply for a single collection. For simplicity, we have suppressed the subscripts labeling the collection.

\begin{align}
\begin{split}
1.\ &\left[U_{\tilde{a} \tilde{b}}U_{\tilde{d} \tilde{e}}\ket{r^\prime}_{\tilde{a}}  \ket{\ell^\prime}_{\tilde{d}}  \ket{s^\prime}_{\tilde{b}  \tilde{e}}\right]^\dagger 
\left[U_{\tilde{a} \tilde{b}}\ket{p}_{\tilde{a}}\bra{q}_{\tilde{a}}  U_{\tilde{a} \tilde{b}}^\dagger \otimes I_{\tilde{d} \tilde{e}}\right]
\left[U_{\tilde{a} \tilde{b}}U_{\tilde{d} \tilde{e}}\ket{r}_{\tilde{a}}\ket{\ell}_{\tilde{d}}  \ket{s}_{\tilde{b}  \tilde{e}}\right]
\\
&\quad\quad=\left[\bra{r^\prime}_{\tilde{a}}\bra{\ell^\prime}_{\tilde{d}}  \bra{s^\prime}_{\tilde{b}  \tilde{e}}\right]
\left[\ket{p}_{\tilde{a}}\bra{q}_{\tilde{a}}  \otimes I_{\tilde{b} \tilde{d} \tilde{e}}\right]
\left[\ket{r}_{\tilde{a}}\ket{\ell}_{\tilde{d}}  \ket{s}_{\tilde{b}  \tilde{e}}\right],
\end{split}\\
\begin{split}
2.\ &\left[U_{\tilde{a} \tilde{b}}U_{\tilde{d} \tilde{e}}\ket{r^\prime}_{\tilde{a}}\ket{\ell^\prime}_{\tilde{d}}  \ket{s^\prime}_{\tilde{b}  \tilde{e}}\right]^\dagger
I_{\tilde{a} \tilde{b} \tilde{d} \tilde{e}} \left[U_{\tilde{a} \tilde{b}}U_{\tilde{d} \tilde{e}}\ket{r}_{\tilde{a}}\ket{\ell}_{\tilde{d}}  \ket{s}_{\tilde{b}  \tilde{e}}\right]
=\left[\bra{r^\prime}_{\tilde{a}}\bra{\ell^\prime}_{\tilde{d}}  \bra{s^\prime}_{\tilde{b}  \tilde{e}}\right]
\left[\ket{r}_{\tilde{a}}\ket{\ell}_{\tilde{d}}  \ket{s}_{\tilde{b}  \tilde{e}}\right] ,
\end{split} \\
\begin{split}
3.\ &\bra{\lambda}_{\tilde{a} \tilde{d}}\bra{\lambda}_{\tilde{b} \tilde{e}}
\left[U_{\tilde{a} \tilde{b}}\ket{p}_{\tilde{a}}\bra{q}_{\tilde{a}}  U_{\tilde{a} \tilde{b}}^\dagger \otimes I_{\tilde{d} \tilde{e}}\right]
\ket{\lambda}_{\tilde{a} \tilde{d}}\ket{\lambda}_{\tilde{b} \tilde{e}}=\bra{\lambda}_{\tilde{a} \tilde{d}}
\left[\ket{p}_{\tilde{a}}\bra{q}_{\tilde{a}}  \otimes I_{\tilde{b} \tilde{d} \tilde{e}}\right]
\ket{\lambda}_{\tilde{a} \tilde{d}}.
\end{split}
\end{align}

Next, we define the following vectors in $p\calh_{code}$
\begin{equation} 
\ket{\{r,\ell\}}  := \left[\ket{r_1}_{i_1} \ket{\ell_1}_{j_1}\right] \otimes \left[\ket{r_2}_{i_2} \ket{\ell_2}_{j_2}\right] \otimes \cdots \otimes \left[\ket{r_M}_{i_M} \ket{\ell_M}_{j_M}\right] \otimes \ket{\lambda} \cdots. 
\end{equation}
It follows that
\begin{equation}
\braket{\tilde{\psi}_1|\tilde{a}^{(N)}|\tilde{\psi}_2} = 
\sum_{\{s\}} \sum_{ \{r^\prime,\ell^\prime\} }
\sum_{ \{r,\ell\} } (c^1_{\{r^\prime,\ell^\prime,s\}})^* \braket{\{r^\prime,\ell^\prime\}|a^{(N)}|\{r,\ell\}}
c^2_{\{r,\ell,s\}}.
\end{equation}
Then, we can define the new vectors in $p\calh_{code}$
\begin{equation}
\ket{\chi_{ \{s\} }} := \sum_{ \{r\} \{\ell \}}\ket{\{r,\ell\}} c^2_{\{r,\ell,s\}}, \quad
\ket{\eta_{ \{s\} }} := \sum_{ \{r\} \{\ell \}}\ket{\{r,\ell\}} c^1_{\{r,\ell,s\}},
\end{equation}
so that $\braket{\tilde{\psi}_1|\tilde{a}^{(N)}|\tilde{\psi}_2}$ can be expressed as
\begin{equation}
\braket{\tilde{\psi}_1|\tilde{a}^{(N)}|\tilde{\psi}_2} = 
\sum_{\{s\}} \braket{\eta_{\{s\}}|a^{(N)}|\chi_{\{s\}}}.
\end{equation}
This demonstrates that we can express $\braket{\tilde{\psi}_1|\tilde{a}^{(N)}|\tilde{\psi}_2}$ as in equation \eqref{eq:phystocode} for $Q = 9^M$.
\end{proof}
Given any $\ket{\tilde{\psi}} \in p\calh_{phys}$, Theorem \ref{thm:codephysequiv} asserts that we may choose a finite family of vectors $\ket{\psi_i} \in p\calh_{code}$ ($i \in \{ 1,2,\ldots,Q\}$ for some $Q \in \mathbb{N}$ ) such that, for any $n,m \in \mathbb{N}$, \begin{equation}||(\tilde{a}_n - \tilde{a}_m)\ket{\tilde{\psi}}||^2 = \sum_{i = 1}^Q||(a_n - a_m)\ket{\psi_i}||^2.\end{equation} This means that if $\{a_n \ket{\psi_i}\}$ is a Cauchy sequence for each $i$, (which it is by assumption) then $\{\tilde{a}_n \ket{\tilde{\psi}}\}$ is also a Cauchy sequence. This shows that $\lim_{n \rightarrow \infty} \tilde{a}_n\ket{\tilde{\psi}}$ converges for any $\ket{\tilde{\psi}} \in p\calh_{phys}$.   
 Thus, the strong limit $\slim_{n \rightarrow \infty}\tilde{a}_n$ exists and
defines an operator, which is the \emph{definition} of the image under the tensor network map of the strong limit $\slim_{n \rightarrow \infty} a_n$. By the definition of $M_{phys}$, it follows that $\slim_{n \rightarrow \infty} \tilde{a}_n \in M_{phys}$.

 Suppose that the sequences $\{a_n\} \in A_{code}$ and $\{b_n\} \in A_{code}$ converge strongly to the same operator $\calo$. Suppose that $\slim_{n \rightarrow \infty} \tilde{a}_n = \tilde{\calo}_1$ and $\slim_{n \rightarrow \infty} \tilde{b}_n=\tilde{\calo}_2$. Then $\slim_{n \rightarrow \infty} (a_n - b_n) = 0$, which implies that $\slim_{n \rightarrow \infty} \widetilde{a_n-b_n} = \slim_{n \rightarrow \infty} (\tilde{a}_n - \tilde{b}_n) = 0$. Hence, $\tilde{\calo}_1 - \tilde{\calo}_2 = 0$. Thus, the tensor network map is a well-defined map from $M_{code}$ into $M_{phys}$.

\subsection{How the tensor network map acts on $M_{code}^\prime$}

By construction, the tensor network map is a map from operators in $M_{code}$ into $M_{phys}$. Due to the symmetry of the tensor network in Figure \ref{fig:twotofourexample}, we can also define the tensor network map on operators in $M_{code}^\prime$, which are mapped into $M_{phys}^\prime$ in a completely analogous way. 

\section{Properties of the Tensor Network Map}
\label{sec:tensornetworkproperties}
In this section, we prove that equations $\eqref{eq:acodebulkrecon}$ to $\eqref{eq:acodenormpreserve}$ hold for all operators in $M_{code}$.

\subsection{Theorems on strong and weak convergence}

The following theorems will be useful in proving some properties of the tensor network map.

\begin{thm}
	\label{thm:strongweakconvergence}
	Suppose that for a sequence $\{a_n\} \in A_{code}$, $\lim_{n \rightarrow \infty}\braket{\Psi|a_n|\Phi}=0$ for any $\ket{\Psi},\ket{\Phi} \in \calh_{code}$. Suppose that the sequence of norms $\{||a_n||\}$ is bounded from above. Let $\tilde{a}_n$ be the image under the tensor network map of $a_n$.  Then $\lim_{n \rightarrow \infty}\braket{\tilde{\Theta}|\tilde{a}_n|\tilde{\Phi}} = 0$ for any $\ket{\tilde{\Theta}},\ket{\tilde{\Phi}} \in \calh_{phys}$.
\end{thm}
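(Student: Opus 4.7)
The plan is to reduce the statement for general vectors in $\calh_{phys}$ to the case of vectors in the dense subspace $p\calh_{phys}$, and then to invoke Theorem \ref{thm:codephysequiv} to reexpress the physical matrix element as a finite sum of code matrix elements, each of which vanishes in the limit by hypothesis.

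First I would establish a uniform bound on the norms of the physical operators. By the norm-preservation property \eqref{eq:acodenormpreserve}, $\|\tilde{a}_n\| = \|a_n\|$ for every $n$, so the hypothesis that $\{\|a_n\|\}$ is bounded yields a constant $K > 0$ such that $\|\tilde{a}_n\| \leq K$ for all $n$. This uniform bound is exactly what lets us perform the standard density argument.

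Next, given $\ket{\tilde{\Theta}}, \ket{\tilde{\Phi}} \in \calh_{phys}$ and $\epsilon > 0$, I would choose $\ket{\tilde{\theta}}, \ket{\tilde{\phi}} \in p\calh_{phys}$ with $\|\ket{\tilde{\Theta}}-\ket{\tilde{\theta}}\|$ and $\|\ket{\tilde{\Phi}}-\ket{\tilde{\phi}}\|$ sufficiently small, and split
\begin{align*}
|\braket{\tilde{\Theta}|\tilde{a}_n|\tilde{\Phi}}| \leq |\braket{\tilde{\Theta}-\tilde{\theta}|\tilde{a}_n|\tilde{\Phi}}| + |\braket{\tilde{\theta}|\tilde{a}_n|\tilde{\Phi}-\tilde{\phi}}| + |\braket{\tilde{\theta}|\tilde{a}_n|\tilde{\phi}}|.
\end{align*}
Using Cauchy--Schwarz and the uniform bound $\|\tilde{a}_n\| \leq K$, the first two terms are bounded by $K\|\ket{\tilde{\Theta}}-\ket{\tilde{\theta}}\|\cdot\|\ket{\tilde{\Phi}}\|$ and $K\|\ket{\tilde{\theta}}\|\cdot\|\ket{\tilde{\Phi}}-\ket{\tilde{\phi}}\|$ respectively, which can be made smaller than $\epsilon/3$ each by the choice of $\ket{\tilde{\theta}}$ and $\ket{\tilde{\phi}}$.

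It remains to handle the third term, where both arguments lie in $p\calh_{phys}$. Here I apply Theorem \ref{thm:codephysequiv}: since each $a_n \in A_{code}$ and $\tilde{a}_n$ is its tensor network map image, there exist finitely many vectors $\ket{\eta_i}, \ket{\chi_i} \in p\calh_{code}$ (for $i \in \{1,\ldots,Q\}$ with $Q$ independent of $n$, since the construction in Theorem \ref{thm:codephysequiv} depends only on $\ket{\tilde{\theta}}$ and $\ket{\tilde{\phi}}$, not on the operator) such that
\begin{align*}
\braket{\tilde{\theta}|\tilde{a}_n|\tilde{\phi}} = \sum_{i=1}^Q \braket{\eta_i|a_n|\chi_i}.
\end{align*}
By hypothesis, each of the $Q$ summands tends to zero as $n \to \infty$, and because $Q$ is a fixed finite number, the finite sum also tends to zero. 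Thus for $n$ sufficiently large the third term is also bounded by $\epsilon/3$, yielding $|\braket{\tilde{\Theta}|\tilde{a}_n|\tilde{\Phi}}| < \epsilon$ for large $n$. The main (modest) obstacle is verifying that Theorem \ref{thm:codephysequiv} applies with $Q$ independent of $n$; this is immediate since the construction of the $\ket{\eta_i},\ket{\chi_i}$ in its proof depends only on the expansions of $\ket{\tilde{\theta}},\ket{\tilde{\phi}}$ in the orthonormal basis $\ket{\{r,\ell,s\}}$, not on $N$ or the coefficients of $a^{(N)}$.
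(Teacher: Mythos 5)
Your proposal is correct and follows essentially the same route as the paper's proof: approximate $\ket{\tilde{\Theta}},\ket{\tilde{\Phi}}$ by vectors in $p\calh_{phys}$, use the uniform bound on $\{\|\tilde{a}_n\|\}$ (via $\|\tilde{a}_n\|=\|a_n\|$ for $a_n\in A_{code}$) to control the approximation errors, and apply Theorem \ref{thm:codephysequiv} to rewrite the remaining matrix element as a fixed finite sum of code matrix elements that vanish by hypothesis. Your explicit remark that $Q$ depends only on $\ket{\tilde{\theta}},\ket{\tilde{\phi}}$ and not on $n$ is a point the paper leaves implicit, and it is correctly justified.
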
 
\begin{proof}


	Let $\{\ket{\tilde{\theta}_\ell}\},\{\ket{\tilde{\phi}_m}\} \in p\calh_{phys}$ be Cauchy sequences that converge to $\ket{\tilde{\Theta}},\ket{\tilde{\Phi}}\in \calh_{phys}$ respectively. We may compute
	\begin{equation} 
	|\braket{\tilde{\Theta}|\tilde{a}_n|\tilde{\Phi}}| \leq |\braket{\tilde{\Theta}|\tilde{a}_n|\tilde{\Phi} - \tilde{\phi}_m}|+|\braket{\tilde{\Theta}-\tilde{\theta}_\ell|\tilde{a}_n|\tilde{\phi}_m}|+|\braket{\tilde{\theta}_\ell|\tilde{a}_n|\tilde{\phi}_m}|,
	\end{equation}
	\begin{equation} 
	|\braket{\tilde{\Theta}|\tilde{a}_n|\tilde{\Phi}}| \leq ||\ket{\tilde{\Theta}}|| \cdot ||\tilde{a}_n|| \cdot ||\ket{\tilde{\Phi}} - \ket{\tilde{\phi}_m}||+||\ket{\tilde{\Theta}}-\ket{\tilde{\theta}_\ell}|| \cdot ||\tilde{a}_n|| \cdot ||\ket{\tilde{\phi}_m}||+|\braket{\tilde{\theta}_\ell|\tilde{a}_n|\tilde{\phi}_m}|,
	\end{equation}
	\begin{equation}
	|\braket{\tilde{\Theta}|\tilde{a}_n|\tilde{\Phi}}| \leq K_1 ||\ket{\tilde{\Phi}} - \ket{\tilde{\phi}_m}||+K_2 ||\ket{\tilde{\Theta}}-\ket{\tilde{\theta}_\ell}|| +|\braket{\tilde{\theta}_\ell|\tilde{a}_n|\tilde{\phi}_m}|,
	\label{eq:7pt3}
	\end{equation}
	where $K_1,K_2$ are some positive real numbers and we used the fact that the sequence $\{||\tilde{a}_n||\}$ is bounded from above. First, fix $m,\ell$ large enough so that the first two norms on the r.h.s. of equation \eqref{eq:7pt3} are each less than $\frac{\epsilon}{3}$. Due to Theorem \ref{thm:codephysequiv} and the assumption that $\wlim_{n \rightarrow \infty} a_n = 0$, we have that $\lim_{n \rightarrow \infty}\braket{\tilde{\theta}_\ell|\tilde{a}_n|\tilde{\phi}_m} = 0$. Hence, we can choose $N \in \mathbb{N}$ such that for $n > N$, the third norm on the r.h.s. of equation \eqref{eq:7pt3} is less than $\frac{\epsilon}{3}$. We conclude that $\lim_{n \rightarrow \infty} \braket{\tilde{\Theta}|\tilde{a}_n|\tilde{\Phi}} = 0$.
\end{proof}

\begin{thm}
\label{thm:conjugateweakconvergence}
	Let $\{a_n\} \in A_{code}$ be a strongly convergent sequence of operators. Suppose that $\slim_{n \rightarrow \infty} a_n = \calo$ for some $\calo \in M_{code}$. Then $\wlim_{n \rightarrow \infty} a_n^\dagger= \calo^\dagger$.
\end{thm}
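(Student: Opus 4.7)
The plan is to observe that Theorem \ref{thm:conjugateweakconvergence} is essentially a direct consequence of two elementary facts: strong convergence implies weak convergence, and weak convergence is preserved under taking adjoints because the inner product is conjugate-symmetric. No deep structural input about the tensor network or about $A_{code}$ is actually needed; the theorem is a statement about general bounded operators on a Hilbert space that the author is recording here presumably to combine with Theorem \ref{thm:strongweakconvergence} later.

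Concretely, I would proceed as follows. First, fix arbitrary $\ket{\Psi},\ket{\Phi} \in \calh_{code}$. From $\slim_{n\to\infty} a_n = \calo$ we have $\lim_{n\to\infty} \|(a_n - \calo)\ket{\Phi}\| = 0$. By the Cauchy-Schwarz inequality,
\begin{equation*}
\bigl|\braket{\Psi|(a_n - \calo)|\Phi}\bigr| \leq \|\ket{\Psi}\| \cdot \|(a_n - \calo)\ket{\Phi}\| \longrightarrow 0,
\end{equation*}
so $\lim_{n\to\infty} \braket{\Psi|a_n|\Phi} = \braket{\Psi|\calo|\Phi}$. In other words, the strong limit is also a weak limit, i.e.\ $\wlim_{n\to\infty} a_n = \calo$.

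Second, I would use the conjugate symmetry of the inner product, together with the definition of the adjoint, to write
\begin{equation*}
\braket{\Psi|a_n^\dagger|\Phi} = \overline{\braket{\Phi|a_n|\Psi}}, \qquad \braket{\Psi|\calo^\dagger|\Phi} = \overline{\braket{\Phi|\calo|\Psi}}.
\end{equation*}
Applying the weak limit established above with the roles of $\ket{\Psi}$ and $\ket{\Phi}$ interchanged and taking complex conjugates gives
\begin{equation*}
\lim_{n\to\infty} \braket{\Psi|a_n^\dagger|\Phi} = \braket{\Psi|\calo^\dagger|\Phi},
\end{equation*}
for every $\ket{\Psi},\ket{\Phi} \in \calh_{code}$, which is exactly the assertion $\wlim_{n\to\infty} a_n^\dagger = \calo^\dagger$.

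There is really no main obstacle; the only thing to be careful about is not to overclaim. In particular, one should not expect $a_n^\dagger \to \calo^\dagger$ in the strong topology, since taking adjoints is continuous in the weak but not the strong operator topology. The argument above therefore cannot be strengthened to strong convergence, and this is precisely why the theorem is phrased with $\wlim$ rather than $\slim$. This weak-convergence statement is exactly the hypothesis needed to feed into Theorem \ref{thm:strongweakconvergence} when one wants to conclude weak convergence of the tensor-network images $\tilde{a}_n^\dagger$, which appears to be the intended downstream use.
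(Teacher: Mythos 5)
Your proof is correct and follows essentially the same route as the paper's: the paper's one-line chain $\braket{\Psi|\calo^\dagger|\Phi} = \braket{\Phi|\calo|\Psi}^* = \lim_{n}\braket{\Phi|a_n|\Psi}^* = \lim_n \braket{\Psi|a_n^\dagger|\Phi}$ is exactly your argument (strong convergence implies weak convergence, then conjugate symmetry of the inner product), just written more compactly. Your added remarks about why one cannot upgrade to strong convergence and the intended downstream use in Theorem \ref{thm:strongweakconvergence} are accurate but not part of the paper's proof.
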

\begin{proof}  
	Let $\ket{\Psi}, \ket{\Phi} \in \calh_{code}$. Then
	\begin{equation}  \braket{\Psi|\calo^\dagger|\Phi} =\braket{\Phi|\calo|\Psi}^* = \lim_{n \rightarrow \infty} \braket{\Phi|a_n|\Psi}^* = \lim_{n \rightarrow \infty} \braket{\Psi|a_n^\dagger|\Phi}, \end{equation}
	so the sequence of operators $\{a_n^\dagger\}$ converges weakly to $\calo^\dagger$. Recalling Theorem \ref{thm:doublecommutant}, we see explicitly that $M_{code}$ is closed under hermitian conjugation.
\end{proof}

\subsection{The tensor network map is linear}
We now demonstrate the linearity of the tensor network map. Consider two sequences of operators in $A_{code}$, $\{a_n\}$ and $\{b_n\}$, converging strongly to $\calo_1$ and $\calo_2$ respectively. Then for $\alpha, \beta \in \mathbb{C}$, $\slim_{n \rightarrow \infty}(\alpha a_n + \beta b_n) = \alpha \calo_1 + \beta \calo_2$. The image of each $a_n$ is $\tilde{a}_n$ and the image of each $b_n$ is $\tilde{b}_n$. The image of $\alpha \calo_1 + \beta \calo_2$ under the tensor network map is thus given by $\alpha \tilde{\calo}_1 + \beta \tilde{\calo}_2$. 
Hence, the tensor network map is linear when acting on all operators in $M_{code}$.

\subsection{The tensor network map commutes with hermitian conjugation}

 If $\{a_n\} \in A_{code}$ strongly converges to $\calo$, then $\wlim_{n \rightarrow \infty} a_n^\dagger=\calo^\dagger$ by Theorem \ref{thm:conjugateweakconvergence}. Each $a_n$ is mapped to $\tilde{a}_n$ under the tensor network map, and $\{\tilde{a}_n\}$ strongly converges to $\tilde{\calo} \in \calb(\calh_{phys})$. Each $a_n^\dagger$ is mapped to $\widetilde{a_n^\dagger} = \tilde{a}_n^\dagger$, and $\wlim_{n \rightarrow \infty} \tilde{a}_n^\dagger=(\tilde{\calo})^\dagger$. Since $M_{code}$ is defined from $A_{code}$ by taking strong limits, there must exist a sequence $\{b_n\} \in A_{code}$ that converges strongly to $\calo^\dagger$. Then, $\slim_{n \rightarrow \infty} \tilde{b}_n=\widetilde{\calo^\dagger}$. Note that, for any two $\ket{\Psi},\ket{\Phi} \in \calh_{code}$, $\lim_{n \rightarrow \infty}\braket{\Psi|(a_n^\dagger - b_n) |\Phi} = 0$. The sequence of norms $\{||a_n^\dagger - b_n||\}$ is bounded above because $||a_n^\dagger||=||a_n|| \, \, \forall n \in \mathbb{N}$ and $\{a_n\}$ and $\{b_n\}$ converge strongly. Furthermore, for any $\ket{\tilde{\Psi}},\ket{\tilde{\Phi}} \in \calh_{phys}$, $\lim_{n \rightarrow \infty}\braket{\tilde{\Psi}|\widetilde{(a_n^\dagger - b_n)} |\tilde{\Phi}}=\braket{\tilde{\Psi}|(\tilde{\calo}^\dagger - \widetilde{\calo^\dagger}) |\tilde{\Phi}}$. Applying Theorem \ref{thm:strongweakconvergence}, $\braket{\tilde{\Psi}|(\tilde{\calo}^\dagger - \widetilde{\calo^\dagger}) |\tilde{\Phi}} =0$, hence $\tilde{\calo}^\dagger = \widetilde{\calo^\dagger}$.

\subsection{The tensor network map commutes with multiplication}




Given $\calo_1,\calo_2 \in M_{code}$, we now show that $\widetilde{\calo_1 \calo_2} = \tilde{\calo}_1 \tilde{\calo}_2$. Let $\{a_n\} \in A_{code}$ converge strongly to $\calo_1^\dagger$. Let $\{b_n\} \in A_{code}$ converge strongly to $\calo_2$. Let $\{c_n\} \in A_{code}$ converge strongly to $\calo_1 \calo_2$. For any $\ket{\Psi},\ket{\Phi} \in \calh_{code}$,
\begin{equation} \lim_{n \rightarrow \infty} \braket{(a_n - \calo_1^\dagger)\Psi|(b_n - \calo_2) \Phi} =0, \end{equation}
which implies that
\begin{equation} \lim_{n \rightarrow \infty} \braket{\Psi|a_n^\dagger b_n - c_n |\Phi} =0.  \end{equation}
The sequence of norms $\{||a_n^\dagger b_n - c_n||\}$ is bounded. By Theorem \ref{thm:strongweakconvergence}, we have that, for all $\ket{\tilde{\Psi}},\ket{\tilde{\Phi}} \in \calh_{phys}$,
\begin{equation} \lim_{n \rightarrow \infty} \braket{\tilde{\Psi}|\tilde{a}_n^\dagger \tilde{b}_n - \tilde{c}_n |\tilde{\Phi}} =0. \end{equation}
It follows that
\begin{equation}  \braket{\tilde{\Psi}| \tilde{\calo}_1 \tilde{\calo}_2  |\tilde{\Phi}}
 =\braket{\tilde{\Psi}|  \widetilde{\calo_1 \calo_2} |\tilde{\Phi}}.  \end{equation}

\subsection{The tensor network map preserves the norm}

Consider any $\ket{\tilde{\psi}} \in p\calh_{phys}$. By Theorem \ref{thm:codephysequiv}, there exists a finite family of vectors $\ket{\psi_i} \in p \calh_{code}$, ($i \in \{ 1,2,\ldots,Q\}$ for some $Q \in \mathbb{N}$) such that for any $a \in A_{code}$,
\begin{equation} \braket{\tilde{\psi}|\tilde{a}|\tilde{\psi}} = \sum_{i = 1}^Q \braket{\psi_i|a|\psi_i}. \end{equation}
Consider a sequence $\{a_n\} \in A_{code}$ that strongly converges to $\calo \in M_{code}$. Then we have

\begin{equation} \lim_{n \rightarrow \infty}\braket{\tilde{\psi}|\tilde{a}_n|\tilde{\psi}} = \sum_{i = 1}^Q \lim_{n \rightarrow \infty} \braket{\psi_i|a_n|\psi_i}, \end{equation}
\begin{equation} \braket{\tilde{\psi}|\tilde{\calo}|\tilde{\psi}} = \sum_{i = 1}^Q  \braket{\psi_i|\calo|\psi_i}. \end{equation}
In particular, for any $\calo \in M_{code}$, we have that
\begin{equation} \braket{\tilde{\calo}\tilde{\psi}|\tilde{\calo}\tilde{\psi}} = \sum_{i = 1}^Q  \braket{\calo\psi_i|\calo\psi_i}. \end{equation}
The norms of $\ket{\tilde{\psi}}$ and $\tilde{\calo}\ket{\tilde{\psi}}$ may be expressed as
\begin{equation} ||\ket{\tilde{\psi}}|| = \sqrt{\sum_{i = 1}^Q  ||\ket{\psi}_i||^2}, \ \quad ||\tilde{\calo}\ket{\tilde{\psi}}|| = \sqrt{\sum_{i = 1}^Q  ||\calo\ket{\psi_i}||^2}. \end{equation}
Thus,
\begin{equation} ||\tilde{\calo}\ket{\tilde{\psi}}|| \leq ||\calo||\sqrt{\sum_{i = 1}^Q   ||\ket{\psi_i}||^2} = ||\calo|| \cdot ||\ket{\tilde{\psi}}||. \end{equation}
Note that we may choose $\ket{\tilde{\psi}}$ such that $Q = 1$ and $|| \calo \ket{\psi_1}|| - ||\calo|| \cdot ||\ket{\psi_1}||$ is arbitrarily close to zero. Hence, we may choose $\ket{\tilde{\psi}}$ such that $||\tilde{\calo} \ket{\tilde{\psi}}|| - ||\calo||\cdot||\ket{\tilde{\psi}}||$ is arbitrarily close to zero. It follows from the fact that $p\calh_{phys}$ is dense in $\calh_{phys}$ and Theorem \ref{thm:BLT} that
\begin{equation} ||\tilde{\calo}|| = ||\calo||. \end{equation} 

\subsection{The tensor network map satisfies bulk reconstruction}
\begin{thm}
	Let $\calo \in M_{code}$. Let $\tilde{\calo} \in M_{phys}$ be the image of $\calo$ under the tensor network map. Let $\ket{\Psi} \in \calh_{code}$. Then
	\begin{equation} u \calo \ket{\Psi} = \tilde{\calo} u \ket{\Psi}. \label{eq:mapproperty} \end{equation}
\end{thm}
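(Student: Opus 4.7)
The plan is to reduce to the already-established bulk reconstruction property for $A_{code}$ (equation \eqref{eq:acodebulkrecon}) by a strong-limit argument. The key inputs are already in place: Section \ref{sec:mcodeprime} shows that every $\calo \in M_{code}$ can be written as $\calo = \slim_{n \rightarrow \infty} a_n$ for some sequence $\{a_n\} \in A_{code}$, and the definition of the tensor network map on $M_{code}$ then gives $\tilde{\calo} = \slim_{n \rightarrow \infty} \tilde{a}_n$, where each $\tilde{a}_n \in M_{phys}$ is the image of $a_n$ under the formula \eqref{eq:opinaphys}.

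First, I would fix $\ket{\Psi} \in \calh_{code}$ and apply bulk reconstruction on $A_{code}$ to each term in the sequence, obtaining
\begin{equation*}
u\, a_n \ket{\Psi} \;=\; \tilde{a}_n\, u \ket{\Psi}\qquad \forall n \in \mathbb{N}.
\end{equation*}
Next, I would pass to the limit on both sides. On the left, $a_n \ket{\Psi} \to \calo \ket{\Psi}$ by strong convergence, and since $u$ is an isometry (hence bounded and in particular continuous), $u\, a_n \ket{\Psi} \to u\, \calo \ket{\Psi}$. On the right, strong convergence $\tilde{a}_n \to \tilde{\calo}$ applied to the single vector $u \ket{\Psi} \in \calh_{phys}$ gives $\tilde{a}_n\, u \ket{\Psi} \to \tilde{\calo}\, u \ket{\Psi}$. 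Equating the two limits yields the desired identity $u \calo \ket{\Psi} = \tilde{\calo} u \ket{\Psi}$.

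The only subtlety is ensuring that the limiting operator on the right-hand side is independent of the choice of approximating sequence $\{a_n\}$; but this was already established in Section \ref{sec:tensornetworkmap} where the tensor network map was shown to be well-defined on $M_{code}$ (if $\{a_n\}$ and $\{b_n\}$ both converge strongly to $\calo$, then $\slim \tilde{a}_n = \slim \tilde{b}_n$). Therefore no new obstacle arises. I would also note as a corollary that the companion identity $u \calo^\dagger \ket{\Psi} = \tilde{\calo}^\dagger u \ket{\Psi}$ follows by applying the same argument to $\calo^\dagger \in M_{code}$, using that the tensor network map commutes with hermitian conjugation (proved earlier in this section), so together with the analogous result for $M_{code}^\prime \to M_{phys}^\prime$ this verifies statement 1 of Theorem \ref{thm:maintheorem} for our QECC.
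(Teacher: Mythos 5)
Your proposal is correct and follows essentially the same route as the paper's proof: write $\calo = \slim_n a_n$ with $a_n \in A_{code}$, use $u\,a_n\ket{\Psi} = \tilde{a}_n\,u\ket{\Psi}$ termwise, and pass to the limit using the continuity of $u$ and the strong convergence $\tilde{a}_n \to \tilde{\calo}$. Your additional remarks on well-definedness of the limit and the dagger identity match what the paper establishes in the surrounding sections.
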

\begin{proof}
Let $\{a_n\} \in A_{code}$ be a sequence that converges strongly to $\calo \in M_{code}$. Let $\tilde{a}_n \in M_{phys}$ be the image under the tensor network map of $a_n$ for every $n \in \mathbb{N}$. By the definition of the tensor network map, $\slim_{n \rightarrow \infty} \tilde{a}_n = \tilde{\calo}$. It follows that
	\begin{equation}  \tilde{\calo}u \ket{\Psi} = \lim_{n \rightarrow \infty} \tilde{a}_n u \ket{\Psi} = \lim_{n \rightarrow \infty} u a_n \ket{\Psi} = u \lim_{n \rightarrow \infty} a_n \ket{\Psi} = u \calo \ket{\Psi}. \end{equation}
\end{proof}

This theorem demonstrates the bulk reconstruction property of the tensor network map. 
We can linearly map a given operator $\calo \in M_{code}$ to an operator $\tilde{\calo} \in M_{phys}$ such that for all $\ket{\Psi} \in \calh_{code}$, 
\begin{equation} 
u \calo \ket{\Psi} = \tilde{\calo} u \ket{\Psi}, \quad u \calo^\dagger \ket{\Psi} = \widetilde{\calo^\dagger} u \ket{\Psi} = \tilde{\calo}^\dagger u \ket{\Psi} .
\end{equation} By the symmetry of the tensor network in Figure \ref{fig:twotofourexample}, any operator $\calo^\prime \in M_{code}^\prime$ can be linearly mapped to $\tilde{\calo}^\prime \in M_{phys}^\prime$ such that for all $\ket{\Psi} \in \calh_{code}$, 
\begin{equation} 
u \calo^\prime \ket{\Psi} = \tilde{\calo}^\prime u \ket{\Psi}, \quad u \calo^{\prime \, \dagger} \ket{\Psi} = \widetilde{\calo^{\prime \, \dagger}} u \ket{\Psi}
= \tilde{\calo}^{\prime \, \dagger} u \ket{\Psi} .
\end{equation}

\section{Cyclic and separating vectors}

In this section we identify a set of cyclic and separating vectors with respect to $M_{code}$ that is dense in $\calh_{code}$. Then, we prove that all cyclic and separating vectors with respect to $M_{code}$ are mapped to cyclic and separating vectors with respect to $M_{phys}$ via the isometry $u$. This shows that our infinite-dimensional QECC satisfies the assumptions of Theorem \ref{thm:maintheorem}.

\label{sec:cyclicseparating}
\begin{thm}
	Cyclic and separating vectors with respect to $M_{code}$ are dense in $\calh_{code}$. 
\end{thm}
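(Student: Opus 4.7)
The plan is to exhibit an explicit dense family $\mathcal{D}\subset p\calh_{code}$ whose members are all cyclic and separating for $M_{code}$. Let $\mathcal{D}$ consist of all vectors of the form $\ket{\Psi}=\ket{\psi_M}\otimes\ket{\lambda}\otimes\ket{\lambda}\otimes\cdots$, where $M\in\mathbb{N}$, the tail factors mean that every qutrit pair past the $M$th sits in the code reference state, and $\ket{\psi_M}$ is a vector of the first $M$ pairs that has full Schmidt rank $3^M$ across the bipartition of the top-$M$ qutrits $i_1,\ldots,i_M$ versus the bottom-$M$ qutrits $j_1,\ldots,j_M$. Since every element of $p\calh_{code}$ has exactly the form $\ket{\psi_M}\otimes\ket{\lambda}\otimes\ket{\lambda}\otimes\cdots$ for some $M$ and some $\ket{\psi_M}\in(\mathbb{C}^3\otimes\mathbb{C}^3)^{\otimes M}$, and since in any finite-dimensional bipartite space $\mathbb{C}^n\otimes\mathbb{C}^n$ full-Schmidt-rank vectors are dense (for example, replace $\ket{\psi_M}$ by $\ket{\psi_M}+\epsilon\ket{\lambda}^{\otimes M}$ and let $\epsilon\to 0^+$), the set $\mathcal{D}$ is dense in $p\calh_{code}$, which is in turn dense in $\calh_{code}$.

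Next I verify cyclicity of every $\ket{\Psi}\in\mathcal{D}$. Given an arbitrary target $\ket{\phi}\in p\calh_{code}$, pick $L\geq M$ so that $\ket{\phi}=\ket{\phi_L}\otimes\ket{\lambda}\otimes\ket{\lambda}\otimes\cdots$ and rewrite $\ket{\Psi}=\ket{\psi_L}\otimes\ket{\lambda}\otimes\ket{\lambda}\otimes\cdots$ with $\ket{\psi_L}=\ket{\psi_M}\otimes\ket{\lambda}^{\otimes(L-M)}$. A tensor product of full-Schmidt-rank vectors is again full-Schmidt-rank (the Schmidt coefficients of the product are products of positive numbers), so $\ket{\psi_L}$ has full Schmidt rank $3^L$. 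The standard finite-dimensional fact that such a vector is cyclic for $\calb(\mathbb{C}^{3^L})\otimes I$ — which follows directly from writing $\ket{\psi_L}=\sum_a\lambda_a\ket{a}_{\text{top}}\ket{a}_{\text{bot}}$ with $\lambda_a>0$ and checking that $X\mapsto(X\otimes I)\ket{\psi_L}$ is a bijection onto the $9^L$-dimensional space — produces an operator $X$ on the top-$L$ qutrits with $(X\otimes I)\ket{\psi_L}=\ket{\phi_L}$. Extending $X$ by the identity on the remaining pairs gives an element of $A_{code}\subset M_{code}$ that maps $\ket{\Psi}$ to $\ket{\phi}$. Hence $A_{code}\ket{\Psi}\supseteq p\calh_{code}$, which is dense, so $M_{code}\ket{\Psi}$ is dense and $\ket{\Psi}$ is cyclic.

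For the separating property I invoke the left-right symmetry of the construction: the identical argument, applied to $M_{code}^\prime$ acting on the bottom-row black qutrits, shows that every $\ket{\Psi}\in\mathcal{D}$ is also cyclic for $M_{code}^\prime$. Then if $\calo\in M_{code}$ satisfies $\calo\ket{\Psi}=0$, for every $\calo^\prime\in M_{code}^\prime$ one has $\calo\calo^\prime\ket{\Psi}=\calo^\prime\calo\ket{\Psi}=0$, so $\calo$ vanishes on the dense set $M_{code}^\prime\ket{\Psi}$ and, being bounded, $\calo=0$. The only genuinely non-bookkeeping step is the finite-dimensional lemma that a full-Schmidt-rank vector is cyclic for the one-sided algebra, and that is the step I expect to dwell on; everything else is a matter of combining this finite-dimensional statement with the explicit tensor product structure of $\calh_{code}$ and the density of $p\calh_{code}$.
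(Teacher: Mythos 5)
Your proof is correct and follows essentially the same route as the paper's: identify the dense family of vectors whose first $M$ pairs have maximal Schmidt rank (with $\ket{\lambda}$ tails), obtain cyclicity by acting with $A_{code}$ to reach all of $p\calh_{code}$, and deduce the separating property from cyclicity with respect to $M_{code}^\prime$. You simply spell out two steps the paper leaves implicit (the finite-dimensional bijection $X\mapsto(X\otimes I)\ket{\psi_L}$ and the standard argument that cyclic for the commutant implies separating), which is fine.
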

\begin{proof}
	Since $p\calh_{code}$ is dense in $\calh_{code}$, any vector in $\calh_{code}$ is arbitrarily close to a vector in $p\calh_{code}$, which may be denoted as $\ket{\psi} \otimes \ket{\lambda}\cdots$, where $\ket{\psi}$ is a vector in a finite-dimensional Hilbert space $\calh$ that consists of finitely many pairs of qutrits. We may write $\calh = \calh_{i} \otimes \calh_{j}$ where $\calh_{i}$ consists of the black qutrits labeled by $i$ (see Figures \ref{fig:twotofourexample} and \ref{fig:qutritcodes}), and $\calh_{j}$ consists of the black qutrits labeled by $j$. The vector $\ket{\psi}$ is arbitrarily close to a vector of maximal Schmidt number (with respect to this factorization), which we will denote by $\ket{\psi^\prime}$. Hence, any vector in $\calh_{code}$ is arbitrarily close to a vector of the form $\ket{\psi^\prime} \otimes \ket{\lambda}\cdots \in \calh_{code}$ where $\ket{\psi^\prime}$ has maximal Schmidt number under the factorization  $\calh = \calh_{i} \otimes \calh_{j}$, so we just need to show that such vectors are cyclic and separating.
	
	The vector $\ket{\psi^\prime} \otimes \ket{\lambda}\cdots \in \calh_{code}$ is cyclic with respect to $M_{code}$ because operators in $A_{code} \subset M_{code}$ may act on it to obtain any vector in $p\calh_{code}$, and $p\calh_{code}$ is dense in $\calh_{code}$. Furthermore, $\ket{\psi^\prime} \otimes \ket{\lambda}\cdots$ is certainly separating with respect to $A_{code}$ as one can see from the definition of $A_{code}$ in equation \eqref{eq:opinacode}. To see that  $\ket{\psi^\prime} \otimes \ket{\lambda}\cdots$ is separating with respect to all of $M_{code}$,  note that the same logic as above implies that $\ket{\psi^\prime} \otimes \ket{\lambda}\cdots$ is cyclic with respect to $M_{code}^\prime$. Hence, $\ket{\psi^\prime} \otimes \ket{\lambda}\cdots$ is separating with respect to $M_{code}$.
	\end{proof}
	
	\begin{proof}[Alternative Proof]		
	We now give an alternative and more explicit proof of the fact that $\ket{\psi^\prime} \otimes \ket{\lambda}\cdots$ is separating with respect to all of $M_{code}$. Given a sequence $\{a_n\} \in A_{code}$ that strongly converges to $\calo \in M_{code}$ we need to show that $\calo (\ket{\psi^\prime} \otimes \ket{\lambda}\cdots) = 0$ implies that $\calo$ annihilates every vector in $p\calh_{code}$ (which would imply that $\calo$ annihilates every Cauchy sequence and hence every vector in $\calh_{code}$).
	
	First, we will construct a suitable (yet overcomplete) basis of $p\calh_{code}$. Let us assume that $\ket{\psi^\prime}$ is a state of the black qutrits in the first $M$ collections. Since $\ket{\psi^\prime}$ is a vector in a finite-dimensional factorized Hilbert space with maximal Schmidt number, we may write it as
	\begin{equation}  \ket{\psi^\prime} = \sum_{k = 1}^{3^M} \alpha_k \ket{e_k}_{i} \otimes \ket{f_k}_{j}, \end{equation} where $\alpha_k$ are nonzero coefficients that satisfy $\sum_{k=1}^{3^M} |\alpha_k|^2 = 1$, $\ket{e_k}_{i}$ is an orthonormal basis of the $i$ black qutrits in the first $M$ collections and $\ket{f_k}_{j}$ is an orthonormal basis of the $j$ black qutrits in the first $M$ collections.
	
	We consider the following vectors in $p\calh_{code}$, which form a basis. Assume that $L \ge M$.
	\begin{align}
	\begin{split}  &\ket{L,k,k^\prime,p_{M+1},\cdots,p_L,q_{M+1},\cdots,q_L} = \\
	&\quad \ket{e_k}_{i} \otimes \ket{f_{k^\prime}}_{j} \otimes \left[\ket{p_{M+1}}_{i_{M+1}} \ket{q_{M+1}}_{j_{M+1}}\right]  \otimes \cdots \otimes \left[\ket{p_L}_{i_L} \ket{q_L}_{j_L}\right] \otimes \ket{\lambda} \cdots \label{eq:basisofcode}
	\end{split}
	\end{align} where $k$ and $k^\prime$ each label a basis vector for their respective black qutrits in the first $M$ collections, and $p_\ell$ and $q_\ell$ ($\ell \in \{ 1,2,\ldots,M\}$) each run over the three orthonormal basis vectors of their respective black qutrits in the $i$th collection. All black qutrit pairs past the $L$th collection are in the reference state $\ket{\lambda}$.
	
	We first consider the basis vectors that satisfy $L = M$. The vectors $\ket{M,k,k^\prime}$ and $\ket{M,\hat{k},\hat{k}^\prime}$ are orthogonal for $k^\prime \neq \hat{k}^\prime$. This is also true for the vectors $\calo \ket{M,k,k^\prime}$ and $\calo \ket{M,\hat{k},\hat{k}^\prime}$ since $\calo$ is a limit of operators which act as the identity on $\ket{f_{k^\prime}}$ in equation \eqref{eq:basisofcode}. Since $\sum_{k = 1}^{3^M} \alpha_{k}\ket{M,k,k} = \ket{\psi^\prime} \otimes \ket{\lambda}\cdots$, then $\calo  (\ket{\psi^\prime} \otimes \ket{\lambda} \cdots) = 0$ implies that $\calo \ket{M,k,k} = 0$ for all $k$. Let $\mathcal{U} \in A_{code}$ be an operator that acts as the identity operator on every vector in the tensor product in equation \eqref{eq:basisofcode} except that it may act arbitrarily on $\ket{f_{k^\prime}}$. We can choose $\mathcal{U}$ to send  $\ket{f_k}$ to $\ket{f_w}$ for $w \neq k$. Because $\mathcal{U}$ commutes with $\calo$, we have that $0 = \mathcal{U} \calo \ket{M,k,k} =  \calo \ket{M,k,w}$ and hence $\calo$ annihilates every basis vector with $L = M$. This argument can be repeated in a completely analogous way for the case $L > M$ (since $\ket{\psi^\prime} \otimes \ket{\lambda} \cdots = (\ket{\psi^\prime} \otimes \ket{\lambda}) \otimes \ket{\lambda} \cdots$ and $\ket{\psi^\prime} \otimes \ket{\lambda}$ has maximal Schmidt number) to show that $\calo$ annihilates all vectors in $p\calh_{code}$, and hence all of $\calh_{code}$.

\end{proof}


Recall that a vector is cyclic and separating for $M_{code}$ if and only if it is cyclic and separating for $M_{code}^\prime$ \cite{Witten:2018zxz}. Hence, cyclic and separating vectors for $M_{code}^\prime$ are also dense in $\calh_{code}$. 

\begin{thm}[\cite{HolographicEntropy}]
	If $\ket{\Psi} \in \calh_{code}$ is cyclic and separating with respect to $M_{code}$, then $u \ket{\Psi} \in \calh_{phys}$ is cyclic and separating with respect to $M_{phys}$.
\end{thm}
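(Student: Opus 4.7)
The plan is to reduce the claim to a tensor-product factorization of $\calh_{phys}$ and $M_{phys}$ via a unitary change of basis on the physical Hilbert space. I would define the unitary $V := \bigotimes_k \bigl(U^\dagger_{\tilde{a}_k\tilde{b}_k}\,U^\dagger_{\tilde{d}_k\tilde{e}_k}\bigr)$ on $\calh_{phys}$, with one factor per collection. Setting $\ket{\psi}=\ket{\lambda}$ in equation \eqref{eq:codetophysunitary} shows that each factor fixes the physical reference state $\ket{\lambda\lambda}$, so the infinite product is well-defined on $p\calh_{phys}$ and extends uniquely to a unitary on $\calh_{phys}$. Since $U^\dagger_{\tilde{a}_k\tilde{b}_k}\in M_{phys}$ and $U^\dagger_{\tilde{d}_k\tilde{e}_k}\in M_{phys}^\prime$ commute, conjugation by $V$ is an automorphism of both $M_{phys}$ and $M_{phys}^\prime$, so $u\ket{\Psi}$ is cyclic (resp.\ separating) for $M_{phys}$ if and only if $\ket{\Psi'}:=Vu\ket{\Psi}$ is. Applying equation \eqref{eq:codetophysunitary} collection by collection, $\ket{\Psi'}$ takes the explicit factored form $\ket{\Psi}_{\tilde{a},\tilde{d}}\otimes\ket{\lambda\cdots}_{\tilde{b},\tilde{e}}$ under the natural isomorphism $\calh_{phys}\cong\calh_{\tilde{a}\tilde{d}}\otimes\calh_{\tilde{b}\tilde{e}}$, where the two factors are copies of the code-type Hilbert space (the factorization is legitimate because the reference state $\ket{\lambda\lambda}$ itself factors). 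In this decomposition $M_{phys}=M_{\tilde{a}}\bar{\otimes}M_{\tilde{b}}$ with each of $M_{\tilde{a}},M_{\tilde{b}}$ isomorphic to $M_{code}$ under relabeling: any element of $A_{phys}$ acts on finitely many $\tilde{a},\tilde{b}$ qutrits and can be written as a finite sum of tensor products of operators on $\tilde{a}$- and $\tilde{b}$-qutrits, so $A_{phys}\subset (M_{\tilde{a}}\cup M_{\tilde{b}})''$, while the reverse inclusion is immediate.

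With this setup in hand I would conclude as follows. For cyclicity of $\ket{\Psi'}$ with respect to $M_{\tilde{a}}\bar{\otimes}M_{\tilde{b}}$: the first tensor factor $\ket{\Psi}_{\tilde{a},\tilde{d}}$ is cyclic for $M_{\tilde{a}}$ because $\ket{\Psi}$ is cyclic for $M_{code}$ by hypothesis, while the second factor $\ket{\lambda\cdots}_{\tilde{b},\tilde{e}}$ is cyclic for $M_{\tilde{b}}$ by the previous theorem of Section \ref{sec:cyclicseparating} (all of its local qutrit pairs have maximal Schmidt number). Since the tensor product of cyclic vectors is cyclic for the von Neumann tensor product---which follows from the density of spans of simple tensors in the Hilbert tensor product---this yields cyclicity of $\ket{\Psi'}$. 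For the separating property, I would run the parallel argument for $M_{phys}^\prime=M_{\tilde{d}}\bar{\otimes}M_{\tilde{e}}$: $\ket{\Psi}_{\tilde{a},\tilde{d}}$ is cyclic for $M_{\tilde{d}}$ because $\ket{\Psi}$ is separating for $M_{code}$ (equivalently, cyclic for $M_{code}^\prime$), and $\ket{\lambda\cdots}_{\tilde{b},\tilde{e}}$ is cyclic for $M_{\tilde{e}}$ by the same Schmidt-number argument. Cyclicity of $\ket{\Psi'}$ for $M_{phys}^\prime$ is equivalent to being separating for $M_{phys}$.

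The main obstacle I anticipate is making the factorizations $\calh_{phys}\cong\calh_{\tilde{a}\tilde{d}}\otimes\calh_{\tilde{b}\tilde{e}}$ and $M_{phys}=M_{\tilde{a}}\bar{\otimes}M_{\tilde{b}}$ fully rigorous at the level of completed Hilbert spaces and von Neumann closures, rather than only at the algebraic or pre-Hilbert level where the reference-state factorization is manifest. Once the structural factorization is established, the remaining steps reduce to assembling results already proved in the paper, together with the standard cyclic-vector fact for von Neumann tensor products.
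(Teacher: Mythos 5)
Your proposal is correct, but it proves the theorem by a genuinely different route than the paper. The paper's proof is a direct $\epsilon/2$ approximation argument: it approximates an arbitrary $\ket{\tilde{\Phi}} \in \calh_{phys}$ by some $\ket{\tilde{\phi}} \in p\calh_{phys}$, picks $\hat{\mathcal{P}} \in M_{phys}$ with $\hat{\mathcal{P}}\ket{\tilde{\lambda}\cdots} = \ket{\tilde{\phi}}$, uses cyclicity of $\ket{\Psi}$ to find $\calo \in M_{code}$ with $\calo\ket{\Psi}$ close to $\ket{\lambda\cdots}$, and then invokes the bulk-reconstruction property $\tilde{\calo}\,u\ket{\Psi} = u\,\calo\ket{\Psi}$ of the tensor network map to conclude that $\hat{\mathcal{P}}\tilde{\calo}\,u\ket{\Psi}$ approximates $\ket{\tilde{\Phi}}$; the separating half is the mirror argument for $M_{phys}^\prime$. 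You instead exhibit a structural unitary equivalence: conjugating by the infinite product $V$ of local disentanglers from equation \eqref{eq:codetophysunitary} turns $u\ket{\Psi}$ into $\ket{\Psi}_{\tilde{a}\tilde{d}} \otimes \ket{\lambda\cdots}_{\tilde{b}\tilde{e}}$ and identifies $M_{phys}$ with $M_{\tilde{a}}\,\bar{\otimes}\,M_{\tilde{b}}$, after which the claim reduces to the standard fact that a tensor product of cyclic vectors is cyclic for the von Neumann tensor product, together with the density result of Section \ref{sec:cyclicseparating} applied to $\ket{\lambda\cdots}_{\tilde{b}\tilde{e}}$. Your route costs more setup — you must verify that the completed Hilbert space factorizes (which works because the physical reference state is a product across the $(\tilde{a}\tilde{d})|(\tilde{b}\tilde{e})$ split, so the two incomplete tensor products assemble correctly), that $V$ extends to a unitary normalizing both $M_{phys}$ and $M_{phys}^\prime$, and that $A_{phys}^{\prime\prime} = M_{\tilde{a}}\,\bar{\otimes}\,M_{\tilde{b}}$ (the easy inclusions $M_{\tilde{a}}\otimes 1,\ 1\otimes M_{\tilde{b}} \subset A_{phys}^{\prime\prime}$ suffice; no commutation theorem is needed) — but it buys a sharper structural statement, namely that up to a local unitary the code is exactly $\ket{\Psi} \mapsto \ket{\Psi}\otimes\ket{\lambda\cdots}$ with $M_{phys} \cong M_{code}\,\bar{\otimes}\,M_{code}$, from which several other properties of the model also become transparent. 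The paper's argument is shorter given that the tensor network map and its properties have already been established, and it avoids any infinite tensor-product bookkeeping. Both arguments are valid; the gaps you flag at the end are real but routine to fill in exactly the way you indicate.
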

\begin{proof}
	To show that $u\ket{\Psi}$ is cyclic, we need to show that given any $\ket{\tilde{\Phi}} \in \calh_{phys}$ and $\epsilon > 0$, we can choose an operator $\mathcal{P} \in M_{phys}$ such that $||\mathcal{P} u \ket{\Psi} - \ket{\tilde{\Phi}}|| < \epsilon$.
	
 Choose $\ket{\tilde{\phi}} \in p\calh_{phys}$ such that $||\ket{\tilde{\phi}} - \ket{\tilde{\Phi}}|| < \frac{\epsilon}{2}$. Let $\ket{\tilde{\lambda}\cdots} \in p\calh_{phys}$ denote the vector for which all boundary qutrit pairs are in the reference state $\ket{\lambda}$. Choose an operator $\hat{\mathcal{P}} \in M_{phys}$ such that $\hat{\mathcal{P}} \ket{\tilde{\lambda}\cdots} = \ket{\tilde{\phi}}$. Choose $\calo \in M_{code}$ such that $||\calo \ket{\Psi} - \ket{\lambda\cdots}|| < \frac{\epsilon}{2 ||\hat{\mathcal{P}}||}$, where $\ket{\lambda \cdots} \in p\calh_{code}$ is the vector for which all qutrit pairs are in the reference state $\ket{\lambda}$. Let $\tilde{\calo}$ denote the image of $\calo$ under the tensor network map.
 
 Note that
\begin{equation} 
\ket{\tilde{\Phi}} - \hat{\mathcal{P}} \tilde{\calo} u \ket{\Psi} = 
\ket{\tilde{\Phi}}  - \ket{\tilde{\phi}}
 -  \hat{\mathcal{P}} u (\calo  \ket{\Psi} - \ket{\lambda\cdots})  .
  \end{equation}
Hence,
\begin{equation}
 ||\ket{\tilde{\Phi}} - \hat{\mathcal{P}} \tilde{\calo} u \ket{\Psi}|| \leq 
||\ket{\tilde{\Phi}}  - \ket{\tilde{\phi}}||
+||  \hat{\mathcal{P}} u (\calo  \ket{\Psi} - \ket{\lambda\cdots}) ||, 
\end{equation}
\begin{equation}
||\ket{\tilde{\Phi}} - \hat{\mathcal{P}} \tilde{\calo} u \ket{\Psi}|| \leq 
||\ket{\tilde{\Phi}}  - \ket{\tilde{\phi}}||
+||  \hat{\mathcal{P}}|| \cdot ||  (\calo  \ket{\Psi} - \ket{\lambda\cdots}) || ,
\end{equation}
\begin{equation}
 ||\ket{\tilde{\Phi}} - \hat{\mathcal{P}} \tilde{\calo} u \ket{\Psi}|| < \epsilon.
\end{equation}
We take $\mathcal{P} = \hat{\mathcal{P}} \tilde{\calo}$. This shows that $u\ket{\Psi}$ is cyclic with respect to $M_{phys}$. A completely analogous argument shows that $u\ket{\Psi}$ is cyclic with respect to $M_{phys}^\prime$, so it is also separating with respect to $M_{phys}$.
\end{proof}




\section{$M_{code}$ is a hyperfinite type II$_1$ factor}
\label{sec:type21}


In this section, we prove that $M_{code}$ satisfies the assumptions of Theorem \ref{thm:21theorem}, from which it follows that $M_{code}$ is a type II$_1$ factor. The same argument shows that $M_{code}^\prime$, $M_{phys}$, and $M_{phys}^\prime$ are also type II$_1$ factors. 

For $\calo \in M_{code}$, define the following linear function from $M_{code} \rightarrow \mathbb{C}$:
\begin{equation}
T( \calo ) := \braket{\lambda \cdots|\calo|\lambda \cdots},
\end{equation}
where $\ket{\lambda \cdots} \in \calh_{code}$ is the vector for which all pairs of black qutrits are in the state $\ket{\lambda}$. This clearly satisfies $ T(\calo^\dagger \calo) \ge 0$, $T( I ) = 1$, and $T(\calo^\dagger) = T^*(\calo)$.\footnote{The identity operator is denoted by $I$.}

For any operator $\calo_1 \in M_{code}$, it is possible to choose a neighborhood $\mathcal{N}$ of $\calo_1$ in the ultraweak operator topology such that $|T(\calo_2) - T(\calo_1)| < \epsilon$ for all $\calo_2 \in \mathcal{N} $. We may pick the neighborhood to be 
\begin{equation}
\mathcal{N} = \{\calo_2 \in M_{code} : |\braket{\lambda \cdots|(\calo_1-\calo_2)|\lambda \cdots}| < \epsilon \}.
\end{equation}
Hence, $T$ is ultraweakly continuous. 

For $a,b \in A_{code}$, it is easy to check that $T(ab) = T(ba)$. Since operators in $M_{code}$ may be written as strong limits of operators in $A_{code}$, $T(\calo_1 \calo_2) = T(\calo_2 \calo_1)\quad \forall\calo_1,\calo_2 \in M_{code}$.

 For $\calo \in M_{code}$, $T(\calo^\dagger \calo) = 0$ implies that $\calo = 0$ because $\ket{\lambda \cdots}$ is separating with respect to $M_{code}$. Hence, $T$ is faithful.

On a finite-dimensional Hilbert space $\calh$, any linear map $F : \calb(\calh) \rightarrow \mathbb{C}$ that satisfies $F(\calo_1 \calo_2) = F(\calo_2 \calo_1) \, \, \forall \calo_1,\, \calo_2 \in \calb(\calh)$ is proportional to the trace on $\calh$. It follows that for any linear map $\mathcal{T} : M_{code} \rightarrow \mathbb{C}$, $\mathcal{T}(a)$ for $a \in A_{code}$ is completely determined by the conditions such that $\mathcal{T}$ is linear, $\mathcal{T}(ab) = \mathcal{T}(ba)$ for $a,b \in A_{code}$, and $\mathcal{T}(I) = 1$. If $\mathcal{T}(a)$ is known for $a \in A_{code}$ and $\mathcal{T}$ is ultraweakly continuous, then the fact that $M_{code}$ is the strong closure of $A_{code}$ completely determines $\mathcal{T}(\calo)$ for all $\calo \in M_{code}$. Hence, $T$ is the only ultraweakly continuous normalized linear functional from $M_{code} \rightarrow \mathbb{C}$ that satisfies $T(\calo_1 \calo_2) = T(\calo_2 \calo_1)$  for all $\calo_1,\calo_2 \in M_{code}$.

Now, we may apply Theorem \ref{thm:21theorem}, where $\text{tr}(\calo) = T(\calo)$ for $\calo \in M_{code}$. Thus, $M_{code}$ is a type II$_1$ factor.

Recall that a von Neumann algebra $M$ is $\emph{hyperfinite}$ if $M = (\cup_n M_n)^{\prime \prime}$ where, for each $n \in \mathbb{N}$, each von Neumann subalgebra $M_n \subset M$ is finite-dimensional and $M_n \subset M_{n+1}$. The von Neumann algebra $M_{code}$ is hyperfinite because $M_{code} = A_{code}^{\prime \prime}$, and $A_{code} = \cup_N A_N$ where $A_N$ is the algebra of operators that can be written as $a^{(N)}$ in equation \eqref{eq:opinacode}. Each $A_N$ is a finite-dimensional algebra consisting of operators that act nontrivially on finitely many qutrits.


\subsection{More on the uniqueness of $T : M_{code} \rightarrow \mathbb{C}$}

Now, we explicitly show that for an ultraweakly continuous linear map $\mathcal{T} : M_{code} \rightarrow \mathbb{C}$, $\mathcal{T}(\calo)$ for $\calo \in M_{code}$ is completely determined given the value of $\mathcal{T}(a)$ for every $a \in A_{code}$.

The statement that $\mathcal{T}(\calo)$ is an ultraweakly continuous function of $\calo \in M_{code}$ implies that for any $\calo_1 \in M_{code}$ and any $\epsilon > 0$, there exists a neighborhood $\mathcal{N}$ of $\calo_1$ in the ultraweak operator topology such that for all operators $\calo_2 \in \mathcal{N}$, $|\mathcal{T}(\calo_2) - \mathcal{T}(\calo_1)| < \epsilon$. We may assume that $\mathcal{N}$ is given by
\begin{equation}
\mathcal{N} = \{\calo_2 \in M_{code}: \sum_{i = 1}^\infty|\braket{\eta_i|(\calo_1-\calo_2)|\xi_i}| < \epsilon \},\end{equation} 
for some $\epsilon > 0$ and some choice of sequences $\{\ket{\xi_i}\}$ and $\{\ket{\eta_i}\}$ satisfying 
\begin{equation} 
\sum_{i = 1}^\infty (||\ket{\xi_i}||^2+||\ket{\eta_i}||^2) < \infty. 
\end{equation}

Given $\calo_1 \in M_{code}$, let $\{a_n\} \in A_{code}$ be a sequence of operators that converges strongly to $\calo_1$. We need to show that for any choice of $\epsilon$ and $\{\ket{\xi_i}\},\{\ket{\eta_i}\}$, there exists an $N \in \mathbb{N}$ such that  $n > N \implies a_n \in \mathcal{N}$. We calculate
\begin{align}
\begin{split}
\sum_{i = 1}^\infty |\braket{\eta_i|(\calo_1-a_n)|\xi_i}| &= \sum_{i = 1}^{M-1} |\braket{\eta_i|(\calo_1-a_n)|\xi_i}| + \sum_{i = M}^\infty |\braket{\eta_i|(\calo_1-a_n)|\xi_i}|, \\
\sum_{i = M}^\infty |\braket{\eta_i|(\calo_1-a_n)|\xi_i}| &\leq \sum_{i = M}^\infty \frac{||\calo_1-a_n||}{2}(||\ket{\xi_i}||^2 + ||\ket{\eta_i}||^2) \leq K \sum_{i = M}^\infty (||\ket{\xi_i}||^2 + ||\ket{\eta_i}||^2),
\end{split}
\end{align}
for some $K >0$. We used the fact that the sequence of norms $\{||\calo_1-a_n||\}$ is bounded. First, choose $M$ so that 
\begin{equation}
K \sum_{i = M}^\infty (||\ket{\xi_i}||^2 + ||\ket{\eta_i}||^2) < \frac{\epsilon}{2}. 
\end{equation}Then, choose $N$ so that for all $n > N$, 
\begin{equation}
\sum_{i = 1}^{M-1} |\braket{\eta_i|(\calo_1-a_n)|\xi_i}| < \frac{\epsilon}{2}. 
\end{equation}
Hence for any $\epsilon > 0$, it is possible to choose an $N \in \mathbb{N}$ such that for $n > N$, $|\mathcal{T}(\calo_1) - \mathcal{T}(a_n)| < \epsilon$. Then we can conclude that
\begin{equation}
\lim_{n \rightarrow \infty} \mathcal{T}(a_n) = \mathcal{T}(\calo_1).
\end{equation}
If $\mathcal{T}(a)$ is known for all $a \in A_{code}$, then $\mathcal{T}(\calo)$ is known for all $\calo \in M_{code}$.

\section{The relative Tomita operator}
\label{sec:reltomita}

In this section, we study the relative Tomita operator defined on $\calh_{code}$.  See Section 3 of \cite{HolographicEntropy} for a review of Tomita-Takesaki theory. Given $\ket{\Psi},\ket{\Phi} \in \calh_{code}$, the relative Tomita operator with respect to $M_{code}$ is denoted by $S^c_{\Psi|\Phi}$. For $\calo \in M_{code}$,
\begin{equation} 
S^c_{\Psi|\Phi} \calo \ket{\Psi} = \calo^\dagger \ket{\Phi}.
\label{eq:codetomitadefinition} 
\end{equation} 
 The vector $\ket{\Psi}$ must be cyclic and separating with respect to $M_{code}$, but $\ket{\Phi}$ can be anything. In this section, we show that $S^c_{\Psi|\Phi}$ can be bounded or unbounded, depending on the choice of $\ket{\Psi}$ and $\ket{\Phi}$. In Section \ref{sec:normtomitafinite}, we compute the norm of the relative Tomita operator for a general, finite-dimensional Hilbert space. In Sections \ref{sec:tomitabounded} and \ref{sec:tomitaunbounded}, we provide one example in our setup where $S^c_{\Psi|\Phi}$ is bounded, and one example where $S^c_{\Psi|\Phi}$ is unbounded.

\subsection{Norm of the Tomita operator in a finite-dimensional Hilbert space}

\label{sec:normtomitafinite}

In this section we consider a Hilbert space $\calh = \calh_1 \otimes \calh_2$ for finite-dimensional Hilbert spaces $\calh_1$ and $\calh_2$ with equal dimension $D$. We want to compute the norm of the relative Tomita operator $S_{\Psi|\Phi}$ defined with respect to the algebra of operators acting on $\calh_1$. First, we perform Schmidt decompositions of $\ket{\Psi}$ and $\ket{\Phi}$:
\begin{align}
\ket{\Psi}= \sum_{k = 1}^{D} \alpha_k \ket{e_k} \otimes \ket{f_k} , \quad
\ket{\Phi}= \sum_{k=1}^{D} \beta_k \ket{g_k} \otimes \ket{h_k} ,
\label{eq:schmidt}
\end{align}
where $\ket{e_k}$ and $\ket{g_k}$ ($k \in \{ 1,2,\ldots,D\}$) are orthonormal bases of $\calh_1$ and $\ket{f_k}$ and $\ket{h_k}$ are orthonormal bases of $\calh_2$. All of the $\alpha_k$ coefficients must be nonzero. The action of $S_{\Psi|\Phi}$ on any normalized state is given by
\begin{equation} S_{\Psi|\Phi} \sum_{i=1,j=1}^D c_{ij}\ket{e_j} \otimes \ket{f_i} = \sum_{i=1,j=1,k=1}^D\frac{c_{ij}^*}{\alpha_i^*}  \beta_k \braket{e_j|g_k} \ket{e_i}\otimes \ket{h_k}, \end{equation}
where $\sum_{i=1,j=1}^D |c_{ij}|^2 = 1$. The norm of $S_{\Psi|\Phi}$ is found by maximizing the norm of the right hand side above with respect to the coefficients $c_{ij}$, subject to the normalization constraint. One finds that
\begin{equation} ||S_{\Psi|\Phi}|| = \frac{\max_{k = 1}^D |\beta_k|}{\min_{k=1}^D |\alpha_k|}. \end{equation}

\subsection{Example where $S^c_{\Psi|\Phi}$ is bounded}
\label{sec:tomitabounded}
In this section, we show that it is possible to choose states for which the relative Tomita operator is bounded. We consider as a special case $S_{\psi|\phi}^c$ for $\ket{\psi},\ket{\phi} \in p\calh_{code}$. Suppose that for $\ket{\psi}$ (resp. $\ket{\phi}$), the qutrit pairs in the $n_\psi$th (resp. $n_\phi$th) collection and beyond are in the reference state $\ket{\lambda}$. We note that there are many choices of $n_\psi$ and $n_\phi$, but our argument is independent of the choice we make. 

We consider a finite case of $n$ by letting $n = \max{(n_\Psi,n_\Phi)}$. By considering equation \eqref{eq:codetomitadefinition} for the case that $\calo$ can be written as $a^{(N)}$ in equation \eqref{eq:opinacode} with $N = n -1$, we may see how $S^c_{\psi|\phi}$ acts on any vector in $p\calh_{code}$ for which the qutrit pairs in the $n$th collection and beyond are in the reference state $\ket{\lambda}$. Let us temporarily restrict our attention to the $9^{n-1}$-dimensional Hilbert subspace spanned by these vectors, which may be written as $\calh_{i} \otimes \calh_{j}$, where $\calh_{i}$ and $\calh_{j}$ are the $3^{n-1}$-dimensional Hilbert spaces containing the states of the qutrits labeled by $i$ and $j$ respectively in $n-1$ copies of Figure \ref{fig:twotofourexample}. Doing the Schmidt decomposition as in equation \eqref{eq:schmidt} (where we set $D = 3^{n-1}$), we find that the maximum value of $||S^c_{\psi|\phi} \ket{\chi}||$ for a normalized vector $\ket{\chi} \in \calh_{i} \otimes \calh_{j}$ is \begin{equation}\frac{\max_{k=1}^{3^{n-1}} |\beta_k|}{\min_{k=1}^{3^{n-1}} |\alpha_k|}.\end{equation} 
It is crucial that none of the $\alpha_k$ coefficients vanish.

Let us now restrict our attention to the larger subspace of $p\calh_{code}$ where all qutrit pairs in the $(n+1)$th collection and beyond are in the reference state $\ket{\lambda}$. We want to do Schmidt decompositions of $\ket{\psi}$ and $\ket{\phi}$ in this $9^n$ dimensional Hilbert subspace.
 Let $\alpha_k$,$\beta_k$,$\ket{e_k}$,$\ket{g_k}$,$\ket{f_k}$,$\ket{h_k}$ for $k \in \{ 1,2,\ldots,3^{n-1}\}$ be defined as in equation \eqref{eq:schmidt} for the Schmidt decomposition in the $9^{n-1}$ dimensional subspace considered in the previous paragraph. Next, define
\begin{equation} \ket{\hat{e}_p} := \left\{ \begin{array}{cc}
	\ket{e_p} \otimes \ket{0}, & p = 1,\ldots,3^{n-1} \\ 
	\ket{e_{p - 3^{n-1}}} \otimes \ket{1}, & p = 3^{n-1}+1,\ldots,2\cdot3^{n-1} \\ 
	\ket{e_{p - 2\cdot3^{n-1}}} \otimes \ket{2}, & p = 2\cdot3^{n-1}+1,\ldots, 3^{n}
	\end{array}  \right.,\end{equation}  
where $\ket{0},\ket{1},\ket{2}$ are states of the $n$th black qutrit labeled $i$. The vectors $\ket{\hat{g}_p}$,$\ket{\hat{f}_p}$, and $\ket{\hat{h}_p}$ are defined analogously. Furthermore, define
\begin{equation} \hat{\alpha}_p := \left\{ \begin{array}{cc}
\frac{1}{\sqrt{3}}\alpha_p, & p = 1,\ldots,3^{n-1} \\ 
\frac{1}{\sqrt{3}}\alpha_{p - 3^{n-1}} , & p = 3^{n-1}+1,\ldots,2\cdot3^{n-1} \\ 
\frac{1}{\sqrt{3}}\alpha_{p - 2\cdot3^{n-1}}, & p = 2\cdot3^{n-1}+1,\ldots, 3^{n}
\end{array}  \right. .\end{equation} We define $\hat{\beta}_p$ analogously. The Schmidt decomposition is then given by
\begin{align}
\ket{\psi} &= \sum_{p = 1}^{3^n} \hat{\alpha}_p \ket{\hat{e}_p} \otimes \ket{\hat{f}_p} ,
\\
\ket{\phi} &= \sum_{p=1}^{3^n} \hat{\beta}_p \ket{\hat{g}_p} \otimes \ket{\hat{h}_p} .
\end{align}
If $\ket{\chi}$ is a normalized vector in the $9^n$ dimensional subspace, then the maximum value of $||S^c_{\psi|\phi} \ket{\chi}||$ is \begin{equation}\frac{\max_{p = 1}^{3^n} |\hat{\beta}_p|}{\min_{p = 1}^{3^n} |\hat{\alpha}_p|} = \frac{\max_{k = 1}^{3^{n-1}} |\beta_k|}{\min_{k = 1}^{3^{n-1}} |\alpha_k|}.\end{equation} Iterating the procedure of doing the Schmidt decompositions in larger subspaces of the code pre-Hilbert space, we see that for any vector $\ket{\eta} \in p\calh_{code}$, \begin{equation}||S^c_{\psi|\phi} \ket{\eta}|| \leq \frac{\max_{k = 1}^{3^{n-1}} |\beta_k|}{\min_{k = 1}^{3^{n-1}} |\alpha_k|} ||\ket{\eta}||.\end{equation}

Choose any $\ket{\Theta} \in \calh_{code}$. Let $\{\ket{\theta_\ell}\} \in p\calh_{code}$ be a sequence that converges to $\ket{\Theta}$. Define a sequence of operators $\{a_\ell\} \in A_{code}$ such that $\ket{\theta_\ell} = a_\ell \ket{\psi} \, \forall \ell \in \mathbb{N}$. Note that $a_\ell^\dagger \ket{\phi} = S^c_{\psi|\phi} \ket{\theta_\ell} \, \forall \ell \in \mathbb{N}$. For any $\ell,m \in \mathbb{N}$, we then have that
\begin{equation} ||(a_\ell^\dagger - a_m^\dagger) \ket{\phi}|| \leq \frac{\max_{k = 1}^{3^{n-1}} |\beta_k|}{\min_{k = 1}^{3^{n-1}}|\alpha_k|} || \ket{\theta_\ell} - \ket{\theta_m}||. \end{equation}
Hence, $\lim_{\ell \rightarrow \infty} a_\ell^\dagger \ket{\phi}$ exists. Thus, $S^c_{\psi|\phi}$ is a bounded operator defined on all of $\calh_{code}$.


\subsection{Example where $S^c_{\Psi|\Phi}$ is unbounded}
\label{sec:tomitaunbounded}
In this section, we show that for a particular choice of $\ket{\Psi},\ket{\Phi} \in \calh_{code}$,  $S^c_{\Psi|\Phi}$ is unbounded. Let $\ket{\Psi}$ be the vector for which all qutrit pairs are in the reference state $\ket{\lambda}$. $\ket{\Phi}$ will be constructed as a limit of a sequence of vectors $\{\ket{\phi_n}\} \in p\calh_{code}$. Let $\{\delta_i\}$ be a sequence of positive real numbers such that $\sum_{i = 1}^\infty \delta_i$ is finite. For $N \in \mathbb{N}$, let $\ket{e^N_a}$, $a \in \{ 1,2,\ldots,3^N$\}, denote an orthonormal basis vector of the qutrits labeled $i$ (see Figure \ref{fig:twotofourexample}) in the first $N$ collections. In particular,
\begin{equation} \ket{e^{1}_1} := \ket{0}_{i_1}, \  \ket{e^{1}_2} := \ket{1}_{i_1}, \ \ket{e^{1}_3} := \ket{2}_{i_1}. \end{equation} 
\begin{equation} \ket{e^{N}_a} := \left\{ \begin{array}{cc}
\ket{e^{N-1}_a}_{i_1 \cdots i_{N-1}} \otimes \ket{0}_{i_N}, & a = 1,\ldots,3^{N-1} \\ 
\ket{e^{N-1}_{a - 3^{N-1}}}_{i_1 \cdots i_{N-1}} \otimes \ket{1}_{i_N}, & a = 3^{N-1}+1,\ldots,2\cdot3^{N-1} \\ 
\ket{e^{N-1}_{a - 2\cdot3^{N-1}}}_{i_1 \cdots i_{N-1}} \otimes \ket{2}_{i_N}, & a = 2\cdot3^{N-1}+1,\ldots, 3^{N}
\end{array}  \right. .\end{equation}

Let $\ket{f^N_a}$, $a \in \{ 1,2,\ldots,3^N\}$, denote an orthonormal basis vector of the qutrits labeled $j$ in the first $N$ collections, defined in the same way as above. Each $\ket{\phi_n}$ is defined by
\begin{equation} \ket{\phi_n} := \sum_{a = 1}^{3^n} \sum_{b = 1}^{3^n} c^n_{ab} \ket{e^n_a} \ket{f^n_b} \otimes \ket{\lambda}\cdots, \end{equation}
where $c^n_{ab}$ is a $3^n \times 3^n$ matrix to be specified. The $\otimes \ket{\lambda} \cdots$ indicates that all black qutrit pairs in the $(n+1)$th collection and beyond are in the reference state $\ket{\lambda}$. Choose an arbitrary $x \in \mathbb{R}$ such that $x > 0$. Each $c^n_{ab}$ is defined by
\begin{equation} c^1_{ab} := \frac{1}{\sqrt{3}} \left(\begin{array}{ccc}
x & 0 & 0 \\ 
0 & x & 0 \\ 
0 & 0 & x
\end{array} \right)_{ab} + \left(\begin{array}{ccc}
\delta_1 & 0 & 0 \\ 
0 & 0 & 0 \\ 
0 & 0 & 0
\end{array}\right)_{ab} ,\end{equation}



\begin{equation} c^2_{ab} := \frac{1}{\sqrt{3}} \left(\begin{array}{ccc}
c^1 & 0_{3\times3} & 0_{3\times3} \\ 
0_{3\times3} & c^1 & 0_{3\times3} \\ 
0_{3\times3} & 0_{3\times3} & c^1
\end{array} \right)_{ab} + \left(\begin{array}{cc}
\delta_2 & 0_{1 \times 8} \\
0_{ 8 \times 1} & 0_{8 \times 8} \\
\end{array}\right)_{ab} ,\end{equation}

\begin{equation} \nonumber \vdots \end{equation}
\begin{equation} c^n_{ab} := \frac{1}{\sqrt{3}} \left(\begin{array}{ccc}
c^{n-1} & 0_{3^{n-1}\times3^{n-1}} & 0_{3^{n-1}\times3^{n-1}} \\ 
0_{3^{n-1}\times3^{n-1}} & c^{n-1} & 0_{3^{n-1}\times3^{n-1}} \\ 
0_{3^{n-1}\times3^{n-1}} & 0_{3^{n-1}\times3^{n-1}} & c^{n-1}
\end{array} \right)_{ab} + \left(\begin{array}{cc}
\delta_n & 0_{1 \times (3^n - 1)}  \\ 
0_{(3^n - 1) \times 1} & 0_{(3^n - 1) \times (3^n - 1)} \\
\end{array}\right)_{ab}. \end{equation}
Assuming $n > m$, we see that $||\phi_n - \phi_{m}|| \leq \sum_{i  ={m+1}}^{n} \delta_i$. Thus, $\ket{\Phi} := \lim_{n \rightarrow \infty} \ket{\phi_n}$ exists.

To demonstrate that $S^c_{\Psi|\Phi}$ is unbounded, we will construct a sequence of bounded operators $\{a_n \}\in A_{code}$ such that $ \lim_{n \rightarrow \infty} a_n \ket{\Psi}=0$ while $\lim_{n\rightarrow \infty}a_n^\dagger \ket{\Phi}$ does not converge. For $n \in \mathbb{N}$, define
\begin{equation} a_n := \epsilon_n \sqrt{3^n} (\ket{e_1^n}\bra{e_1^n}_{i_1 \cdots i_n} \otimes I_{j_1 \cdots j_n}) \otimes I \cdots, \end{equation}
where $\{\epsilon_n\}$ is a sequence of positive real numbers that we will specify later. Note that
\begin{equation} a_n \ket{\Psi} = \epsilon_n  (\ket{e_1^n}_{i_1 \cdots i_n} \otimes \ket{f_1^n}_{j_1 \cdots j_n}) \otimes \ket{\lambda}\cdots, \end{equation}
\begin{equation} ||a_n \ket{\Psi}|| = \epsilon_n. \end{equation}
Hence, $\lim_{n \rightarrow \infty} a_n\ket{\Psi} = 0$ when $\lim_{n\rightarrow \infty} \epsilon_n = 0$.

Next, we will consider the sequence $\{a_n^\dagger \ket{\Phi}\}$. Note that, for $n \in \mathbb{N}$,

\begin{equation} a_n^\dagger \ket{\phi_n} = \epsilon_n \sqrt{3^n} c^n_{11} \ket{e^n_1} \ket{f^n_1} \otimes \ket{\lambda}\cdots, \end{equation}
\begin{equation} ||a_n^\dagger \ket{\phi_n}|| = \epsilon_n \sqrt{3^n} c_{11}^n =  \epsilon_n (x + \sum_{i = 1}^n \sqrt{3^i} \delta_i). \end{equation}
One can verify that $||a_n^\dagger \ket{\phi_n}|| \leq ||a_n^\dagger \ket{\Phi}||$. Hence,
\begin{equation} ||a_n^\dagger \ket{\Phi}|| \geq \epsilon_n (x + \sum_{k = 1}^n \sqrt{3^k} \delta_k). \end{equation}
We may set $\delta_k = \frac{1}{k^2}$. Then $(x + \sum_{k = 1}^n \sqrt{3^k} \delta_k)$ grows without bound. We may choose $\epsilon_n$ to go to zero slowly enough so that $\epsilon_n (x + \sum_{k = 1}^n \sqrt{3^k} \delta_k)$ also grows without bound. Hence, $||a_n^\dagger \ket{\Phi}||$ grows without bound, so $S^c_{\Psi|\Phi}$ is an unbounded operator.

\section{Computing relative entropy for hyperfinite von Neumann algebras}

\label{sec:computerelentropy}

While the definition of relative entropy for infinite-dimensional von Neumann algebras is elegant, it is difficult to use in practice. To compute the relative entropy, one in principle needs to explicitly perform a spectral decomposition of the relative modular operator. However, because our setup involves hyperfinite von Neumann algebras, we can show that there is a more practical method to compute relative entropy. Recall that a hyperfinite von Neumann algebra $M$ may be written as $M = (\cup_{n=1}^\infty M_n)^{\prime \prime}$ where each $M_n$ denotes a finite-dimensional subalgebra of $M$ and $M_n \subset M_{n + 1} \, \forall n \in \mathbb{N}$. We will show that given a hyperfinite von Neumann algebra $M$ and two cyclic and separating vectors, the relative entropy of the two vectors may be computed by computing their relative entropy with respect to $M_n$ and then taking the limit $n \rightarrow \infty$. This result parallels the result of \cite{Araki}, but our explanation is better suited for studying our setup.\footnote{
	In particular, \cite{Araki} shows that the relative entropy of two linear functionals on a von Neumann algebra is a limit of relative entropies computed with respect to finite-dimensional subalgebras. However, we are more interested in the relative entropy of two vectors in the Hilbert space. Given a Hilbert space vector, we show how to compute a finite-dimensional density matrix. This allows us to express the infinite-dimensional relative entropy of two vectors as a limit of finite-dimensional entropies.}
 Computing the relative entropy with respect to $M_n$ intuitively amounts to performing a partial trace and using the finite-dimensional relative entropy formula on the reduced density matrices. In the next subsection, we precisely describe how to use the finite-dimensional relative entropy formula to compute the relative entropy defined with respect to a finite-dimensional subalgebra of a hyperfinite algebra. In particular, we will write the entropy in a form that is convenient for taking the limit $n \rightarrow \infty$. In section \ref{sec:monoton}, we review the monotonicity of relative entropy, which we use later. In section \ref{sec:proofoflimit}, we fully explain why the limit of finite-dimensional entropies equals the infinite-dimensional entropy.

\subsection{Defining relative entropy with respect to a finite-dimensional subalgebra}

The purpose of this section is to describe the relative entropy defined with respect to a finite-dimensional subalgebra of a hyperfinite algebra in a way that will be useful when we consider the limit of larger and larger subalgebras. Let $M$ be a hyperfinite von Neumann algebra on $\calh$, and let $M_n$ be a finite-dimensional subalgebra of $M$. Let $\ket{\Psi},\ket{\Phi} \in \calh$ be cyclic and separating with respect to $M$. Suppose that we want to compute the relative entropy of $\ket{\Phi}$ and $\ket{\Psi}$ with respect to $M_n$. Note that while $\ket{\Phi}$ and $\ket{\Psi}$ are separating with respect to $M_n$, they need not be cyclic. However, they may still be thought of as cyclic if we restrict our attention to subspaces of $\calh$ denoted by $\overline{M_n \ket{\Psi}}$ and $\overline{M_n \ket{\Phi}}$.
\begin{defn}
	Given a Hilbert space $\calh$, a von Neumann algebra $M \subset \calb(\calh)$, and a vector $\ket{\Psi} \in \calh$, let $\overline{M \ket{\Psi}}$ denote the closure of the set of vectors generated by acting on $\ket{\Psi}$ with all operators in $M$. That is,
	\[ \overline{M \ket{\Psi}} := \{\ket{\chi} \in \calh : \exists \{\calo_n\} \in M, \, \, \lim_{n \rightarrow \infty} \calo_n \ket{\Psi} = \ket{\chi} \}. \]
\end{defn}
We now explain how to compute the relative entropy of $\ket{\Psi}$ and $\ket{\Phi}$ with respect to $M_n$. First, the relative Tomita operator $S^n_{\Psi|\Phi}$ is defined to map $\calo \ket{\Psi}$ to $\calo^\dagger \ket{\Phi}$ for all $\calo \in M_n$. The Tomita operator should be viewed as a map between two different Hilbert spaces, $\overline{M_n \ket{\Psi}}$ and $\overline{M_n \ket{\Phi}}$. Since $M_n$ is finite-dimensional, $S^n_{\Psi|\Phi}$ is a bounded operator on $\overline{M_n \ket{\Psi}}$. The relative modular operator $\Delta^n_{\Psi|\Phi} = S^{n \,\dagger}_{\Psi|\Phi}S^n_{\Psi|\Phi}$ is a self-adjoint operator on $\overline{M_n \ket{\Psi}}$, and it may be defined to act as the identity operator on the orthogonal complement $\left(\overline{M_n \ket{\Psi}}\right)^\perp$. Then, the relative entropy is defined as \begin{equation}\label{eq:relentropy} \mathcal{S}_n =-\braket{\Psi|\log \Delta^n_{\Psi|\Phi} | \Psi}.\end{equation}

Equation \eqref{eq:relentropy} will appear again when we consider the limit of larger subalgebras. We now relate $\mathcal{S}_n$ to the more familiar finite-dimensional relative entropy formula. Because $M_n$ is a finite-dimensional von Neumann algebra that acts on the finite-dimensional Hilbert space $\overline{M_n \ket{\Psi}}$, we note that $\overline{M_n \ket{\Psi}}$ may be written as \cite{Harlow:2016fse}
\begin{equation}
\overline{M_n \ket{\Psi}} = \bigoplus_\alpha \left(\calh_{A_\alpha} \otimes \calh_{\bar{A}_\alpha}\right),
\label{eq:hilbertspacedecomposition}
\end{equation}
while $M_n$ may be written as
\begin{equation}
M_n = \{ \bigoplus_\alpha \left(\calo_{A_\alpha} \otimes I_{\bar{A}_\alpha}\right) : \calo_{A_\alpha} \in \calb(\calh_{A_\alpha}) \}. 
\end{equation}
Restricting our attention to $\overline{M_n \ket{\Psi}}$, the vector $\ket{\Psi}$ is cyclic and separating with respect to $M_n$. This implies that for each $\alpha$, $\dim \calh_{A_\alpha} = \dim \calh_{\bar{A}_\alpha}$ \cite{Witten:2018zxz}.

We now explain how to obtain a density matrix on $\overline{M_n \ket{\Psi}}$ from $\ket{\Psi}$. Intuitively, one simply needs to perform a partial trace on $\ket{\Psi}\bra{\Psi}$, since $\ket{\Psi} \in \overline{M_n \ket{\Psi}}$. However, we follow a different procedure that will also allow us to obtain a density matrix on $\overline{M_n \ket{\Psi}}$ from $\ket{\Phi}$, even though we might have that $\ket{\Phi} \notin \overline{M_n \ket{\Psi}}$. Let us define a linear map $T_\Psi : M_n \rightarrow \mathbb{C}$ such that $T_\Psi(\calo) = \braket{\Psi|\calo|\Psi} \, \forall \calo \in M_n$. The map $T_\Psi$ is positive. Assuming that $\ket{\Psi}$ is normalized, $T_\Psi(I) = 1$. The map $T_\Psi$ is also faithful because $\ket{\Psi}$ is separating with respect to $M_n$. If we restrict the domain of $T_\Psi$ to the set of operators in $M_n$ that annihilate $\calh_{A_\alpha} \otimes \calh_{\bar{A}_\alpha}$ for all $\alpha \neq 1$, then we can naturally define a hermitian, positive operator on $\calh_{A_1}$ as follows. Let $\ket{i}, \quad i \in \{1,2,\cdots,\dim \calh_{A_1} \}$ denote an orthonormal basis of $\calh_{A_1}$. Any operator in $\calb(\calh_{A_1})$ may be written as a linear combination of the operators $\ket{i}\bra{j} \quad \forall i,j \in \{1,2,\cdots,\dim \calh_{A_1} \}$. To treat $\ket{i}\bra{j}$ as an operator in $M_n$ that acts on all of $\overline{M_n \ket{\Psi}}$, we define $\ket{i}\bra{j}$ to act as the identity on $\calh_{\bar{A}_1}$ and to annihilate the subspaces $\calh_{A_\alpha} \otimes \calh_{\bar{A}_\alpha} \, \, \forall \alpha \neq 1$. Then, we define the operator $\rho^{(1)}_\Psi \in \calb(\calh_{A_1})$ by $ \braket{i|\rho^{(1)}_\Psi|j} = T_\Psi(\ket{j}\bra{i})$. We then extend the definition of $\rho^{(1)}_\Psi$ to an operator on $\calh_{A_1} \otimes \calh_{\bar{A}_1}$ by defining $\rho^{(1)}_\Psi$ to act as the identity on $\calh_{\bar{A}_1}$. In this way, we can define an operator $\rho^{(\alpha)}_\Psi$ acting on each $\calh_{A_\alpha} \otimes \calh_{\bar{A}_\alpha}$. Then, we define the density matrix $\rho_\Psi \in M_n$ to be the direct sum of all the $\rho_{\Psi}^{(\alpha)}$ for all values of $\alpha$. That is,
\begin{equation}
\rho_\Psi = \bigoplus_\alpha \rho_\Psi^{(\alpha)}.  
\end{equation} Note that $\sum_\alpha \text{Tr}_{A_\alpha} \rho_\Psi^{(\alpha)} = 1$ by construction and that $\rho_\Psi$ only depends on $\ket{\Psi}$ through the linear map $T_\Psi$. Also, $\ket{\Psi}$ must be a purification of $\rho_\Psi$ on $\overline{M_n \ket{\Psi}}$.

Even though $\ket{\Phi}$ is not necessarily in $\overline{M_n \ket{\Psi}}$, we can still define a density matrix $\rho_\Phi$ on $\overline{M_n \ket{\Psi}}$ with the linear map $T_\Phi$, which is defined analogously to $T_\Psi$. Let $\ket{\tilde{\Phi}} \in \overline{M_n \ket{\Psi}}$ be a purification of $\rho_\Phi$. We want to ask how $\ket{\Phi}$ is related to $\ket{\tilde{\Phi}}$. Note that $\braket{\Phi|\calo|\Phi} = \braket{\tilde{\Phi}|\calo|\tilde{\Phi}} \, \forall \calo \in M_n$. Define the linear map $U^\prime : \overline{M_n \ket{\Phi}} \rightarrow \overline{M_n \ket{\Psi}}$ such that $U^\prime \calo \ket{\Phi} = \calo \ket{\tilde{\Phi}} \, \forall \calo \in M_n$. Because $M_n$ is finite-dimensional, $U^\prime$ is a bounded operator, and $U^\prime$ has trivial kernel because $\ket{\Psi}$ is separating with respect to $M_n$. Because $||\calo \ket{\Phi}|| = ||\calo \ket{\tilde{\Phi}}|| \, \forall \calo \in M_n$, $U^\prime$ is an isometry. Because $U^\prime$ is invertible, $U^\prime$ satisfies $U^{\prime \, \dagger} U^\prime = I$, and from its definition we can see that $U^\prime$ commutes with all operators in $M_n$. Because $\ket{\tilde{\Phi}} = U^\prime \ket{\Phi}$, we see that the relative modular operator $\Delta^n_{\Psi|\Phi}$ defined at the beginning of this section equals the relative modular operator $\Delta^n_{\Psi|\tilde{\Phi}}$. Then, the relative entropy of $\ket{\Psi}$ and $\ket{\Phi}$ computed with respect to $M_n$ is given by \begin{equation}\mathcal{S}_n =-\braket{\Psi|\log \Delta^n_{\Psi|\tilde{\Phi}} | \Psi}.\end{equation} Since $\ket{\Psi}$ and $\ket{\tilde{\Phi}}$ are both vectors in the same finite-dimensional Hilbert space $\overline{M_n\ket{\Psi}}$, it is straightforward to see \cite{Witten:2018zxz} that $\mathcal{S}_n$, defined in equation \eqref{eq:relentropy}, is given by equation (A.21) of \cite{Harlow:2016fse} for $\rho = \rho_\Psi$, $\sigma = \rho_\Phi$, $M = M_n$, which is the finite-dimensional relative entropy formula.

The relative entropy defined with respect to $M_n$ of the vectors $\ket{\Psi}$ and $\ket{\Phi}$ only depends on $\ket{\Psi}$ and $\ket{\Phi}$ through the linear maps $T_\Psi$ and $T_\Phi$. As long as we can represent $M_n$ on a finite-dimensional Hilbert space with a cyclic and separating vector, we can decompose the Hilbert space as in \eqref{eq:hilbertspacedecomposition} (see \cite{Harlow:2016fse} for the details) and compute the relative entropies using $\rho_\Psi$ and $\rho_\Phi$, which are defined from $T_\Psi$ and $T_\Phi$.

Applying the above discussion to our tensor network model, we let $M_n \subset M_{code}$ be a finite-dimensional subalgebra of $M_{code}$ that consists of operators that act on the black qutrits labeled $i$ (see Figure \ref{fig:twotofourexample}) in the first $n$ collections. Let $\ket{\Psi},\ket{\Phi} \in \calh_{code}$ be cyclic and separating with respect to $M_{code}$. To compute the relative entropy with respect to $M_n$ of $\ket{\Psi}$ and $\ket{\Phi}$, we consider the action of $M_n$ on the Hilbert space associated with the first $n$ qutrit pairs. The relative entropy may be computed from the density matrices $\rho_\Psi$ and $\rho_\Phi$, which are constructed using the linear maps $T_\Psi$ and  $T_\Phi$. This intuitively amounts to performing a partial trace on $\ket{\Psi}\bra{\Psi}$ and $\ket{\Phi}\bra{\Phi}$ over all of $\calh_{code}$ except the Hilbert space of the first $n$ qutrits. In this subsection, we have shown that the result is equivalent to equation $\eqref{eq:relentropy}$. In the remainder of this section we will show that the infinite $n$ limit of equation $\eqref{eq:relentropy}$ yields the relative entropy of $\ket{\Psi}$ and $\ket{\Phi}$ with respect to $M_{code}$.

\subsection{Monotonicity of Relative Entropy}

\label{sec:monoton}

To show that the limit of finite-dimensional relative entropies equals the infinite-dimensional relative entropy, we use the monotonicity of relative entropy, which is nicely explained using a graph argument in \cite{Witten:2018zxz, Borchers}. 
However, our proof of the monotonicity of relative entropy is slightly different, as we do not assume that cyclic states remain cyclic after restricting the von Neumann algebra to a subalgebra. In the remainder of section \ref{sec:computerelentropy}, we make use of definitions and theorems given in \cite{HolographicEntropy}, such as the spectral theorem.

	Let $\calh$ be a separable Hilbert space with an orthonormal basis $\{\ket{e_i}\}$. Any $\ket{\chi} \in \calh$ may be written as
	\begin{equation}
	\label{eq:chi}
	\ket{\chi} = \sum_{i  =1}^\infty \ket{e_i} \braket{e_i|\chi}.
	\end{equation}
	Define the operator $K : \calh \rightarrow \calh$ as 
	\begin{equation}
	\label{eq:kchi}
	K \ket{\chi} := \sum_{i  =1}^\infty \ket{e_i} \braket{\chi|e_i}.
	\end{equation}
	The sum in equation \eqref{eq:kchi} is convergent because the sum in equation \eqref{eq:chi} is convergent.
The operator $K$ satisfies the following properties:
\begin{itemize}
	\item $K^2 = I$,
	\item $K \ket{\alpha \psi + \beta \chi} = \alpha^* K\ket{\psi} + \beta^* K\ket{\chi} \quad \forall \alpha,\beta \in \mathbb{C} \quad \forall \ket{\psi}, \ket{\chi} \in \calh$,
	\item Given a sequence $\{\ket{\psi_n} \} \in \calh$ and a vector $\ket{\psi} \in \calh$, $\lim_{n \rightarrow \infty} \ket{\psi_n} = \ket{\psi}$ if and only if $\lim_{n \rightarrow \infty} K\ket{\psi_n} = K\ket{\psi}$,
	\item $\braket{K \psi|K \chi} = \braket{\chi|\psi} \quad \forall \ket{\psi},\ket{\chi} \in \calh$,
	\item $\braket{ \psi|K |\chi} = \braket{ \chi | K |\psi} \quad \forall \ket{\psi},\ket{\chi} \in \calh$.
\end{itemize}

\begin{defn}
	Let $X$ be a linear operator on $\calh$. The \emph{graph} of $X$ is a subset of the Hilbert space $\calh \oplus \calh$, given by
	\[ \Gamma_X := \left\{ \left( \begin{array}{c}
	\ket{\psi} \\ 
	X \ket{\psi}
	\end{array}  \right) \in \calh \oplus \calh : \ket{\psi} \in D(X) \right\}. \]
\end{defn}
Let $S$ be a closed, densely defined, antilinear operator on $\calh$. Define $X := K S$. Note that $X^\dagger X = S^\dagger  S$ and that $X$ is a closed, densely defined, linear operator on $\calh$. The graph $\Gamma_X$ is thus a closed linear subspace of the Hilbert space $\calh \oplus \calh$. We define $\Pi_X$ to be the projection operator onto $\Gamma_X$, which satisfies $\Pi_X^2 = \Pi_X^\dagger = \Pi_X$. Since any vector in $\calh \oplus \calh$ can be represented as a column vector \begin{equation}\left(\begin{array}{c}
\ket{\psi} \\ 
\ket{\phi}
\end{array} \right) \quad \text{ for } \ket{\psi},\ket{\phi} \in \calh,\end{equation} we may represent $\Pi_X$ as a two by two matrix:
\begin{equation}
\Pi_X = \left(\begin{array}{cc}
p_{11} & p_{12} \\ 
p_{21} & p_{22}
\end{array} \right),
\end{equation}
where each $p_{ij}$ ($i,j \in \{1,2\}$) is a bounded linear operator on $\calh$ (since $\Pi_X$ is bounded). For any $\ket{\psi},\ket{\chi} \in \calh$, we have that $X(p_{11}\ket{\psi} + p_{12} \ket{\chi}) = p_{21} \ket{\psi} + p_{22} \ket{\chi}$. Hence,
\begin{equation}
X p_{11} = p_{21}, \quad X p_{12} = p_{22}.
\end{equation}
The condition $\Pi_X = \Pi_X^\dagger$ implies that $p_{ij}^\dagger = p_{ji} \quad \forall i,j \in \{1,2\}$, and the condition $\Pi_X^2 = \Pi_X$ implies that $\sum_{k = 1}^2 p_{ik} p_{kj} = p_{ij} \quad \forall i,j \in \{1,2\}$. With these relations, one may show that
\begin{equation}
p_{i1}(X^\dagger X + 1) p_{1j} = p_{ij} \quad \forall i,j \in \{1,2\},
\end{equation}
which implies that
\begin{equation}
p_{11}(X^\dagger X + 1)(p_{11} \ket{\psi} + p_{12} \ket{\chi}) = (p_{11} \ket{\psi} + p_{12} \ket{\chi}).
\end{equation}
Note that the domain of $X$ is given by
\begin{equation}
D(X) =  \{p_{11} \ket{\psi} + p_{12} \ket{\chi} :\ket{\psi},\ket{\chi} \in \calh\}.
\end{equation}
Because $D(X)$ is a dense subset of $\calh$, it follows that
\begin{equation}
p_{11} = (1 + X^\dagger X)^{-1}.
\end{equation}
Then, we see that
\begin{equation}
p_{21} = X (1 + X^\dagger X)^{-1}, \quad p_{12} = (1 + X^\dagger X)^{-1} X^\dagger, \quad  p_{22} = X(1 + X^\dagger X)^{-1} X^\dagger.
\end{equation}

In the following theorem, we study modular operators as opposed to relative modular operators. We will make an explicit connection to monotonicity of relative entropy later.

\begin{thm}
	\label{thm:mon1}
	Let $M$ be a von Neumann algebra that acts on a Hilbert space $\calh$. Let $\ket{\Psi} \in \calh$ be cyclic and separating with respect to $M$. Let $S_\Psi^M$ be the Tomita operator defined with respect to $M$ and $\ket{\Psi}$. Let $N$ be a von Neumann subalgebra of $M$ (we do not assume that $\ket{\Psi}$ is cyclic with respect to $N$). On the closed subspace $\overline{N \ket{\Psi}} \subset \calh$, let $S^N_\Psi$ be the Tomita operator defined with respect to $N$ and $\ket{\Psi}$. On the orthogonal complement $\overline{N \ket{\Psi}}^\perp \subset \calh$, let $S^N_\Psi = K$, where $K$ is given in equation \eqref{eq:kchi}. Then for all $\ket{\Phi} \in \overline{N \ket{\Psi}}$ and all $s > 0$, 
	\begin{equation} \nonumber
	\braket{\Phi|\frac{1}{s + (S_\Psi^M)^\dagger S_\Psi^M}|\Phi} \geq 	\braket{\Phi|\frac{1}{s + (S_\Psi^N)^\dagger S_\Psi^N}|\Phi}.
	\end{equation}  
\end{thm}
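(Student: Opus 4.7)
The plan is to run a graph-projection argument using the framework developed in the paragraphs preceding the theorem. Define $X_M := K S_\Psi^M$ and $X_N := K S_\Psi^N$, both closed and densely defined linear operators on $\calh$. Since $K$ is antiunitary one has $X^\dagger X = S^\dagger S$, and the $(1,1)$-block of the projection onto the graph $\Gamma_X \subset \calh \oplus \calh$ is $p_{11}^X = (1 + X^\dagger X)^{-1}$. Replacing $X$ by $X/\sqrt{s}$ turns $p_{11}^{X/\sqrt{s}}$ into $s(s + X^\dagger X)^{-1}$, so after dividing by $s$ the theorem reduces to the case $s = 1$.

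Because $\Pi_X$ is the orthogonal projection onto the closed subspace $\Gamma_X$, we have the variational identity
\begin{equation*}
\braket{\Phi|(1 + X^\dagger X)^{-1}|\Phi} \;=\; \| \ket{\Phi} \|^2 \;-\; \min_{\ket{\psi} \in D(X)} \bigl( \| \ket{\Phi} - \ket{\psi} \|^2 + \| X\ket{\psi} \|^2 \bigr),
\end{equation*}
whose right-hand minimum is the squared distance from $(\ket{\Phi},0)$ to $\Gamma_X$ in $\calh \oplus \calh$. The theorem is therefore equivalent to showing that the minimum over $D(X_M)$ is no larger than the minimum over $D(X_N)$ (for the same $\ket{\Phi}$), which I would establish by exhibiting the explicit $X_N$-minimizer as admissible for the $X_M$-problem with the same value.

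The $X_N$-minimizer is $\ket{\psi_N} := (1 + (S_\Psi^N)^\dagger S_\Psi^N)^{-1} \ket{\Phi}$. Because $S_\Psi^N$ equals the genuine $N$-Tomita operator on $\overline{N\ket{\Psi}}$ and equals the bounded antiunitary $K$ on $\overline{N\ket{\Psi}}^\perp$, the operator $(S_\Psi^N)^\dagger S_\Psi^N$ is block-diagonal with respect to $\calh = \overline{N\ket{\Psi}} \oplus \overline{N\ket{\Psi}}^\perp$, decomposing as $\Delta^{\mathrm{Tom}}_N \oplus I$ where $\Delta^{\mathrm{Tom}}_N$ is the honest modular operator of $N$. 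Functional calculus and the hypothesis $\ket{\Phi} \in \overline{N\ket{\Psi}}$ force $\ket{\psi_N} \in \overline{N\ket{\Psi}}$, where $X_N \ket{\psi_N} = K S_\Psi^N \ket{\psi_N}$ depends only on the honest Tomita operator. The key lemma is then $S_\Psi^M \supseteq S_\Psi^N|_{\overline{N\ket{\Psi}}}$: any approximating sequence $\{a_n\} \subset N$ witnessing $\ket{\psi_N} \in D(S_\Psi^N)$ in the Tomita closure sense lies inside $M$, so the same sequence witnesses $\ket{\psi_N} \in D(S_\Psi^M)$ with $S_\Psi^M \ket{\psi_N} = S_\Psi^N \ket{\psi_N}$ and hence $X_M \ket{\psi_N} = X_N \ket{\psi_N}$. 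Plugging $\ket{\psi_N}$ into the $X_M$-minimization yields
\begin{equation*}
\min_{\ket{\psi} \in D(X_M)} \bigl( \|\ket{\Phi} - \ket{\psi}\|^2 + \|X_M \ket{\psi}\|^2 \bigr) \leq \|\ket{\Phi} - \ket{\psi_N}\|^2 + \|X_N \ket{\psi_N}\|^2 = \min_{\ket{\psi} \in D(X_N)} \bigl( \|\ket{\Phi} - \ket{\psi}\|^2 + \|X_N \ket{\psi}\|^2 \bigr),
\end{equation*}
which by the variational identity is exactly the content of the theorem.

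The main obstacle is the extension lemma $S_\Psi^M \supseteq S_\Psi^N|_{\overline{N\ket{\Psi}}}$: although intuitive, the Tomita operator is defined as a closure, so one must verify carefully that the passage from $N$ to the larger algebra $M$ respects the domain and value of $S_\Psi^N$ on the $N$-cyclic subspace. Once this is in hand, the block-diagonal structure of $(S_\Psi^N)^\dagger S_\Psi^N$ and the graph-projection bookkeeping are routine. The artificial extension $S_\Psi^N = K$ on $\overline{N\ket{\Psi}}^\perp$ plays only a technical role, namely making $X_N$ densely defined on all of $\calh$ so that the graph formalism is applicable; its contribution to $(S_\Psi^N)^\dagger S_\Psi^N$ is the trivial identity block, which is invisible to $\ket{\Phi}$ in the $N$-cyclic subspace.
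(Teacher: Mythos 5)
Your proposal is correct and follows essentially the same route as the paper: the same operators $X=KS$, the same graph projections with $p_{11}=(1+X^\dagger X)^{-1}$, the same rescaling $X\to X/\sqrt{s}$ to handle general $s$, and the same key lemma that $S_\Psi^M$ extends the honest $N$-Tomita operator on $\overline{N\ket{\Psi}}$ because every approximating sequence in $N$ lies in $M$. The only cosmetic difference is that you conclude via the variational (distance-to-graph) characterization of $\braket{\Phi|p_{11}|\Phi}$ together with a trial vector, whereas the paper states the equivalent operator inequality $\Pi_{X^M}\geq\Pi_{X^N}\Pi_{N\Psi}$ and evaluates it on $\left(\ket{\Phi},0\right)$.
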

\begin{proof}
	Let $X^M = K S_\Psi^M$ and $X^N = K S_\Psi^N$. Let $\Gamma_{X^M} \subset \calh \oplus \calh$ and $\Gamma_{X^N} \subset \calh \oplus \calh$ be the graphs of $X^M$ and $X^N$ respectively, with projections $\Pi_{X^M}$ and $\Pi_{X^N}$. Let $\Pi_{N\Psi}$ denote the projection onto the closed subspace $\Pi_{N\Psi} (\calh \oplus \calh)$, which is defined by  
	\begin{equation}
	\Pi_{N\Psi} (\calh \oplus \calh) := \left\{
	\left(\begin{array}{c}
	\ket{\psi} \\ 
	\ket{\chi}
	\end{array} \right) \in \calh \oplus \calh: \ket{\psi},\ket{\chi} \in \overline{N \ket{\Psi}}
	\right\} .
	\end{equation}	
	Note that the closed subspace $\Gamma_{X^N} \cap \Pi_{N\Psi} (\calh \oplus \calh)$ is completely determined by the Tomita operator defined with respect to $\ket{\Psi}$ and $N$ on the subspace $\overline{N \ket{\Psi}}$. Because $N$ is a subalgebra of $M$, it follows that
	\begin{equation}
	\Gamma_{X^M} \supset (\Gamma_{X^N} \cap \Pi_{N\Psi} (\calh \oplus \calh)).
	\end{equation}
	The projection onto the closed subspace $(\Gamma_{X^N} \cap \Pi_{N\Psi} (\calh \oplus \calh))$ is given by $\Pi_{X^N} \Pi_{N \Psi} = \Pi_{N \Psi} \Pi_{X^N}$. It follows that
	\begin{equation}
	\Pi_{X^M} \geq \Pi_{X^N} \Pi_{N \Psi}.
	\end{equation} 
	If we evaluate the expectation value of the above equation in the state $\left(\begin{array}{c}
	\ket{\Phi} \\ 
	0
	\end{array} \right)$ for $\ket{\Phi} \in \overline{N \ket{\Psi}}$, we find that
	\begin{equation}
	\braket{\Phi|\frac{1}{1 + (X^M)^\dagger X^M}|\Phi} \geq 	\braket{\Phi|\frac{1}{1 + (X^N)^\dagger X^N}|\Phi},
	\end{equation}  
	which implies
	\begin{equation}
	\braket{\Phi|\frac{1}{1 + (S_\Psi^M)^\dagger S_\Psi^M}|\Phi} \geq 	\braket{\Phi|\frac{1}{1 + (S_\Psi^N)^\dagger S_\Psi^N}|\Phi}.
	\end{equation}
	By repeating the above logic with $X^M = \frac{1}{\sqrt{s}}K S_\Psi^M$ and $X^N = \frac{1}{\sqrt{s}}K S_\Psi^N$ for $s > 0$, we have that
	\begin{equation}
	\braket{\Phi|\frac{1}{s + (S_\Psi^M)^\dagger S_\Psi^M}|\Phi} \geq 	\braket{\Phi|\frac{1}{s + (S_\Psi^N)^\dagger S_\Psi^N}|\Phi}.
	\end{equation} 
\end{proof}

\begin{thm}
\label{thm:mod2}
	Let $\Delta_1, \Delta_2$ be operators on $\calh$ that are densely defined, closed, self-adjoint and positive. Assume that, for some $\ket{\Phi} \in D(\Delta_1) \cap D(\Delta_2)$ and all $s > 0$,
	\[ \braket{\Phi|\frac{1}{s + \Delta_1}|\Phi} \geq \braket{\Phi|\frac{1}{s + \Delta_2}|\Phi}. \]
	Also assume that $\braket{\Phi|\log \Delta_1 | \Phi}$ and $\braket{\Phi|\log \Delta_2 | \Phi}$ are finite. Then
	\[ -\braket{\Phi|\log \Delta_1|\Phi} \geq -\braket{\Phi|\log \Delta_2|\Phi}.  \]
\end{thm}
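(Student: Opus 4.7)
The plan is to use the integral representation of the logarithm. For any $x > 0$, one has the identity
\begin{equation*}
\log x \;=\; \int_0^\infty \left(\frac{1}{1+s} - \frac{1}{x+s}\right) ds,
\end{equation*}
as can be verified by computing the antiderivative $\log(1+s)-\log(x+s)$ and evaluating at $0$ and $\infty$. I would first record this as an elementary lemma.

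Next, I would apply the spectral theorem to each $\Delta_i$ (densely defined, closed, self-adjoint, and positive) to realize it as a multiplication operator on an $L^2$-space. Writing $d\mu_i(x) := d\braket{\Phi|E_i(x)|\Phi}$ for the spectral measure of $\Delta_i$ associated with the vector $\ket{\Phi}$, the hypothesis becomes
\begin{equation*}
\int_0^\infty \frac{d\mu_1(x)}{s+x} \;\geq\; \int_0^\infty \frac{d\mu_2(x)}{s+x} \qquad \forall s>0,
\end{equation*}
and the quantities $\braket{\Phi|\log\Delta_i|\Phi} = \int_0^\infty \log x\, d\mu_i(x)$ are finite by hypothesis. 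Substituting the integral representation of $\log x$ and using Fubini's theorem (justified because the inner integrand $\frac{x-1}{(1+s)(x+s)}$ has a definite sign on $(0,1)$ and on $(1,\infty)$, and the total integral is finite by assumption), I obtain
\begin{equation*}
\braket{\Phi|\log \Delta_i|\Phi} \;=\; \int_0^\infty \left(\frac{\braket{\Phi|\Phi}}{1+s} - \braket{\Phi|\tfrac{1}{s+\Delta_i}|\Phi}\right) ds.
\end{equation*}

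Finally, subtracting,
\begin{equation*}
-\braket{\Phi|\log\Delta_1|\Phi} \,+\, \braket{\Phi|\log\Delta_2|\Phi} \;=\; \int_0^\infty \left(\braket{\Phi|\tfrac{1}{s+\Delta_1}|\Phi} - \braket{\Phi|\tfrac{1}{s+\Delta_2}|\Phi}\right) ds \;\geq\; 0
\end{equation*}
by the hypothesis, which is the desired inequality. The only subtle point I expect to be the main obstacle is the Fubini interchange: the integrand $\frac{1}{1+s}-\frac{1}{x+s}$ is not of one sign, so I would split the $x$-integral into the regions $x\le 1$ and $x\ge 1$, on each of which the integrand has a definite sign, then apply Tonelli to each piece separately and recombine using finiteness of $\braket{\Phi|\log\Delta_i|\Phi}$.
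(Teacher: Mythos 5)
Your proposal is correct and follows essentially the same route as the paper's proof: the integral representation $\log x = \int_0^\infty\bigl(\tfrac{1}{1+s}-\tfrac{1}{x+s}\bigr)\,ds$, the spectral theorem applied to each $\Delta_i$, and a Fubini/Tonelli interchange justified by splitting the spectrum at $\lambda=1$. The only cosmetic difference is that the paper bounds the $\lambda\ge 1$ piece explicitly by $\braket{\Phi|\Delta_i|\Phi}-\braket{\Phi|\Phi}$ (using $\ket{\Phi}\in D(\Delta_i)$) before invoking finiteness of $\braket{\Phi|\log\Delta_i|\Phi}$ for the $\lambda\le 1$ piece, which is the same recombination you describe.
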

\begin{proof}
	Let $P^1_\Omega,P^2_\Omega$ denote the projection-valued measures associated with $\Delta_1,\Delta_2$. We use the spectral theorem\footnote{See \cite{HolographicEntropy} for an explanation of the notation.} to write
	\begin{equation}
	\label{eq:logd1}
	- \braket{\Phi|\log \Delta_1 | \Phi} = -\int_0^\infty  \log \lambda \, d(\braket{\Phi|P^1_\lambda|\Phi}) = \int_0^\infty  \int_0^\infty ds \left(\frac{1}{s + \lambda} - \frac{1}{s + 1}\right) d(\braket{\Phi|P^1_\lambda|\Phi}).
	\end{equation}
	By Fubini's Theorem (\cite{ReedSimon}, page 26), we may interchange the order of integration above if the following integral converges: 	
	\begin{align}
	\label{eq:absvalueint}
	\begin{split}
	\int_0^\infty  \int_0^\infty ds \left|\frac{1}{s + \lambda} - \frac{1}{s + 1}\right| d(\braket{\Phi|P^1_\lambda|\Phi}) &= \int_0^\infty  |\log \lambda| \,  d(\braket{\Phi|P^1_\lambda|\Phi}) \\
	&= \left|\int_0^1  \log \lambda \,  d(\braket{\Phi|P^1_\lambda|\Phi})\right| + \int_1^\infty  \log \lambda \,  d(\braket{\Phi|P^1_\lambda|\Phi}).
	\end{split}
	\end{align}
	Note that
	\begin{equation}
	0 \leq \int_1^\infty  \log \lambda \,  d(\braket{\Phi|P^1_\lambda|\Phi}) \leq \int_1^\infty  (\lambda - 1) \,  d(\braket{\Phi|P^1_\lambda|\Phi}) = \braket{\Phi|\Delta_1|\Phi} - \braket{\Phi|\Phi}, 
	\end{equation}
	which implies that
	\begin{equation}
	\int_1^\infty  \log \lambda \,  d(\braket{\Phi|P^1_\lambda|\Phi})
	\end{equation}
	is finite. Because $\braket{\Phi|\log \Delta_1|\Phi}$ is finite by assumption, it follows that
	\begin{equation}
	\int_0^1 \log \lambda \, d (\braket{\Phi|P^1_\lambda|\Phi})
	\end{equation}
	is finite. Thus, equation \eqref{eq:absvalueint} is finite, which implies that the integrals in equation \eqref{eq:logd1} may be interchanged. Thus,
	\begin{align}
	\begin{split}
	- \braket{\Phi|\log \Delta_1 | \Phi} &= \int_0^\infty ds \quad   \left(\braket{\Phi|\frac{1}{s + \Delta_1}|\Phi} - \frac{\braket{\Phi|\Phi}}{s + 1}\right) \geq \int_0^\infty ds \quad   \left(\braket{\Phi|\frac{1}{s + \Delta_2}|\Phi} - \frac{\braket{\Phi|\Phi}}{s + 1}\right) \\
	&= \int_0^\infty ds \int_0^\infty \left(\frac{1}{s + \lambda} - \frac{1}{s + 1}\right) d(\braket{\Phi|P^2_\lambda|\Phi}) .
	\end{split}
	\end{align}
	We may switch the order of integration above for the same reason as in equation \eqref{eq:logd1}. Thus,
	\begin{equation} -\braket{\Phi|\log \Delta_1|\Phi} \geq -\braket{\Phi|\log \Delta_2|\Phi}.  \end{equation}
\end{proof}

\subsection{The infinite-dimensional relative entropy as a limit of finite-dimensional relative entropies}

\label{sec:proofoflimit}

In this section, we use the above theorems to show how one could compute the relative entropy of two cyclic and separating vectors of a hyperfinite von Neumann algebra as a limit of finite-dimensional relative entropies.

\begin{thm}
	Let $M$ be a von Neumann algebra acting on a Hilbert space $\calh$ such that $M$ is generated by $\cup_{n = 1}^\infty M_n$, where $\{M_n\}$ is a sequence of finite-dimensional von Neumann subalgebras of $M$ satisfying $M_n \subset M_{n + 1} \, \, \forall n \in \mathbb{N}$. Let $\ket{\Psi},\ket{\Phi} \in \calh$ both be cyclic and separating with respect to $M$. Let $\cals_n$ denote the relative entropy of $\ket{\Psi}$ and $\ket{\Phi}$ defined with respect to $M_n$ (see equation \eqref{eq:relentropy} for details). Let $\cals$ denote the relative entropy defined with respect to $M$. Then
	\[ \lim_{n \rightarrow \infty} \cals_n  = \cals.\]
	In particular, if the limit does not converge, then $\cals$ is infinity.
	\end{thm}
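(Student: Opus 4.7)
My plan is to establish the theorem in two phases: first monotonicity $\cals_n \leq \cals_{n+1} \leq \cals$, which guarantees existence of the limit in $[0,\cals]$, and then the reverse inequality $\lim_n \cals_n \geq \cals$ via strong convergence of relative modular resolvents.

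For monotonicity, I would generalize the graph argument of Theorem \ref{thm:mon1} from Tomita operators to relative Tomita operators. For a subalgebra $N \subset M$ with $\ket{\Psi}$ separating for both, every vector of the form $(\calo \ket{\Psi}, \calo^\dagger \ket{\Phi})$ with $\calo \in N$ lies in both graphs, so (extending $S^N_{\Psi|\Phi}$ as $K$ on $\overline{N\ket{\Psi}}^\perp$ exactly as in Theorem \ref{thm:mon1}) one obtains $\Gamma_{K S^N_{\Psi|\Phi}} \cap \Pi_{N\Psi}(\calh \oplus \calh) \subset \Gamma_{K S^M_{\Psi|\Phi}}$. The two-by-two block identity $p_{11} = (1 + X^\dagger X)^{-1}$ immediately yields
\begin{equation*}
\braket{\Psi|\tfrac{1}{s+\Delta^M_{\Psi|\Phi}}|\Psi} \geq \braket{\Psi|\tfrac{1}{s+\Delta^N_{\Psi|\Phi}}|\Psi} \quad \forall s > 0,
\end{equation*}
and Theorem \ref{thm:mod2} upgrades this to $\cals_{\Psi|\Phi}(M) \geq \cals_{\Psi|\Phi}(N)$. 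Taking $N = M_n$ and then $N = M_n \subset M_{n+1}$ gives $\cals_n \leq \cals_{n+1} \leq \cals$, so $\lim_n \cals_n$ exists in $[0,\cals]$.

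For the reverse direction, I would show that the graph projections converge strongly: $\Pi_{K S^{M_n}_{\Psi|\Phi}} \nearrow \Pi_{K S^M_{\Psi|\Phi}}$. This reduces to proving that $\bigcup_n \Gamma_{K S^{M_n}_{\Psi|\Phi}}$ is dense in $\Gamma_{K S^M_{\Psi|\Phi}}$. Given $\calo \in M = (\bigcup_n M_n)^{\prime\prime}$, decompose $\calo = \calo_R + i \calo_I$ into self-adjoint parts and invoke Kaplansky's density theorem to obtain bounded sequences of \emph{self-adjoint} operators $\{\calo_k^R\}, \{\calo_k^I\} \in \bigcup_n M_n$ converging strongly to $\calo_R, \calo_I$ (the self-adjoint refinement is essential because strong convergence of a sequence does not in general imply strong convergence of its adjoints). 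Setting $\calo_k := \calo_k^R + i \calo_k^I$, both $\calo_k \to \calo$ and $\calo_k^\dagger \to \calo^\dagger$ strongly, so the pair $(\calo_k \ket{\Psi}, \calo_k^\dagger \ket{\Phi}) \to (\calo \ket{\Psi}, \calo^\dagger \ket{\Phi})$ in $\calh \oplus \calh$, establishing the required graph density. Reading off the $p_{11}$ block then gives $(s + \Delta^{M_n}_{\Psi|\Phi})^{-1} \to (s + \Delta^M_{\Psi|\Phi})^{-1}$ strongly for every $s > 0$, so
\begin{equation*}
\lim_n \braket{\Psi|\tfrac{1}{s+\Delta^{M_n}_{\Psi|\Phi}}|\Psi} = \braket{\Psi|\tfrac{1}{s+\Delta^{M}_{\Psi|\Phi}}|\Psi}
\end{equation*}
pointwise and monotonically in $s$.

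To finish, I would substitute these pointwise limits into the integral representation
\begin{equation*}
\cals_n = \int_0^\infty \left[\braket{\Psi|\tfrac{1}{s+\Delta^{M_n}_{\Psi|\Phi}}|\Psi} - \tfrac{\|\Psi\|^2}{s+1}\right] ds
\end{equation*}
appearing in the proof of Theorem \ref{thm:mod2} and invoke monotone convergence, splitting the associated spectral measure into the regions $\lambda \leq 1$ and $\lambda > 1$ where the kernel $\tfrac{1}{s+\lambda}-\tfrac{1}{s+1}$ has definite sign. This yields $\lim_n \cals_n = \cals$ whenever $\cals < \infty$; the $\cals = \infty$ case follows from monotonicity together with divergence of the corresponding spectral integral. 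The main obstacle is the graph-density step: the naive Kaplansky approximation only controls $\calo_k\ket{\Psi}$ and not $\calo_k^\dagger \ket{\Phi}$, so the self-adjoint refinement of Kaplansky's theorem, applied to the real and imaginary parts separately, is the crucial technical ingredient; the rest is bookkeeping built on the machinery already established in Theorems \ref{thm:mon1} and \ref{thm:mod2}.
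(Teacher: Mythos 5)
Your proposal is correct in outline, and its second half takes a genuinely different route from the paper. For monotonicity you run the graph argument directly on the relative Tomita operators $S^{M_n}_{\Psi|\Phi}$, whereas the paper first applies Araki's $2\times 2$ matrix trick — tensoring with $M_{2\times 2}$ and passing to $\ket{\hat{\Phi}} = \ket{\Phi}\otimes\ket{e_{11}}+\ket{\Psi}\otimes\ket{e_{22}}$ — so that $\Delta_{\Psi|\Phi}$ becomes an honest modular operator and Theorem \ref{thm:mon1} applies verbatim; the two are equivalent, but the matrix trick buys you the fact that the (non-relative) Tomita operator maps $\overline{N\ket{\hat\Phi}}$ into itself, which is what makes the commutation $\Pi_{X^N}\Pi_{N\Psi}=\Pi_{N\Psi}\Pi_{X^N}$ in Theorem \ref{thm:mon1} immediate. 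In your direct relative version the second graph component $\calo^\dagger\ket{\Phi}$ lands in $\overline{M_n\ket{\Phi}}$ rather than $\overline{M_n\ket{\Psi}}$, so the cutting projection must act asymmetrically on the two slots and the commutation must be re-verified — doable, but it is bookkeeping the paper deliberately sidesteps. For the reverse inequality the divergence is more substantial: the paper truncates the logarithm to $g_N$, cites \cite{Araki} for the convergence in equation \eqref{eq:limgn}, and then controls the two truncation tails by explicit spectral estimates using $\braket{\Psi|\Delta_{\Psi|\Phi}|\Psi}=\braket{\Phi|\Phi}<\infty$. You instead prove strong convergence of the graph projections via the strong-$\star$ (self-adjoint) form of Kaplansky's density theorem, obtain $\lim_n\braket{\Psi|(s+\Delta^{M_n}_{\Psi|\Phi})^{-1}|\Psi}=\braket{\Psi|(s+\Delta^{M}_{\Psi|\Phi})^{-1}|\Psi}$ monotonically, and pass to the limit in the integral representation of $-\log$. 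This is essentially a reconstruction of the proof of the step the paper outsources to Araki, so your argument is the more self-contained of the two; the paper's version is shorter on the page and keeps all domain questions confined to genuine modular operators.

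One point to tighten: in the last step, monotone convergence does not apply directly to $s\mapsto\braket{\Psi|(s+\Delta^{M_n}_{\Psi|\Phi})^{-1}|\Psi}-\tfrac{1}{s+1}$, since that integrand has no definite sign; the clean route is to subtract the $n=1$ integrand (which is absolutely integrable because $M_1$ is finite-dimensional), apply monotone convergence to the resulting nonnegative increasing family, and then use that the $\lambda>1$ part of the spectral integral for $\cals$ is always finite (bounded by $\braket{\Phi|\Phi}-1$) so that any divergence is unambiguously $+\infty$ and is matched by $\cals=\infty$ via your monotonicity bound $\cals_n\leq\cals$. With that adjustment the argument closes.
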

\begin{proof}
	We mostly follow the logic of the proof of Lemma 3 of \cite{Araki}. We consider the tensor product Hilbert space $\calh \otimes \calk$, where $\calk$ is a four-dimensional Hilbert space spanned by orthonormal basis vectors $\ket{e_{ij}} \, ( i,j \in \{1,2\})$. Let $M_{2\times2}$ be a four-dimensional von Neumann algebra spanned by the operators $u_{ij} \, (i,j \in \{1,2\})$, which act on the basis vectors of $\calk$ as $u_{ij} \ket{e_{k\ell}} = \delta_{jk} \ket{e_{i \ell}}$. It follows that $u_{ij}^\dagger = u_{ji}$. Define $\hat{M} := M \otimes M_{2 \times 2}$ and $\hat{M}_n := M_n \otimes M_{2 \times 2}$. Let $$\ket{\hat{\Phi}} = \ket{\Phi} \otimes \ket{e_{11}} + \ket{\Psi} \otimes \ket{e_{22}}.$$ Note that $\ket{\hat{\Phi}}$ is cyclic and separating with respect to $\hat{M}$. Let $\hat{\Delta}$ denote the modular operator defined with respect to $\hat{M}$ and $\ket{\hat{\Phi}}$.  Let the operator $\hat{\Delta}_n$ act on $\overline{\hat{M}_n \ket{\hat{\Phi}}}$ as the modular operator defined with respect to $\ket{\hat{\Phi}}$ and $\hat{M}_n$, and let $\hat{\Delta}_n$ act as the identity on $\overline{\hat{M}_n \ket{\hat{\Phi}}}^\perp$. Note that 
	\begin{align}
	\begin{split}
	\hat{\Delta} (\ket{\Theta} \otimes \ket{e_{12}}) &= (\Delta_{\Psi | \Phi} \ket{\Theta}) \otimes \ket{e_{12}}, \quad \ket{\Theta} \in D(\Delta_{\Psi | \Phi}),
		\\
			(\log \hat{\Delta}) (\ket{\Theta} \otimes \ket{e_{12}}) &= ( (\log \Delta_{\Psi | \Phi}) \ket{\Theta}) \otimes \ket{e_{12}}, \quad \ket{\Theta} \in D(\log \Delta_{\Psi | \Phi}),
	\end{split}
	\end{align}  
	where $\Delta_{\Psi | \Phi}$ is the relative modular operator defined with respect to $M$, $\ket{\Psi}$, and $\ket{\Phi}$. We also have that
	\begin{align}
	\begin{split}
	\hat{\Delta}_n \ket{\Theta} \otimes \ket{e_{12}} &= (\Delta_{\Psi | \Phi}^n \ket{\Theta}) \otimes \ket{e_{12}}, \quad \ket{\Theta} \in \overline{M_n \ket{\Psi}},
	\\
		(\log \hat{\Delta}_n) (\ket{\Theta} \otimes \ket{e_{12}}) &= ( (\log \Delta^n_{\Psi | \Phi}) \ket{\Theta}) \otimes \ket{e_{12}}, \quad \ket{\Theta} \in \overline{M_n \ket{\Psi}},
	\end{split}
	\end{align}
	where $\Delta_{\Psi | \Phi}^n$ is the relative modular operator defined with respect to the finite-dimensional algebra $M_n$ (see the paragraph before equation \eqref{eq:relentropy} for details). Thus,
	\begin{align}
	\begin{split}
	\braket{u_{12}\hat{\Phi}|\log \hat{\Delta}|u_{12}\hat{\Phi}} &= \braket{\Psi|  \log \Delta_{\Psi | \Phi} | \Psi } = -\cals,
	\\
		\braket{u_{12}\hat{\Phi}|\log \hat{\Delta}_n|u_{12}\hat{\Phi}} &= \braket{\Psi|  \log \Delta^n_{\Psi | \Phi} | \Psi } = - \cals_n.
	\end{split}
	\end{align}
	Thus, we need to show that
	\begin{equation}
	\lim_{n \rightarrow \infty} \braket{u_{12}\hat{\Phi}|\log \hat{\Delta}_n|u_{12}\hat{\Phi}} = \braket{u_{12}\hat{\Phi}|\log \hat{\Delta}|u_{12}\hat{\Phi}}.
	\end{equation}
	Note that Theorems \ref{thm:mon1} and \ref{thm:mod2} imply relations between the finite-dimensional relative entropies $\cals_n$. That is, $\cals_n \leq \cals_{n +1} \quad \forall n \in \mathbb{N}$ because $M_n \subset M_{n+1}$. Also $\cals_n \leq \cals \, \forall n \in \mathbb{N}$. Thus, if $\lim_{n \rightarrow \infty} \cals_n$ does not converge, then $\cals$ must be infinity. For the remainder of the proof, we will thus assume that $\lim_{n \rightarrow \infty} \cals_n$ converges to a quantity that is less than or equal to $\cals$.

	Given the definitions of $\hat{\Delta}$ and $\hat{\Delta}_n$, it follows that (see \cite{Araki} and references therein)
	\begin{equation}
\label{eq:limgn}
	\lim_{n \rightarrow \infty} \braket{u_{12}\hat{\Phi}|g_N(\hat{\Delta}_n)|u_{12}\hat{\Phi}} = \braket{u_{12}\hat{\Phi}|g_N(\hat{\Delta})|u_{12}\hat{\Phi}},
	\end{equation}
	where $g_N(\lambda)$ is a continuous, bounded function on $\mathbb{R}$, defined for any $N \geq 1$, such that
	\begin{equation}
	g_N(\lambda) = \left\{ \begin{array}{cc}
	- \log N & \lambda \leq \frac{1}{N} \\ 
	\log \lambda & \frac{1}{N} \leq \lambda \leq N \\ 
	\log N & \lambda \geq N.
	\end{array} \right.
	\end{equation}
	Let $P_\Omega$ denote the spectral projections of $\hat{\Delta}$, and let $P^n_\Omega$ denote the spectral projections of $\hat{\Delta}_n$. By definition,
	\begin{equation}
\braket{u_{12}\hat{\Phi}|\log \hat{\Delta}|u_{12}\hat{\Phi}} = \int_0^\infty \log \lambda \, d(\braket{u_{12}\hat{\Phi}|P_\lambda|u_{12}\hat{\Phi}}).
	\end{equation}
Note that
	\begin{align}
	\begin{split}
	\braket{u_{12}\hat{\Phi}|g_N(\hat{\Delta})|u_{12}\hat{\Phi}} &= \int_0^\frac{1}{N} (- \log N)\, d(\braket{u_{12}\hat{\Phi}|P_\lambda|u_{12}\hat{\Phi}})\\
	 &+ \int_{\frac{1}{N}}^N \log \lambda \, d(\braket{u_{12}\hat{\Phi}|P_\lambda|u_{12}\hat{\Phi}}) + \int_N^\infty \log N \, d(\braket{u_{12}\hat{\Phi}|P_\lambda|u_{12}\hat{\Phi}}).
	\end{split}
	\end{align}	
Next, note that
\begin{align}
\begin{split}
0 &\leq \int_N^\infty (\log \lambda - \log N) \, d(\braket{u_{12}\hat{\Phi}|P_\lambda|u_{12}\hat{\Phi}}) = \int_N^\infty (\lambda \lambda^{-1}\log \frac{\lambda}{N}) \, d(\braket{u_{12}\hat{\Phi}|P_\lambda|u_{12}\hat{\Phi}})\\
&\leq (Ne)^{-1}\int_N^\infty (\lambda ) \, d(\braket{u_{12}\hat{\Phi}|P_\lambda|u_{12}\hat{\Phi}}) 
\\
&\leq (Ne)^{-1}\int_0^\infty (\lambda ) \, d(\braket{u_{12}\hat{\Phi}|P_\lambda|u_{12}\hat{\Phi}}) = (Ne)^{-1} \braket{u_{12}\hat{\Phi}|\hat{\Delta}|u_{12}\hat{\Phi}}.
\label{eq:ineq1}
\end{split}
\end{align}	
We have used the inequality $\lambda^{-1} \log \frac{\lambda}{N} \leq (Ne)^{-1}$. Note that $\braket{u_{12}\hat{\Phi}|\hat{\Delta}|u_{12}\hat{\Phi}} = \braket{\Psi|\Delta_{\Psi|\Phi}|\Psi} = \braket{\Phi|\Phi}$ is a finite quantity \cite{Witten:2018zxz}. Likewise, we have that
\begin{equation}
\label{eq:ineq2}
0 \leq \int_N^\infty (\log \lambda - \log N) \, d(\braket{u_{12}\hat{\Phi}|P^n_\lambda|u_{12}\hat{\Phi}}) \leq (Ne)^{-1} \braket{u_{12}\hat{\Phi}|\hat{\Delta}_n|u_{12}\hat{\Phi}} = (Ne)^{-1} \braket{\Phi|\Phi}.
\end{equation}
From equation \eqref{eq:limgn}, we have that
\begin{align}
\begin{split}
&\lim_{n \rightarrow \infty} \left[\braket{u_{12}\hat{\Phi}|\log \hat{\Delta}_n|u_{12}\hat{\Phi}}
-\int_N^\infty (\log \lambda - \log N) d(\braket{u_{12}\hat{\Phi}|P^n_\lambda|u_{12}\hat{\Phi}})\right.\\
&\quad\quad\quad\quad\quad\quad\quad\quad\quad\quad\quad\quad\left. -\int_0^{\frac{1}{N}} (\log \lambda + \log N) d(\braket{u_{12}\hat{\Phi}|P^n_\lambda|u_{12}\hat{\Phi}})
\right] 
\\
&\quad\quad = \int_0^\frac{1}{N} (- \log N)\, d(\braket{u_{12}\hat{\Phi}|P_\lambda|u_{12}\hat{\Phi}}) + \int_{\frac{1}{N}}^\infty \log \lambda \, d(\braket{u_{12}\hat{\Phi}|P_\lambda|u_{12}\hat{\Phi}}) \\
&\quad\quad\quad\quad\quad\quad\quad\quad\quad\quad\quad\quad - \int_N^\infty (\log \lambda -\log N) \, d(\braket{u_{12}\hat{\Phi}|P_\lambda|u_{12}\hat{\Phi}}).
\end{split}
\end{align}
Note that for $N \ge 1$, 
\begin{equation}
\int_0^\frac{1}{N} (- \log N)\, d(\braket{u_{12}\hat{\Phi}|P_\lambda|u_{12}\hat{\Phi}}) \leq 0 \quad \text{and} \quad \int_0^{\frac{1}{N}} (\log \lambda + \log N) d(\braket{u_{12}\hat{\Phi}|P^n_\lambda|u_{12}\hat{\Phi}}) \le 0.
\end{equation}
	Thus,	
\begin{align}
\label{eq:finiteN}
\begin{split}
&\lim_{n \rightarrow \infty} \left[\braket{u_{12}\hat{\Phi}|\log \hat{\Delta}_n|u_{12}\hat{\Phi}}
-\int_N^\infty (\log \lambda - \log N) d(\braket{u_{12}\hat{\Phi}|P^n_\lambda|u_{12}\hat{\Phi}})
\right] 
\\
&\quad\quad\leq \int_{\frac{1}{N}}^\infty \log \lambda \, d(\braket{u_{12}\hat{\Phi}|P_\lambda|u_{12}\hat{\Phi}}) - \int_N^\infty (\log \lambda-\log N )\, d(\braket{u_{12}\hat{\Phi}|P_\lambda|u_{12}\hat{\Phi}}).
\end{split}
\end{align}	
Note that \begin{equation}
\lim_{N \rightarrow \infty} \int_{\frac{1}{N}}^\infty \log \lambda \, d(\braket{u_{12}\hat{\Phi}|P_\lambda|u_{12}\hat{\Phi}}) = \braket{u_{12}\hat{\Phi}|\log \hat{\Delta}|u_{12}\hat{\Phi}}.
\end{equation}
Using equations \eqref{eq:ineq1} and \eqref{eq:ineq2}, we can take the large $N$ limit of equation \eqref{eq:finiteN} to obtain
\begin{equation}
\lim_{n \rightarrow \infty} \braket{u_{12}\hat{\Phi}|\log \hat{\Delta}_n|u_{12}\hat{\Phi}}
\leq \braket{u_{12}\hat{\Phi}|\log \hat{\Delta}|u_{12}\hat{\Phi}},
\end{equation}
which implies that
\begin{equation}
\lim_{n \rightarrow \infty} \cals_n
\geq \cals,
\end{equation}
	which implies that $\cals$ is finite. Using the monotonicity properties proved earlier, it follows that
	\begin{equation}
	\lim_{n \rightarrow \infty} \cals_n
	= \cals.
	\end{equation}
\end{proof}

\section{Conclusion and outlook}

	It is widely believed that entanglement in a holographic field theory encodes properties of the bulk spacetime. In particular, boundary states with semiclassical bulk duals must be highly entangled so that local operators in the bulk may be reconstructed from different subregions of the boundary \cite{Almheiri:2014lwa,Harlow:2018fse}. The Reeh-Schlieder theorem implies that generic boundary states are highly entangled. The implications of boundary entanglement for bulk reconstruction have been explicitly studied using tensor networks with a finite number of tensors \cite{Pastawski:2015qua,HaydenQi}, which necessarily involve finite-dimensional Hilbert spaces. However, various existing toy models are not well-suited to study the implications for AdS/CFT of the Reeh-Schlieder theorem, which is formulated in the continuum limit of quantum field theory with an infinite-dimensional Hilbert space. Our primary motivation is to construct a model of bulk reconstruction where the Reeh-Schlieder theorem is manifestly true. More precisely, we want to associate von Neumann algebras with boundary subregions so that cyclic and separating states with respect to these algebras are dense in the boundary Hilbert space. 







Since tensor networks and quantum error correction have proven to be useful tools in understanding AdS/CFT \cite{Harlow:2016fse, fixedarea, beyondtoymodels}, it is natural to generalize existing tensor network models to models with an infinite number of tensors. Our strategy for constructing an infinite-dimensional QECC is to first construct a QECC that relates a code pre-Hilbert space to a physical pre-Hilbert space. Tensor networks with a repeating pattern provide a natural way to do this.  The HaPPY Code \cite{Pastawski:2015qua} is a tensor network constructed from a pentagonal tiling of hyperbolic space with a natural AdS/CFT interpretation. We plan to apply our strategy to the HaPPY code, as the HaPPY tensor network can be naturally extended to an arbitrarily large size. The explicit example described in this paper uses multiple disconnected tensor networks, but it would be more satisfying to use a connected tensor network such as the HaPPY code. If we can generalize the HaPPY code to a QECC with infinite-dimensional Hilbert spaces, we will be able to construct a more accurate toy model of entanglement wedge reconstruction.

An important aspect of AdS/CFT that our toy model captures is that subregions in the boundary theory are associated with von Neumann algebras. In our example, we study type II$_1$ factors acting on both the bulk and boundary Hilbert spaces. However, the local operator algebras that arise in quantum field theory are generically of type III$_1$. \cite{Witten:2018zxz, Landsman-vNalg, gabbani}. Thus, it would be satisfying to have a toy model of entanglement wedge reconstruction where the von Neumann algebras are of type III$_1$.

Our infinite-dimensional QECC satisfies both statements in Theorem \ref{thm:maintheorem} \cite{HolographicEntropy}. The assumptions and statements in Theorem \ref{thm:maintheorem} are physically motivated by the Reeh-Schlieder Theorem \cite{RS} and previous work on error correction and AdS/CFT \cite{DongHarlowWall,Almheiri:2014lwa,Harlow:2016fse,Jafferis:2015del}. Toy models with infinite-dimensional Hilbert spaces should allow us to better understand the physics of entanglement wedge reconstruction and holographic relative entropy, including the role that the Reeh-Schlieder theorem plays.

In light of the fact that the equivalence between bulk and boundary relative entropies is only approximately correct at large $N$, approximate entanglement wedge reconstruction has been studied in \cite{Cotler} using finite-dimensional von Neumann algebras and universal recovery channels. It would be interesting to see if an appropriate generalization of the explicit formulas given for finite-dimensional entanglement wedge reconstruction can be checked in an infinite-dimensional toy example. In the future, we want to apply the study of infinite-dimensional von Neumann algebras to entanglement wedge reconstruction beyond the planar/semiclassical limit.

While our primary motivation has been to understand the bulk reconstruction in AdS/CFT, we note that infinite tensor networks may also be useful in studying two-dimensional conformal field theories. In the algebraic approach to 2d conformal field theory, every interval $I$ on the circle is assigned a von Neumann algebra $\cala$, and if $I_1 \subset I_2$ for two intervals $I_1$ and $I_2$, the associated algebras $\cala_1$ and $\cala_2$ satisfy $\cala_1 \subset \cala_2$. In the case of 2d chiral conformal field theory studied in \cite{gabbani}, each algebra is isomorphic to the unique hyperfinite type III$_1$ factor. Furthermore, note that there is also a unique hyperfinite type II$_1$ factor \cite{Landsman-vNalg}. In our setup, we use an infinite tensor network to characterize the type II$_1$ factor $M_{code}$ as a particular subalgebra of $M_{phys}$ on $\calh_{phys}$. If infinite tensor networks can relate the algebra associated with an interval to a subalgebra associated with a subinterval, they could be used to probe aspects of 2d conformal field theory. It would be interesting to see how infinite tensor networks could be related to quantities such as primary operator dimensions or three-point function coefficients.

\section*{Acknowledgments}
The authors are grateful to Kai Xu for discussions and Daniel Harlow and Temple He for helpful comments on this paper. M.J.K. is supported by a Sherman Fairchild Postdoctoral Fellowship. This material is based upon work supported by the U.S. Department of Energy, Office of Science, Office of High Energy Physics, under Award Number DE-SC0011632. D.K. would like to acknowledge a partial support from NSF grant PHY-1352084. 

\appendix

\section{An example of a strongly convergent sequence of operators}
Let us give a nontrivial example a strongly convergent sequence in $A_{code}$. First we will make some preliminary definitions. 

\subsection{Preliminary definitions}
Consider the Hilbert space of a single qutrit. Let $V(\theta) := \left(\begin{array}{ccc}
e^{i \theta} & 0 & 0 \\ 
0& e^{-i \theta} & 0 \\ 
0& 0 & 1
\end{array} \right)$ be a unitary operator defined on this Hilbert space. Let $\ket{\gamma}$ be any normalized state of a single qutrit. Then one may show that \begin{equation} \braket{\gamma|V(\theta)|\gamma} = 1 + z_\gamma (1 - \cos \theta) + i \hat{z}_\gamma \sin \theta \end{equation}
where $z_\gamma$ and $\hat{z}_\gamma$ are real numbers that depend on $\ket{\gamma}$ and satisfy $|z_\gamma| \leq 1$, $|\hat{z}_\gamma| \leq 1$.

Furthermore, consider the expectation value of the operator $V(\theta) \otimes I$ in a two-qutrit Hilbert space in the state $\ket{\lambda} = \frac{1}{\sqrt{3}}(\ket{00} + \ket{11} + \ket{22})$

\begin{equation} \braket{\lambda| V(\theta) \otimes I |\lambda} = \frac{2 \cos \theta + 1}{3} = 1 - \frac{2}{3}(1-\cos \theta). \end{equation}

\subsection{The example}
We will define a sequence of operators in $A_{code}$ that we wish to study. First, we define a sequence of angles $\{\theta_{n}\}$. We will specify the actual values of $\theta_n, \, n \in \mathbb{N}$ later. We define the sequence of operators $\{a_n\} \in A_{code}$ to be

\begin{equation} a_1 = \left[ V_{i_1}(\theta_1) \otimes I_{j_1} \right] \otimes I \cdots \end{equation}
\begin{equation} a_2 = \left[ V_{i_1}(\theta_1) \otimes I_{j_1} \right] \otimes \left[ V_{i_2}(\theta_2) \otimes I_{j_2} \right] \otimes I \cdots \end{equation}
\[ \vdots \]
\begin{equation} a_n = \left[ V_{i_1}(\theta_1)  \otimes I_{j_1} \right] \otimes \left[ V_{i_2}(\theta_2) \otimes I_{j_2} \right] \otimes \cdots \otimes \left[ V_{i_n}(\theta_n) \otimes I_{j_n} \right] \otimes I \cdots \end{equation}
\begin{equation} \vdots \end{equation}
Each square brackets contains the black qutrits associated with one collection in Figure \ref{fig:twotofourexample}. Each of the operators in the sequence is unitary, so they all are bounded and have unit norm. Now, we want to investigate the convergence of this sequence acting on a basis vector of $p\calh_{code}$, such as the one given in equation \eqref{eq:codeprehbasis}.  Define $\ket{\psi_n} := a_n \ket{M,\{p,q\}}$ for $n \in \mathbb{N}$. Choose $n,m \in \mathbb{N}$ with $m > n$. Then we have
\begin{equation} ||\ket{\psi_m} - \ket{\psi_n}||^2 = 2 - \braket{\psi_m|\psi_n} - \braket{\psi_n|\psi_m} = 2 - \braket{M,\{p,q\}|a_m^\dagger a_n|M,\{p,q\}} - \braket{M,\{p,q\}|a_n^\dagger a_m|M,\{p,q\}}.  \end{equation}

Note that
\begin{equation} a_n^\dagger a_m = \left[ I_{i_1}  \otimes I_{j_1} \right] \otimes \cdots\otimes \left[ I_{i_n}  \otimes I_{j_n} \right] \otimes \left[ V_{i_{n+1}}(\theta_{n+1})  \otimes I_{j_{n+1}} \right] \otimes \cdots \otimes \left[ V_{i_m}(\theta_m) \otimes I_{j_m} \right] \otimes I \cdots \end{equation}

\begin{equation} \braket{M,\{p,q\}|a_n^\dagger a_m|M,\{p,q\}} = \prod_{k = n+1}^m Y_k \end{equation}
\begin{equation} Y_k = \left\{ \begin{array}{cc}
\braket{p_k|V_{i_k}(\theta_k)|p_k} & k \leq M \\ 
\braket{\lambda \cdots|V_{i_k}(\theta_k) \otimes I_{j_k}|\lambda \cdots} & k > M
\end{array} \right. \end{equation}

Another way to write $Y_k$ is
\begin{equation} Y_k = 1 + x_k (1 - \cos \theta_k) + i y_k \sin \theta_k = r_k e^{i \phi_k} \end{equation}
where $x_k$ and $y_k$ are real numbers satisfying $|x_k| \leq 1$, $|y_k| \leq 1$. One may show that
\begin{equation} 1 - 2 |1 - \cos \theta_k| \leq r_k \leq 1 + (1 - \cos \theta_k)^2 + 2 |1 - \cos \theta_k| + \sin^2 \theta_k \end{equation}
and that, if $|\theta_k| < \frac{\pi}{2}$,
\begin{equation} |\phi_k| \leq \arctan \frac{|\sin \theta_k|}{1 - |1 - \cos \theta_k|}. \end{equation}

Up until now we did not specify the choice of angles $\theta_k$. Now, we make a choice. First, we choose an arbitrary $\eta \in \mathbb{R}$ such that $0 < \eta < 1$. We choose $\theta_k$ such that each $r_k$ satisfies
\begin{equation} 
e^{-\eta^k} < r_k < e^{\eta^k}
\end{equation}
and such that each $\phi_k$ satisfies
\begin{equation} 
-\eta^k < \phi_k < \eta^k. 
\end{equation}
We choose each $\theta_k$ to be nonzero. Thus,
\begin{equation} 
\prod_{k = n + 1}^m r_k < e^{\sum_{k = n+1}^m \eta^k} < e^{\frac{\eta^{n+1}}{1 - \eta}}, 
\end{equation}
\begin{equation} 
\prod_{k = n + 1}^m r_k > e^{-\sum_{k = n+1}^m \eta^k} > e^{-\frac{\eta^{n+1}}{1 - \eta}}, 
\end{equation}
\begin{equation} 
|\sum_{k = n +1}^m \phi_k| < \frac{\eta^{n+1}}{1 - \eta}. 
\end{equation}
We can therefore determine that the real part of $\prod_{k = n+1}^m Y_k$ is arbitrarily close to $1$ for $n$ sufficiently large. This means that $||\ket{\psi_m} - \ket{\psi_n}||$ is arbitrarily close to 0 for $n$ sufficiently large and $m > n$. This is enough to show that the sequence $\{\ket{\psi_n}\}$ is Cauchy, meaning that $\lim_{n \rightarrow \infty} a_n\ket{M,\{p,q\}}$ converges for every basis vector $\ket{M,\{p,q\}}$. Hence, $\lim_{n \rightarrow \infty} a_n \ket{\psi}$ converges for every $\ket{\psi} \in p\calh_{code}$. Since the sequence of norms $\{||a_n||\}$ is bounded from above (in particular, $||a_n||=1 \quad \forall n \in \mathbb{N}$), then the sequence of operators $\{a_n\}$ converges strongly.


\begin{thebibliography}{10}
\bibitem{Almheiri:2014lwa}
  A.~Almheiri, X.~Dong and D.~Harlow,
``Bulk Locality and Quantum Error Correction in AdS/CFT,''
{\em  JHEP} {\bf 1504}, 163 (2015) [arXiv:1411.7041].

\bibitem{LongoXu:1}
R.~Longo and F.~Xu. ``Comment on the Bekenstein Bound.'' \emph{Journal of Geometry and Physics} {\bf 130} (2018), 113–120 [arXiv:1802.07184].

\bibitem{LongoXu:2}
R.~Longo and F.~Xu. ``Relative Entropy in CFT,'' [arXiv:1712.07283].

\bibitem{Faulkner}
S.~Balakrishnan, T.~Faulkner, Z.~U.~Khandker, and H.~Wang, ``A General Proof of the Quantum Null Energy Condition,'' [arXiv:1706.09432]. 

\bibitem{Hamilton:2006az} 
A.~Hamilton, D.~N.~Kabat, G.~Lifschytz and D.~A.~Lowe,
``Holographic representation of local bulk operators,''
{\em Phys.\ Rev.\ D} {\bf 74}, 066009 (2006) [arXiv:hep-th/0606141].
   
  \bibitem{Harlow:2018fse} 
  D.~Harlow,
  ``TASI Lectures on the Emergence of the Bulk in AdS/CFT,''
  [arXiv:1802.01040].
  
  
  \bibitem{Harlow:2016fse}
  D.~Harlow, ``The Ryu-Takayanagi Formula from Quantum Error Correction,'' Commun.\ Math.\ Phys.\ {\bf 354} (2017) 865 [arXiv:1607.03901].
  
  
\bibitem{Jafferis:2015del} 
D.~L.~Jafferis, A.~Lewkowycz, J.~Maldacena and S.~J.~Suh,
``Relative entropy equals bulk relative entropy,''
\emph{JHEP} {\bf 1606}, 004 (2016) [arXiv:1512.06431].
 
\bibitem{faulknerLewkowycz}
T.~Faulkner and A.~Lewkowycz, ``Bulk locality from modular flow,'' \emph{JHEP} {\bf 1707}, 151 (2017) [arXiv:1704.05464].

\bibitem{modulartoolkit}
T.~Faulkner, M.~Li and H.~Wang, ``A modular toolkit for bulk reconstruction,'' \emph{JHEP} {\bf 1904}, 119 (2019) [arXiv:1806.10560].
 
\bibitem{Takesaki}
M.~Takesaki. \emph{Theory of Operator Algebras I.} New York: Springer-Verlag, 1979.
 
\bibitem{Araki}
H.~Araki, ``Relative Entropy of States of von Neumann Algebras,'' \emph{Publ. RIMS, Kyoto Univ.} 11
(1976) 809 
 
\bibitem{Jones-vNalg}
  V.~Jones,
  ``Von Neumann Algebras,''
  https://math.vanderbilt.edu/jonesvf/
  
 \bibitem{Landsman-vNalg}
 N.~P.~Landsman,
 ``Lecture Notes on Operator Algebras,''
 http://www.math.ru.nl/~landsman/oa2011.pdf 
  
\bibitem{Pastawski:2015qua} 
F.~Pastawski, B.~Yoshida, D.~Harlow and J.~Preskill,
``Holographic quantum error-correcting codes: Toy models for the bulk/boundary correspondence,''
\emph{JHEP} {\bf 1506}, 149 (2015) [arXiv:1503.06237].
 
\bibitem{Sokal}
A.~D.~Sokal, ``A really simple elementary proof of the uniform boundedness theorem,'' [arXiv:1005.1585].

\bibitem{HolographicEntropy}
M.~J.~Kang and D.~K.~Kolchmeyer, ``Holographic Relative Entropy in Infinite-dimensional Hilbert spaces,'' [arXiv:1811.05482].

\bibitem{Witten:2018zxz} 
  E.~Witten,
  ``Notes on Some Entanglement Properties of Quantum Field Theory,''
  [arXiv:1803.04993].

\bibitem{Haag}
R.~Haag. \emph{Local Quantum Physics.} Springer-Verlag, 1992.

\bibitem{gabbani}
F.~Gabbiani and J.~Fr\"{o}hlich, ``Operator Algebras and Conformal Field Theory,'' Commun. Math. Phys. {\bf 155} (1993) 569  

\bibitem{ReedSimon}
M. Reed and B. Simon. \emph{Methods of Modern Mathematical Physics.} New York: Academic Press, 1980.

\bibitem{Summers}
S.~J.~Summers, ``Tomita-Takesaki modular theory.'' In J.~P.~Fran\c{c}oise,
G.~Naber, and T.~S.~Tsun, editors, \emph{Encyclopedia of Mathematical Physics.} Elsevier [arXiv:math-ph/0511034].

\bibitem{DongHarlowWall}
X.~Dong, D.~Harlow, and A.~C.~Wall, ``Reconstruction of Bulk Operators within the Entanglement Wedge in Gauge-Gravity Duality,'' \emph{ Phys. Rev. Lett.} {\bf 117} 021601 (2016) [arXiv:1601.05416].

\bibitem{Cotler}
J.~Cotler, P.~Hayden, G.~Penington, G.~Salton, B.~Swingle and M.~Walter, ``Entanglement
Wedge Reconstruction via Universal Recovery Channels,'' [arXiv:1704.05839].

\bibitem{HaydenQi}
P.~Hayden, S.~Nezami, X.~-L.~Qi, N.~Thomas, M.~Walter, Z.~Yang, ``Holographic duality from random tensor networks,'' \emph{JHEP} {\bf 11} (2016) 009 [arXiv:1601.01694].

\bibitem{RS}
H.~Reeh and S.~Schlieder, ``Bemerkungen zur Unitaaraquivalenz von Lorentzinvarienten Feldern,'' Nuovo Cimento {\bf 22} (1961) 1051

\bibitem{fixedarea}
X.~Dong, D.~Harlow, and D~Marolf, ``Flat entanglement spectra in fixed-area states of quantum gravity,'' [arXiv:1811.05382].

\bibitem{beyondtoymodels}
N.~Bao, G.~Penington, J.~Sorce, and A.~C.~Wall, ``Beyond Toy Models: Distilling Tensor Networks in Full AdS/CFT,'' [arXiv:1812.01171].

\bibitem{HeadrickHubeny2014}
M.~Headrick, V.~E.~Hubeny, A.~Lawrence, M.~Rangamani, ``Causality \& holographic entanglement entropy,'' {\em JHEP} {\bf 12} (2014), 162 [arXiv:1408.6300].

\bibitem{Wall2012}
A.~C.~Wall, ``Maximin Surfaces, and the Strong Subadditivity of the Covariant Holographic Entanglement Entropy,'' \emph{Class. Quantum Grav.} {\bf 31}, 225007 (2014) [arXiv:1211.3494].

\bibitem{Czech2012}
B.~Czech, J.~L.~Karczmarek, F.~Nogueira, M.~V.~Raamsdonk, ``The Gravity Dual of a Density Matrix,'' \emph{Class. Quant. Grav.} {\bf 29},
155009 (2012) [arXiv:1204.1330].

\bibitem{Borchers}
H.~J.~Borchers, ``On revolutionizing quantum field theory with Tomita's modular theory,'' \emph{Journal of Mathematical Physics} {\bf 41}, 3604 (2000)


\end{thebibliography}
\end{document}